\newcommand\reallywidehat[1]{%
\savestack{\tmpbox}{\stretchto{%
  \scaleto{%
    \scalerel*[\widthof{\ensuremath{#1}}]{\kern.1pt\mathchar"0362\kern.1pt}%
    {\rule{0ex}{\textheight}}%
  }{\textheight}% 
}{2.4ex}}%
\stackon[-6.9pt]{#1}{\tmpbox}%
}
\numberwithin{equation}{section}
\renewcommand{\ge}{\geqslant}
\renewcommand{\le}{\leqslant}
\let\op=\llbracket
\let\cl=\rrbracket
\let\cal=\mathcal
\def\Cl#1{\ensuremath{\cal{#1}}}
\newlength{\dhatheight}
\newcommand\Stsynt{\mathrm{St}}
\newcommand\synt{\mathrm{Synt}}
\newcommand\vst[1]{\mathsf{#1}}
\newcommand\osc{\mathbf{OSC}_{\Omega}}
\newcommand\Rec{\operatorname{Rec}}
\theoremstyle{plain}
\newtheorem{Thm}{Theorem}[section]
\newtheorem{Prop}[Thm]{Proposition}
\newtheorem{Lemma}[Thm]{Lemma}
\newtheorem{Cor}[Thm]{Corollary}
\theoremstyle{definition}
\newtheorem{Remark}[Thm]{Remark}
\begin{document}

\title{Eilenberg correspondence for Stone recognition}
\thanks{The first author acknowledges partial support by CMUP (Centro de
Matem\'atica da Universidade do Porto), member of LASI (Intelligent
Systems Associate Laboratory), which is financed by Portuguese funds
through FCT (Funda\c c\~ao para a Ci\^encia e a Tecnologia, I. P.)
under the projects UIDB/00144/2020 and UIDP/00144/2020. 
}

\author[J. Almeida]{Jorge Almeida}%
\address{CMUP, Dep.\ Matem\'atica, Faculdade de Ci\^encias,
  Universidade do Porto, Rua do Campo Alegre 687, 4169-007 Porto,
  Portugal}
\email{jalmeida@fc.up.pt}

\author[O. Kl\'ima]{Ond\v rej Kl\'ima}%
\address{Dept.\ of Mathematics and Statistics, Masaryk University,
  Kotl\'a\v rsk\'a 2, 611 37 Brno, Czech Republic}%
\email{klima@math.muni.cz}

\keywords{Stone space, topological algebra, Stone duality, Priestley
  duality}

\makeatletter
\@namedef{subjclassname@2020}{%
  \textup{2020} Mathematics Subject Classification}
\makeatother
\subjclass[2020]{Primary 46H05; Secondary 06E15, 08A62}

\begin{abstract}
  We develop and explore the idea of recognition of languages (in the
  general sense of subsets of topological algebras) as preimages of
  clopen sets under continuous homomorphisms into Stone topological
  algebras. We obtain an Eilenberg correspondence between varieties of
  languages and varieties of ordered Stone topological algebras and a
  Birkhoff/Reiterman-type theorem showing that the latter may me
  defined by certain pseudo-inequalities. In the case of classical
  formal languages, of words over a finite alphabet, we also show how
  this extended framework goes beyond the class of regular languages
  by working with Stone completions of minimal automata, viewed as
  unary algebras. This leads to a general method for showing that a
  language does not belong to a variety of languages, expressed in
  terms of sequences of pairs of words, which is illustrated when the
  class consists of all finite intersections of context-free
  languages.
\end{abstract}

\maketitle

%\tableofcontents

\section{Introduction}
\label{sec:intro}

With the advent of electronic computers and their ever widening range
of applications, the theory of formal languages proved to be an
essential tool, not only in terms of the mathematical foundations of
computer science but also for the construction of efficient algorithms
and understanding the limitations of automatic computation. Various
means of describing languages have been explored, from general Turing
machines and many special cases such as finite automata, push-down
automata, linear bounded automata, to grammars and logic
specification. There is a very large body of literature in this area
with many important results exploring both the mathematical theory of
the models and their applications. The unfamiliar but interested
reader will get a feeling for this by having a look at
\cite{Rozenberg&Salomaa:1997;handb1,Pin:2021;HAT-I,Pin:2021;HAT-II}.

One of the aspects of this theory that has been particularly
successful explores the connections between regular languages and
finite semigroups via their minimal automata and syntactic semigroups.
A clear mathematical framework for the range of applications of this
method was put forth by Eilenberg \cite{Eilenberg:1976} in what came
to be known as \emph{Eilenberg's correspondence}, between
\emph{varieties of (regular) languages} and \emph{pseudovarieties of
  (finite) semigroups}. This approach was complemented by suitable
descriptions for such pseudovarieties given first in terms of
sequences of identities \cite{Eilenberg&Schutzenberger:1976} and later
in terms of pseudoidentities \cite{Reiterman:1982}. Always motivated
by problems in computer science, various extensions of this theory
have since been obtained to deal with generalizations of the notion of
variety of languages (which is defined by certain closure properties)
retaining an Eilenberg-type correspondence at the expense of modifying
the algebraic structure considered
\cite{Pin:1995d,Straubing:2002a}.

Only recently, there have been serious attempts at going beyond the
class of regular languages using the Eilenberg approach. One idea in
this direction is that languages over a fixed finite alphabet $A$ in a
variety form a Boolean algebra whose dual Stone space is a natural
profinite semigroup \cite{Almeida:1994a,Gehrke:2016a}. This suggests
taking the Boolean algebras of languages over a fixed finite alphabet
and exploring suitable algebraic structure in their dual space; two
alternative approaches to link this algebraic structure with
properties of the Boolean algebra of languages have been explored
\cite{Gehrke:2016a,Almeida&Klima:2024a}. We know however that this
method cannot take us beyond regular languages if we only consider
semigroups as all topological semigroups on a Stone space are
profinite \cite{Numakura:1957} and, therefore, can only be used to
recognize regular languages. Further proposals consist in relaxing the
definition of topological semigroup, such as requiring multiplication
to be continuous only in one factor
\cite{Steinberg:2013,Gehrke&Krebs:2017}.

Our approach consists in first getting a simple characterization of
the Boolean algebras of subsets of a topological algebra whose dual
spaces carry a natural topological algebra structure of the same kind
and which we view as the natural rich mathematical structure into
which the Boolean algebra is turned; this can be expressed in terms of
a transparent equivalence of categories. Such preliminary work has
been done in~\cite{Almeida&Klima:2024a}. In the present paper, after
suitable preparation done in Sections~\ref{sec:admissible}
and~\ref{sec:ordering}, we first show how Boolean algebras can be
relaxed to lattices with their dual (ordered) Priestley spaces
(Section~\ref{sec:syntactic}) and establish a general Eilenberg-type
correspondence between certain varieties of languages and certain
varieties of ordered Stone topological algebras
which is derived from a simple Galois connection
(Section~\ref{sec:Eilenberg}). We then show that our varieties of
ordered Stone topological algebras are defined by
\emph{pseudo-inequalities} (Section~\ref{sec:Reiterman}).

Unlike the case of regular languages, in which the Eilenberg/Reiterman
approach has been used mostly to obtain decidability results, we
envisage this generalization mostly to separate classes of languages
or simply to show that a particular language of interest cannot be
dealt with some type of computational means. The most famous
separation problem is the open question whether $\mathbf{P} =
\mathbf{NP}$ \cite{Fortnow:2019} but there are many other relevant
questions comparing various classes of languages defined by the
complexity of their membership problem.

In Section~\ref{sec:min-automata}, we show how our general theory
specializes to the case of unary algebras, where the operations
correspond to the letters of a fixed alphabet $A$. This leads to the
Stone/Priestley completion of the minimum automaton of an arbitrary
language over $A$, which is precisely the minimum compact automaton
considered by Steinberg \cite{Steinberg:2013}. This automaton has a
natural associated \emph{transition monoid} in the left
semi-topological Stone-\v Cech compactification $\beta(A^*)$ of the
discrete free monoid $A^*$, although we have not explored this aspect
in the present paper. The fact that our general varieties of languages
can be defined by pseudo-inequalities translates into a simple
separation criterion in terms of sequences of words which is akin to
the Eilenberg-Sch\"utzenberger ultimate characterization of
pseudovarieties by sequences of identities
\cite{Eilenberg&Schutzenberger:1976}.

Finally, in Section~\ref{sec:Pumping-CF}, we give examples of
application of how the general separation method developed in the
previous section can be used. The examples show that certain languages
are not intersections of finitely many context-free languages. This
class of languages has received some attention in the literature
\cite{Liu&Weiner:1973,Cislaru:1974,Paun:1979,Gorun:1980,Latta:1993}.
As we do, both \cite{Liu&Weiner:1973,Gorun:1980} use Parikh vectors to
infer properties of languages from this class.

\section{M-closed admissible Boolean algebras}
\label{sec:admissible}

In this section, we present the background and complement the results
from~\cite{Almeida&Klima:2024a} which are needed for our present
purposes.

Let $\mathbb{N}$ be set of all non-negative integers. By a
\emph{signature}, we mean an $\mathbb{N}$-graded set
$\Omega=\biguplus_{n\in\mathbb{N}}\Omega_n$. An
\emph{$\Omega$-algebra} is 
a nonempty set $S$ endowed with an
\emph{evaluation mapping} $E_n:\Omega_n\times S^n\to S$; the
alternative notation $o_S(s_1,\ldots,s_n)=E_n(o,s_1,\ldots,s_n)$ is
adopted in this paper. The notions of homomorphism, subalgebra, direct
product are standard, see \cite{Burris&Sankappanavar:1981}.

We follow \cite{Schneider&Zumbragel:2017} for the notion of
\emph{topological algebra}. A \emph{topological signature} is a
signature $\Omega$ where  
each $\Omega_n$ has a topological
structure. A \emph{topological $\Omega$-algebra} is an
$\Omega$-algebra $S$ endowed with a topology such that each evaluation
mapping 
$E_n$ is continuous. If, moreover, $S$ is a Stone space, that is, a
compact zero-dimensional space, then $S$ is said to be a \emph{Stone
  topological algebra}. 

For the remainder of this section, $\Omega$ is an arbitrary topological
signature and $T$ is an arbitrary topological $\Omega$-algebra.

By a \emph{Stone completion} of $T$ we mean a continuous homomorphism
$\varphi:T\to S$ into a Stone topological algebra $S$ whose image is
dense in~$S$. We gave in \cite[Theorem~6.11]{Almeida&Klima:2024a} a
characterization of the Boolean subalgebras \Cl B of the Boolean
algebra~$\Cl P_{co}(T)$ of all clopen subsets of~$T$ for which the
Stone dual space $\Cl B^\star$ has a structure of topological algebra
such that the natural mapping $\iota_{\Cl B}:T\to\Cl B^\star$ (sending
each $t\in T$ to the ultrafilter consisting of all $L\in\Cl B$ such
that $t\in L$) is a Stone completion of~$T$. Such Boolean algebras are
called \emph{$\Omega$-finite (in T)}. Recall from Stone duality theory
that a basis of the topology of~$\Cl B^\star$ is given by the clopen
sets
\begin{displaymath}
  \Cl U^{\Cl B}(L)
  =\Cl U_L^{\Cl B}
  =\{u\in\Cl B^\star:L\in u\}
\end{displaymath}
with $L\in\Cl B$, which are such that $\iota_{\Cl B}^{-1}(\Cl U_L^{\Cl
  B})=L$ and $\overline{\iota_{\Cl B}(L)}=\Cl U_L^{\Cl B}$.

We also showed in \cite[Lemma~6.4]{Almeida&Klima:2024a} that there is a
maximum $\Omega$-finite Boolean subalgebra of~$\Cl P_{co}(T)$, which
is denoted $\Cl B_{\mathrm{max}}^T$. The members of $\Cl
B_{\mathrm{max}}^T$ are said to be \emph{admissible subsets} of~$T$;
sets of admissible subsets of~$T$ may also be said to be
\emph{admissible (in~$T$)}. By
\cite[Proposition~6.3]{Almeida&Klima:2024a}, a Boolean algebra $\Cl B$
of admissible subsets of~$T$ is $\Omega$-finite if and only if it
satisfies the following property:
\begin{equation}
  \label{eq:w-Om-fin}
  \forall n\ge0\
  \forall o\in\Omega_n\
  \left(L\in\Cl B
  \implies
  (o_T)^{-1}(L)\in\Cl B^{(n)}\right)
\end{equation}
where the tensor product $\Cl B^{(n)}$ of $n$ copies of~$\Cl B$ is
viewed as the Boolean subalgebra of~$\Cl P_{co}(T^n)$ consisting of
the finite unions of \emph{boxes} of the form $L_1\times\cdots\times
L_n$ with the $L_i$ in~$\Cl B$.

Recall that, for a topological algebra $T$, $M(T)$ denotes the monoid
of \emph{linear transformations} of~$T$, that is, self-mappings of $T$
of the form
\begin{displaymath}
  f(t)=w_T(t,t_2,\ldots,t_n) \quad(t\in T)
\end{displaymath}
where $t_2,\ldots,t_n$ are fixed elements of~$T$ and
$w(x_1,\ldots,x_n)$ is an $\Omega$-term in the variables
$x_1,\ldots,x_n$ in which $x_1$ occurs only once. We say that a subset
$\Cl C$ of~$\Cl P(T)$ is \emph{M-closed} if, for every $L\in\Cl C$ and
$f\in M(T)$, we also have $f^{-1}(L)\in\Cl C$.

\begin{Prop}
  \label{p:M-closed}
  Let \Cl B be a Boolean subalgebra of~$\Cl P_{co}(T)$. If \Cl B
  satisfies Condition~\eqref{eq:w-Om-fin} then $\Cl B$ is~M-closed.
  Conversely, if \Cl B is M-closed and admissible, then \Cl B
  satisfies Condition~\eqref{eq:w-Om-fin}.
\end{Prop}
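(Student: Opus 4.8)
The plan is to handle the two implications separately, and in both directions the key is to relate the one-variable linear transformations in $M(T)$ to the term operations $o_T$ appearing in Condition~\eqref{eq:w-Om-fin} by tracking how a preimage sits inside a tensor power $\Cl B^{(n)}$. For the first implication, suppose $\Cl B$ satisfies~\eqref{eq:w-Om-fin}, fix $L\in\Cl B$ and $f\in M(T)$, say $f(t)=w_T(t,t_2,\dots,t_n)$ where $w(x_1,\dots,x_n)$ is an $\Omega$-term in which $x_1$ occurs once. First I would argue by structural induction on the term $w$ that for every such $w$ the set $w_T^{-1}(L)$, as a subset of $T^n$, lies in $\Cl B^{(n)}$ (this is essentially the content that~\eqref{eq:w-Om-fin} is closed under composition of terms: the basic step is the displayed condition itself, and the inductive step substitutes the boxes produced for a subterm into the outer operation, using that a finite union of boxes pulled back along a coordinate projection or plugged into another $o_T$ stays a finite union of boxes by a second application of~\eqref{eq:w-Om-fin} together with the distributivity of preimage over unions and products). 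Once $w_T^{-1}(L)=\bigcup_j (L_1^j\times\cdots\times L_n^j)$ with all $L_i^j\in\Cl B$, I substitute the fixed values $t_2,\dots,t_n$ into coordinates $2,\dots,n$: since $x_1$ occurs once, $f^{-1}(L)=\{t: (t,t_2,\dots,t_n)\in w_T^{-1}(L)\}=\bigcup_{j:\,t_i\in L_i^j\ (i\ge2)} L_1^j$, a finite union of members of $\Cl B$, hence in $\Cl B$. This gives $f^{-1}(L)\in\Cl B$, so $\Cl B$ is M-closed.

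For the converse, assume $\Cl B$ is M-closed and admissible; I must derive~\eqref{eq:w-Om-fin}. Fix $n\ge 0$, $o\in\Omega_n$, and $L\in\Cl B$; the goal is $(o_T)^{-1}(L)\in\Cl B^{(n)}$. Because $\Cl B$ is admissible, it is contained in $\Cl B_{\max}^T$, which is $\Omega$-finite, so by~\eqref{eq:w-Om-fin} applied to $\Cl B_{\max}^T$ we at least know $(o_T)^{-1}(L)\in (\Cl B_{\max}^T)^{(n)}$, i.e.\ it is a finite union of boxes $\prod_i K_i^j$ with $K_i^j$ admissible; the work is to replace those admissible $K_i^j$ by members of $\Cl B$. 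The idea is to slice: for each fixed tuple $(t_2,\dots,t_n)$, the map $t\mapsto o_T(t,t_2,\dots,t_n)$ is a linear transformation (the term $o(x_1,\dots,x_n)$ has $x_1$ once), so M-closedness gives that the corresponding slice $\{t: o_T(t,t_2,\dots,t_n)\in L\}$ lies in $\Cl B$, and likewise for each of the $n$ coordinates by symmetry of the argument. I would then combine these coordinate-slices: the finitely many boxes come from the finite Boolean algebra generated inside $\Cl B_{\max}^T$ by the slices, and using that each slice along each coordinate is in $\Cl B$ one shows the whole set $(o_T)^{-1}(L)$ is expressible as a finite union of boxes whose sides are these slices, hence in $\Cl B^{(n)}$; here one uses that $\Cl B^{(n)}$, being the tensor product of finitely many copies of $\Cl B$, is closed under finite unions and intersections of boxes with sides in $\Cl B$.

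The main obstacle I expect is the converse direction, specifically passing from "every one-dimensional slice of $(o_T)^{-1}(L)$ (in each coordinate direction) belongs to $\Cl B$" to "$(o_T)^{-1}(L)\in\Cl B^{(n)}$", which is a statement about clopen subsets of $T^n$ being rectangular over $\Cl B$ and is not automatic for arbitrary clopen sets; this is exactly where admissibility must be used in an essential way, to guarantee finiteness (the set is already a finite union of admissible boxes, so only finitely many distinct slices occur, and a clopen set that is both a finite union of admissible boxes and has all its coordinate slices in $\Cl B$ can be reassembled as a finite union of $\Cl B$-boxes). The forward direction should be routine modulo the bookkeeping of the structural induction on terms, the only subtlety being to keep careful track of the unique occurrence of $x_1$ so that after plugging in the constants $t_2,\dots,t_n$ one lands in $\Cl B$ rather than merely in $\Cl B^{(n)}$.
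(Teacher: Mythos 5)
Your proposal is correct, and it rests on the same two pillars as the paper's proof: the finite box decomposition of $(o_T)^{-1}(L)$ over $\Cl B_{\mathrm{max}}^T$ supplied by admissibility, and the height-one slices $t\mapsto o_T(t_1,\ldots,t,\ldots,t_n)$, which are members of $M(T)$ whose preimages of $L$ land in $\Cl B$ by M-closedness. The execution differs in both directions, more substantially in the converse. For the forward implication the paper simply reduces to linear transformations of height one and reads $f^{-1}(L)$ off from the box decomposition of $(o_T)^{-1}(L)$ by plugging in the constants, whereas you prove the stronger fact that $w_T^{-1}(L)\in\Cl B^{(n)}$ for every term $w$ by structural induction and then substitute; this is more than is needed (once the constants are plugged in, the single occurrence of $x_1$ plays no role), but it is a valid, slightly heavier, variant. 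For the converse the paper invokes Lemma~4.1 of \cite{Almeida&Klima:2024a} to normalize the admissible decomposition into, for each coordinate, a finite partition whose distinct classes are separated by some slice, so that each class is a Boolean combination of sets $f^{-1}(L)$ with $f\in M(T)$ and hence lies in $\Cl B$; you instead note that only finitely many distinct slices occur along each coordinate (each is a union of a subfamily of the finitely many admissible sides), that each slice lies in $\Cl B$, and that $(o_T)^{-1}(L)$ can be rewritten as a finite union of boxes whose sides are atoms of the finite Boolean subalgebra generated by these slices. The reassembly step you leave as an assertion does hold and is not a gap: if two $n$-tuples have, coordinatewise, entries lying in the same atoms, then changing one coordinate at a time and using that membership in the relevant slice is constant on atoms shows the tuples belong or fail to belong to $(o_T)^{-1}(L)$ together, so every atom-box meeting the set is contained in it. Thus your converse is a self-contained replacement for the citation of Lemma~4.1, at the price of carrying out this exchange argument explicitly, while the paper's version is shorter because it outsources exactly that combinatorial normalization.
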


\begin{proof}
  Suppose first that $\Cl B$ satisfies Condition~\eqref{eq:w-Om-fin}.
  To show that \Cl B is M-closed, it suffices to consider linear terms
  of height 1, that is, we may assume that
  $f(x)=o_T(t_1,\ldots,t_{i-1},x,t_{i+1},\ldots,t_n)$ for some
  elements $t_\ell\in T$ and operation symbol $o\in\Omega_n$ and show
  that, for every $L\in\Cl B$, we have $f^{-1}(L)\in\Cl B$. By
  assumption, since $L\in\Cl B$, there is a finite decomposition
  \begin{equation}
    \label{eq:M-closed-1}
    (o_T)^{-1}(L)=\bigcup_{j=1}^r L_{j,1}\times\cdots\times L_{j,n}
  \end{equation}
  where $L_{j,\ell}\in\Cl B$ ($j=1,\ldots,r$; $\ell=1,\ldots,n$).
  Consider the set
  \begin{displaymath}
    \tilde L=\bigcup\left\{L_{j,i}:\ t_\ell\in L_{j,\ell}\ 
      (\ell\in\{1,\ldots,n\}\setminus \{i\})\right\}.
  \end{displaymath}
  To complete the proof that $f^{-1}(L)\in\Cl B$, it remains to
  observe that $f^{-1}(L)=\tilde L$ as $\tilde L$ is a finite union of
  members of~\Cl B.

  For the converse, suppose that
  $o\in\Omega_n$ and
  $L\in\Cl B$. Since we are assuming
  that $L$ is admissible, there is a finite decomposition
  \eqref{eq:M-closed-1} with all $L_{j,\ell}\in\Cl
  B_{\mathrm{max}}^T$. By \cite[Lemma~4.1]{Almeida&Klima:2024a}, we may
  assume that, for each $\ell$ the factors $L_{j,\ell}$ are members of
  a finite partition of $T$ into members of~$\Cl B_{\mathrm{max}}^T$
  such that, for two distinct such classes $K, K'$, there exist
  $t_i\in T$ ($i\in\{1,\ldots,n\}\setminus\{\ell\}$) such that exactly
  one of the sets
  \begin{align*}
    &o_T(t_1,\ldots,t_{\ell-1},K,t_{\ell+1},\ldots,t_n)\\
    &o_T(t_1,\ldots,t_{\ell-1},K',t_{\ell+1},\ldots,t_n)
  \end{align*}
  is contained in~$L$. In this way, considering all such pairs of
  classes of the partition, we obtain a finite set $F$ of linear
  transformations of~$T$ such that, for each class $K$ of the
  partition,
  \begin{displaymath}
    K=
    \bigcap_{f\in F\atop K\subseteq f^{-1}(L)}f^{-1}(L)
    \setminus 
    \bigcup_{f\in F\atop K\nsubseteq f^{-1}(L)}f^{-1}(L).
  \end{displaymath}
  In particular, we conclude that each $L_{j,\ell}$ belongs to $\Cl B$
  and so $\Cl B$ satisfies Condition~(\ref{eq:w-Om-fin}).
\end{proof}

For a set \Cl C of admissible subsets of~$T$, we denote by $\langle\Cl
C\rangle$ the smallest $\Omega$-finite Boolean subalgebra of~$\Cl
P_{co}(T)$ containing~$\Cl C$, whose existence is guaranteed by
\cite[Corollary~6.8]{Almeida&Klima:2024a}. For an admissible subset
$L$ of~$T$, we write $\langle L\rangle$ instead of
$\langle\{L\}\rangle$. We may also emphasize that the $\Omega$-finite
Boolean algebra $\langle L\rangle$ consists of admissible subsets
of~$T$ by writing~$\langle L\rangle_T$.

\begin{Cor}
  \label{c:B(L)}
  Let $L$ be an admissible subset of\/~$T$. Then $\langle L\rangle$ is
  the Boolean subalgebra of~$\Cl P_{co}(T)$ generated by all sets of the
  form $f^{-1}(L)$, where $f\in M(T)$.
\end{Cor}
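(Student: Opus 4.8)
The plan is to show that the Boolean subalgebra $\Cl A$ of $\Cl P_{co}(T)$ generated by $\{f^{-1}(L):f\in M(T)\}$ coincides with $\langle L\rangle$. One inclusion is immediate: since $L=\mathrm{id}_T^{-1}(L)$ and $\mathrm{id}_T\in M(T)$, we have $L\in\Cl A$, and $\langle L\rangle$ is by definition the smallest $\Omega$-finite Boolean algebra containing $L$; so once we know $\Cl A$ is $\Omega$-finite it will follow that $\langle L\rangle\subseteq\Cl A$. For the reverse inclusion, $\langle L\rangle$ is in particular $\Omega$-finite and contains $L$, hence by Proposition~\ref{p:M-closed} it is M-closed, so it contains $f^{-1}(L)$ for every $f\in M(T)$; being a Boolean algebra, it then contains the Boolean subalgebra they generate, i.e.\ $\Cl A\subseteq\langle L\rangle$.

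So the real content is that $\Cl A$ is $\Omega$-finite. First I would note that $\Cl A$ consists of admissible subsets: each $f^{-1}(L)$ is clopen (as $f$ is continuous and $L$ clopen) and lies in the maximum $\Omega$-finite algebra $\Cl B_{\mathrm{max}}^T$, because $\langle L\rangle\subseteq\Cl B_{\mathrm{max}}^T$ is M-closed and hence contains $f^{-1}(L)$; since $\Cl B_{\mathrm{max}}^T$ is a Boolean algebra, the whole of $\Cl A$ sits inside it. Thus $\Cl A$ is an admissible Boolean algebra, and by Proposition~\ref{p:M-closed} it will be $\Omega$-finite as soon as we check that $\Cl A$ is M-closed. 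Concretely, a generic element of $\Cl A$ is a finite Boolean combination of sets $f_1^{-1}(L),\dots,f_k^{-1}(L)$; applying $g^{-1}$ for $g\in M(T)$ commutes with finite Boolean operations, so it suffices to show $g^{-1}(f_i^{-1}(L))=(f_i\circ g)^{-1}(L)\in\Cl A$, and this holds because $M(T)$ is a monoid, so $f_i\circ g\in M(T)$ and the set is one of the generators of $\Cl A$.

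The only step that requires a little care — and which I would flag as the main (mild) obstacle — is verifying that $M(T)$ is genuinely closed under composition when $M(T)$ is defined via linear $\Omega$-terms (those in which the distinguished variable $x_1$ occurs exactly once). Given $f(t)=w_T(t,t_2,\dots,t_n)$ and $g(t)=v_T(t,s_2,\dots,s_m)$ with $x_1$ linear in both $w$ and $v$, one must argue that substituting $v$ for the unique occurrence of $x_1$ in $w$ produces a term in which the (new) distinguished variable again occurs exactly once; this is a routine syntactic induction on the height of $w$, but it is the place where the definition of $M(T)$ as a monoid is actually being used. Everything else is bookkeeping with Proposition~\ref{p:M-closed} and the definition of $\langle L\rangle$.
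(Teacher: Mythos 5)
Your proof is correct and takes essentially the same route as the paper: both identify the Boolean subalgebra generated by the sets $f^{-1}(L)$, $f\in M(T)$, as the smallest M-closed Boolean subalgebra of~$\Cl B_{\mathrm{max}}^T$ containing~$L$, and then invoke Proposition~\ref{p:M-closed} to translate between M-closedness and $\Omega$-finiteness. The paper simply compresses the details you spell out, namely the admissibility of the generators and the closure of $M(T)$ under composition.
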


\begin{proof}
  Let $\Cl B$ denote the Boolean algebra described in the statement of
  the corollary. Since $\Cl B$ is M-closed, it is precisely the
  smallest Boolean subalgebra of $\Cl B_{\mathrm{max}}^{T}$ containing
  $L$ that is M-closed. The result now follows from
  Proposition~\ref{p:M-closed}.
\end{proof}

The following proposition and its corollary relate $\Omega$-finiteness
in a topological algebra and in its maximum Stone completion.

\begin{Prop}
  \label{p:Om-fin-vs-U}
  Let $T$ be a topological algebra and $\Cl B=\Cl B_{\mathrm{max}}^T$.
  Then a Boolean subalgebra $\Cl C$ of~$\Cl B$ is $\Omega$-finite in
  $T$ if and only if $\Cl U^{\Cl B}(\Cl C)$ is $\Omega$-finite in $\Cl
  B^\star$.
\end{Prop}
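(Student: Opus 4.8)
The plan is to verify Condition~\eqref{eq:w-Om-fin} on each side; by \cite[Proposition~6.3]{Almeida&Klima:2024a} this characterizes $\Omega$-finiteness among Boolean algebras of admissible subsets, and it is the working notion I would use throughout. Write $\iota=\iota_{\Cl B}:T\to\Cl B^\star$. First I would record two preliminary facts. Since $\Cl B^\star$ is a Stone topological algebra, $\Cl P_{co}(\Cl B^\star)$ is $\Omega$-finite in $\Cl B^\star$ (the natural map $\iota_{\Cl P_{co}(\Cl B^\star)}$ is, via the canonical Stone homeomorphism, an isomorphism of topological algebras, hence a Stone completion), so \emph{every} clopen subset of $\Cl B^\star$ is admissible; consequently $\Cl U^{\Cl B}(\Cl C)$, which is a Boolean subalgebra of $\Cl P_{co}(\Cl B^\star)$ because $L\mapsto\Cl U^{\Cl B}_L$ is a Boolean isomorphism, is a Boolean algebra of admissible subsets of $\Cl B^\star$, and \cite[Proposition~6.3]{Almeida&Klima:2024a} applies to it exactly as it applies to $\Cl C\subseteq\Cl B_{\mathrm{max}}^T$. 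Secondly, since $\iota$ is a homomorphism we have $o_{\Cl B^\star}\circ\iota^n=\iota\circ o_T$ for each $o\in\Omega_n$, so using $\iota^{-1}(\Cl U^{\Cl B}_L)=L$ one obtains
\begin{equation}
  \label{eq:transport}
  (\iota^n)^{-1}\bigl((o_{\Cl B^\star})^{-1}(\Cl U^{\Cl B}_L)\bigr)=(o_T)^{-1}(L),\qquad
  (\iota^n)^{-1}\bigl(\Cl U^{\Cl B}_{K_1}\times\cdots\times\Cl U^{\Cl B}_{K_n}\bigr)=K_1\times\cdots\times K_n
\end{equation}
for all $L,K_1,\ldots,K_n\in\Cl B$.

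For the implication from $\Omega$-finiteness of $\Cl U^{\Cl B}(\Cl C)$ in $\Cl B^\star$ to $\Omega$-finiteness of $\Cl C$ in $T$, I would take $o\in\Omega_n$ and $L\in\Cl C$, apply Condition~\eqref{eq:w-Om-fin} in $\Cl B^\star$ to write $(o_{\Cl B^\star})^{-1}(\Cl U^{\Cl B}_L)=\bigcup_{j=1}^r\Cl U^{\Cl B}_{K_{j,1}}\times\cdots\times\Cl U^{\Cl B}_{K_{j,n}}$ with all $K_{j,\ell}\in\Cl C$, and then apply $(\iota^n)^{-1}$; by \eqref{eq:transport} this yields $(o_T)^{-1}(L)=\bigcup_{j=1}^r K_{j,1}\times\cdots\times K_{j,n}\in\Cl C^{(n)}$, which is Condition~\eqref{eq:w-Om-fin} for $\Cl C$.

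For the converse, assuming $\Cl C$ is $\Omega$-finite in $T$, I would fix $o\in\Omega_n$ and $L\in\Cl C$ and use Condition~\eqref{eq:w-Om-fin} in $T$ to write $(o_T)^{-1}(L)=\bigcup_{j=1}^r L_{j,1}\times\cdots\times L_{j,n}$ with $L_{j,\ell}\in\Cl C$. The two clopen subsets $(o_{\Cl B^\star})^{-1}(\Cl U^{\Cl B}_L)$ and $\bigcup_{j=1}^r\Cl U^{\Cl B}_{L_{j,1}}\times\cdots\times\Cl U^{\Cl B}_{L_{j,n}}$ of the Stone space $(\Cl B^\star)^n$ have, by \eqref{eq:transport}, the same preimage $(o_T)^{-1}(L)$ under $\iota^n$, hence the same trace on $\iota^n(T^n)$. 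Since $\iota(T)$ is dense in $\Cl B^\star$, the set $\iota^n(T^n)$ is dense in $(\Cl B^\star)^n$, and a clopen subset of a topological space equals the closure of its trace on any dense subset; therefore the two clopen sets coincide, giving $(o_{\Cl B^\star})^{-1}(\Cl U^{\Cl B}_L)\in\Cl U^{\Cl B}(\Cl C)^{(n)}$, which is Condition~\eqref{eq:w-Om-fin} for $\Cl U^{\Cl B}(\Cl C)$.

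I expect the only non-routine step to be this last lifting in the converse direction: passing a finite box decomposition living in $T^n$ up to the compactification $(\Cl B^\star)^n$. The key fact to isolate is that a clopen set in a Stone space is determined by its trace on any dense subset, here the dense image of $\iota^n$; with that in hand, the remainder is bookkeeping with the duality dictionary $L\leftrightarrow\Cl U^{\Cl B}_L$ and with preimages, and I anticipate no genuine difficulty.
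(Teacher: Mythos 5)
Your proof is correct, but it follows a genuinely different route from the paper. The paper argues at the level of Stone completions: for the direction starting from $\Omega$-finiteness of $\Cl C$ in $T$ it invokes \cite[Corollary~7.4]{Almeida&Klima:2024a} to get the canonical continuous homomorphism $\delta:\Cl B^\star\to\Cl C^\star$ with $\delta\circ\iota_{\Cl B}=\iota_{\Cl C}$, identifies $\Cl U^{\Cl B}(\Cl C)$ with $\delta^{-1}\bigl(\Cl P_{co}(\Cl C^\star)\bigr)$, and quotes \cite[Corollary~6.10]{Almeida&Klima:2024a} for its $\Omega$-finiteness; the other direction is a one-line appeal to \cite[Proposition~6.9]{Almeida&Klima:2024a} on preimages under $\iota_{\Cl B}$. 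You instead work entirely with the combinatorial characterization \eqref{eq:w-Om-fin} from \cite[Proposition~6.3]{Almeida&Klima:2024a}, transporting box decompositions along $\iota_{\Cl B}^n$ in both directions: pulling back is pure bookkeeping with $\iota_{\Cl B}^{-1}(\Cl U^{\Cl B}_L)=L$ and $o_{\Cl B^\star}\circ\iota_{\Cl B}^n=\iota_{\Cl B}\circ o_T$, while pushing forward uses that $\iota_{\Cl B}(T)^n$ is dense in $(\Cl B^\star)^n$ and that a clopen set equals the closure of its trace on any dense subset — both correct, and your preliminary observation that $\Cl P_{co}(\Cl B^\star)$ is $\Omega$-finite in $\Cl B^\star$ (so that Proposition~6.3 is applicable on the dual side) is a legitimate and necessary step, also obtainable from \cite[Corollary~6.10]{Almeida&Klima:2024a} applied to the identity of $\Cl B^\star$. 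What each approach buys: the paper's proof is shorter and modular, delegating all topology to the functorial results on Stone completions already established in the companion paper; yours is more elementary and self-contained, needing only Proposition~6.3, basic Stone duality, and the dense-trace argument, and it makes explicit where the density of the image is actually used — in effect you reprove, in the special case needed, the content of the cited corollaries.
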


\begin{proof}
  Suppose first that $\Cl C$ is $\Omega$-finite in $T$. By
  \cite[Corollary~7.4]{Almeida&Klima:2024a}, there is a continuous
  homomorphism $\delta:\Cl B^\star\to \Cl C^\star$ such that
  $\delta\circ\iota_{\Cl B}=\iota_{\Cl C}$. Since $\iota_{\Cl B}^{-1}$
  is the inverse of~$\Cl U^{\Cl B}$, we obtain the following
  commutative diagram, where $\Cl B_\delta=\delta^{-1}\bigl(\Cl
  P_{co}(\Cl C^\star)\bigr)$, which is $\Omega$-finite by
  \cite[Corollary~6.10]{Almeida&Klima:2024a}:
  \begin{displaymath}
    \xymatrix@C15mm{
      \Cl C
      \ar@{^{(((}->}[r]
      &
      \Cl B
      \ar@/^2mm/[r]^(.45){\Cl U^{\Cl B}}
      &
      {\Cl P_{co}(\Cl B^\star)}
      \ar@/^2mm/[l]^(.55){\iota_{\Cl B}^{-1}}
      \\
      \Cl P_{co}(\Cl C^\star)
      \ar[u]^{\iota_{\Cl C}^{-1}}
      \ar[rr]^{\delta^{-1}}
      &&
      \Cl B_\delta
      \ar@{^{(((}->}[u]
    }
  \end{displaymath}
  Hence $\Cl U^{\Cl B}(\Cl C)=\Cl B_\delta$ is $\Omega$-finite in~$\Cl
  B^\star$.

  Conversely, if $\Cl U^{\Cl B}(\Cl C)$ is $\Omega$-finite in~$\Cl
  B^\star$, then $\Cl C=\iota_{\Cl B}^{-1}\bigl(\Cl U^{\Cl B}(\Cl
  C)\bigr)$ is $\Omega$-finite in $T$ by
  \cite[Proposition~6.9]{Almeida&Klima:2024a}.
 \end{proof}

\begin{Cor}
  \label{c:Om-fin-gen-vs-U}
  Let $L$ be an admissible subset of a topological algebra $T$ and let
  $\Cl B=\Cl B_{\mathrm{max}}^T$. Then the equality $\langle \Cl
  U^{\Cl B}_L\rangle_{\Cl B^\star}=\Cl U^{\Cl B}(\langle L\rangle_T)$ holds.
\end{Cor}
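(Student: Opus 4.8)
The plan is to obtain the two inclusions separately, each time applying one direction of Proposition~\ref{p:Om-fin-vs-U} and then invoking the minimality built into the notation $\langle-\rangle$. Write $\Cl B=\Cl B_{\mathrm{max}}^T$, so that $\Cl B^\star$ carries its topological $\Omega$-algebra structure and $\iota_{\Cl B}\colon T\to\Cl B^\star$ is a Stone completion with $\iota_{\Cl B}^{-1}(\Cl U_L^{\Cl B})=L$ for every $L\in\Cl B$. The one point needing care --- really the only obstacle --- is that, by Stone duality for $\Cl B$, the map $\Cl U^{\Cl B}$ restricts to a Boolean isomorphism $\Cl B\to\Cl P_{co}(\Cl B^\star)$ whose inverse is $\iota_{\Cl B}^{-1}$; granting this, both $\langle\Cl U_L^{\Cl B}\rangle_{\Cl B^\star}$ and $\Cl U^{\Cl B}(\langle L\rangle_T)$ are Boolean subalgebras of $\Cl P_{co}(\Cl B^\star)$, and $\iota_{\Cl B}^{-1}$ sends Boolean subalgebras of $\Cl P_{co}(\Cl B^\star)$ to Boolean subalgebras of $\Cl B$.

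For the inclusion $\langle\Cl U_L^{\Cl B}\rangle_{\Cl B^\star}\subseteq\Cl U^{\Cl B}(\langle L\rangle_T)$: the Boolean algebra $\langle L\rangle_T$ is $\Omega$-finite in $T$ and consists of admissible subsets, hence is a Boolean subalgebra of $\Cl B$, so Proposition~\ref{p:Om-fin-vs-U} gives that $\Cl U^{\Cl B}(\langle L\rangle_T)$ is $\Omega$-finite in $\Cl B^\star$; since $L\in\langle L\rangle_T$ it contains $\Cl U_L^{\Cl B}$, and minimality of $\langle\Cl U_L^{\Cl B}\rangle_{\Cl B^\star}$ does the rest (this also shows the right-hand notation is meaningful, i.e.\ that $\Cl U_L^{\Cl B}$ is admissible in $\Cl B^\star$).

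For the reverse inclusion, put $\Cl C=\iota_{\Cl B}^{-1}\bigl(\langle\Cl U_L^{\Cl B}\rangle_{\Cl B^\star}\bigr)$, a Boolean subalgebra of $\Cl B$. Since $\Cl U^{\Cl B}$ and $\iota_{\Cl B}^{-1}$ are mutually inverse on $\Cl P_{co}(\Cl B^\star)$, we have $\Cl U^{\Cl B}(\Cl C)=\langle\Cl U_L^{\Cl B}\rangle_{\Cl B^\star}$, which is $\Omega$-finite in $\Cl B^\star$; hence $\Cl C$ is $\Omega$-finite in $T$ by the converse direction of Proposition~\ref{p:Om-fin-vs-U}. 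As $\iota_{\Cl B}^{-1}(\Cl U_L^{\Cl B})=L$ and $\Cl U_L^{\Cl B}\in\langle\Cl U_L^{\Cl B}\rangle_{\Cl B^\star}$, we get $L\in\Cl C$, so minimality of $\langle L\rangle_T$ yields $\langle L\rangle_T\subseteq\Cl C$; applying $\Cl U^{\Cl B}$ gives $\Cl U^{\Cl B}(\langle L\rangle_T)\subseteq\Cl U^{\Cl B}(\Cl C)=\langle\Cl U_L^{\Cl B}\rangle_{\Cl B^\star}$, and the two inclusions combine to the desired equality.
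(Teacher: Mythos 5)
Your proof is correct and follows essentially the same route as the paper's: both inclusions are obtained by applying the two directions of Proposition~\ref{p:Om-fin-vs-U} and then invoking minimality of the generated $\Omega$-finite Boolean algebras, using that $\Cl U^{\Cl B}$ and $\iota_{\Cl B}^{-1}$ are mutually inverse between $\Cl B$ and $\Cl P_{co}(\Cl B^\star)$. The parenthetical check that $\Cl U^{\Cl B}_L$ is admissible in $\Cl B^\star$ is a nice touch the paper leaves implicit.
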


\begin{proof}
  Let $K=\Cl U^{\Cl B}_L$. By Proposition~\ref{p:Om-fin-vs-U}, $(\Cl
  U^{\Cl B})^{-1}(\langle K\rangle_{\Cl B^\star})$ is an
  $\Omega$-finite Boolean algebra in~$T$; since $L=(\Cl U^{\Cl
    B})^{-1}(K)$ is one of its members, it must contain $\langle
  L\rangle_T$. Similarly, $\Cl U^{\Cl B}(\langle L\rangle_T)$ is an
  $\Omega$-finite Boolean 
  subalgebra in~$\Cl B^\star$ of which $K=\Cl U^{\Cl
    B}(L)$ is a member, whence $\Cl U^{\Cl B}(\langle L\rangle_T)$
  contains $\langle K\rangle_{\Cl B^\star}$. Combining the two
  inclusions, we obtain the desired equality.
\end{proof}

Let $\varphi:T\to S$ be a continuous homomorphism of topological
algebras with dense image. Given a linear transformation $f$ of $T$,
we may choose a term $w\in T_\Omega(\{x_1,\ldots,x_n\})$ linear in
$x_1$ and $t_2,\ldots,t_n\in T$ such that
\begin{equation}
  \label{eq:lin-transf-of_T}
  f(t)=w_T(t,t_2,\ldots,t_n)
  \quad(t\in T).
\end{equation}
Then, there is an associated linear transformation of~$S$ defined by
\begin{equation}
  \label{eq:induced-lin-transf}
  g(s)=w_S(s,\varphi(t_2),\ldots,\varphi(t_n))
  \quad(s\in S).
\end{equation}
Since $\varphi$ is a homomorphism, we get $\varphi\circ
f=g\circ\varphi$. Hence, the restriction of~$g$ to the image
of~$\varphi$ does not depend on the choice of $(w,t_2,\ldots,t_n)$. As
$g$ is continuous and the image of~$\varphi$ is dense in~$S$, we
conclude that $g$ itself is independent of that choice. We call $g$
the \emph{linear transformation of $S$ induced by~$f$} and we denote
by $M_\varphi(S)$ the set of all such $g$ as $f$ runs over $M(T)$.

\begin{Lemma}
  \label{l:induced-linear-in-Stsynt}
  Let $\Cl B$ be an $\Omega$-finite Boolean subalgebra of~$\Cl
  P_{co}(T)$ and $L$ be a member of~$\Cl B$. Given $u\in\Cl B^\star$
  and $f\in M(T)$ with induced $g\in M(\Cl B^\star)$, $f^{-1}(L)\in u$
  holds if and only if so does $u\in g^{-1}(\Cl U_L)$.
\end{Lemma}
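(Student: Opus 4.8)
The plan is a short diagram chase through Stone duality. The two facts to lean on are: (i) since $\Cl B$ is $\Omega$-finite, the natural map $\iota_{\Cl B}:T\to\Cl B^\star$ is a Stone completion, so the induced linear transformation $g$ of $\Cl B^\star$ is well defined and satisfies the intertwining relation $\iota_{\Cl B}\circ f=g\circ\iota_{\Cl B}$ by construction; and (ii) the Stone-duality identities $\iota_{\Cl B}^{-1}(\Cl U_K)=K$ for $K\in\Cl B$, together with the fact that $K\mapsto\Cl U_K$ is a bijection of $\Cl B$ onto the set of clopen subsets of $\Cl B^\star$ that inverts $V\mapsto\iota_{\Cl B}^{-1}(V)$ on clopen sets.

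First I would record that $f^{-1}(L)\in\Cl B$, so that the symbol $\Cl U_{f^{-1}(L)}$ and the membership statement $f^{-1}(L)\in u$ are meaningful. Indeed, $\Cl B$ is $\Omega$-finite, hence contained in $\Cl B_{\mathrm{max}}^T$ and consisting of admissible subsets, hence (by \cite[Proposition~6.3]{Almeida&Klima:2024a}) satisfying Condition~\eqref{eq:w-Om-fin}, so by Proposition~\ref{p:M-closed} it is M-closed and therefore closed under $f^{-1}$ for every $f\in M(T)$. (Equivalently, $f^{-1}(L)\in\langle L\rangle\subseteq\Cl B$ by Corollary~\ref{c:B(L)}, since $L$ is admissible and $\Cl B$ is an $\Omega$-finite Boolean algebra containing $L$.)

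Then comes the computation. Using $\iota_{\Cl B}\circ f=g\circ\iota_{\Cl B}$ and $\iota_{\Cl B}^{-1}(\Cl U_L)=L$,
\[
  \iota_{\Cl B}^{-1}\bigl(g^{-1}(\Cl U_L)\bigr)
  =(g\circ\iota_{\Cl B})^{-1}(\Cl U_L)
  =(\iota_{\Cl B}\circ f)^{-1}(\Cl U_L)
  =f^{-1}\bigl(\iota_{\Cl B}^{-1}(\Cl U_L)\bigr)
  =f^{-1}(L).
\]
Now $g$ is continuous and $\Cl U_L$ is clopen, so $g^{-1}(\Cl U_L)$ is a clopen subset of $\Cl B^\star$; since every clopen subset of $\Cl B^\star$ is of the form $\Cl U_K$ with $K\in\Cl B$ and the correspondence $K\mapsto\Cl U_K$ inverts $V\mapsto\iota_{\Cl B}^{-1}(V)$, the displayed equality forces $g^{-1}(\Cl U_L)=\Cl U_{f^{-1}(L)}$. (Alternatively, $g^{-1}(\Cl U_L)=\overline{\iota_{\Cl B}\bigl(\iota_{\Cl B}^{-1}(g^{-1}(\Cl U_L))\bigr)}=\overline{\iota_{\Cl B}(f^{-1}(L))}=\Cl U_{f^{-1}(L)}$, using density of $\iota_{\Cl B}(T)$ and clopenness.) Hence $u\in g^{-1}(\Cl U_L)$ if and only if $f^{-1}(L)\in u$, which is the asserted equivalence.

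I do not expect a genuine obstacle here; the only points requiring care are the well-definedness remark of the second paragraph and being entitled to the intertwining relation $\iota_{\Cl B}\circ f=g\circ\iota_{\Cl B}$, which is precisely the defining property of the induced linear transformation and is available because $\Omega$-finiteness makes $\iota_{\Cl B}$ a continuous homomorphism with dense image.
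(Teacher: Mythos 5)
Your proof is correct, but it proceeds differently from the paper's. The paper argues pointwise: it picks a net $(u_i)$ in $T$ with $u=\lim\iota_{\Cl B}(u_i)$, uses continuity of $g$ and openness of $\Cl U_L$ to get $f(u_i)\in L$ eventually, translates this into eventual membership of $\iota_{\Cl B}(u_i)$ in the clopen set $\Cl U_{f^{-1}(L)}$, and then notes the steps are reversible because $\Cl U_L$ is also closed. You instead prove once and for all the set-level identity $g^{-1}(\Cl U_L)=\Cl U_{f^{-1}(L)}$: the intertwining relation $\iota_{\Cl B}\circ f=g\circ\iota_{\Cl B}$ (the defining property of the induced transformation, available because $\Omega$-finiteness makes $\iota_{\Cl B}$ a Stone completion) gives $\iota_{\Cl B}^{-1}\bigl(g^{-1}(\Cl U_L)\bigr)=f^{-1}(L)$, and then clopenness of $g^{-1}(\Cl U_L)$ together with the duality facts ($\iota_{\Cl B}^{-1}(\Cl U_K)=K$, every clopen set of $\Cl B^\star$ is some $\Cl U_K$, or alternatively density of $\iota_{\Cl B}(T)$ and $\overline{\iota_{\Cl B}(K)}=\Cl U_K$) identifies it as $\Cl U_{f^{-1}(L)}$, from which the equivalence for every $u\in\Cl B^\star$ is immediate. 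Both arguments rest on the same two inputs (the intertwining relation and the basic Stone-duality identities), but your global identity eliminates the net bookkeeping and the ``reversibility'' discussion, and it makes explicit a point the paper leaves implicit, namely that $f^{-1}(L)\in\Cl B$ (via Condition~\eqref{eq:w-Om-fin} and Proposition~\ref{p:M-closed}, or via Corollary~\ref{c:B(L)}), which is needed for $f^{-1}(L)\in u$ and $\Cl U_{f^{-1}(L)}$ to make sense; the paper's net formulation, on the other hand, is closer in spirit to how the lemma is later applied alongside convergence arguments.
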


\begin{proof}
  Let $(u_i)_{i\in I}$ be a net in $T$ such that $u=\lim\iota_{\Cl
    B}(u_i)$. Let $K=\Cl U_L$.

  Suppose first that $g(u)\in K$. Since the mapping $g$ is continuous,
  we have $\lim g(\iota_{\Cl B}(u_i))=g(u)$. As $K$ is open and
  $\iota_{\Cl B}$ is a homomorphism, we may assume that $\iota_{\Cl
    B}(f(u_i))=g(\iota_{\Cl B}(u_i))\in K$ for all $i\in I$.
  Equivalently, we get $f(u_i)\in \iota_{\Cl B}^{-1}(K)=L$ for all
  $i\in I$. Now, the condition $u_i\in f^{-1}(L)$ holds if and only if
  so does $f^{-1}(L)\in\iota_{\Cl B}(u_i)$, which means that
  $\iota_{\Cl B}(u_i)\in\Cl U_{f^{-1}(L)}$. Finally, $u$ belongs to
  the clopen set $\Cl U_{f^{-1}(L)}$ if and only if there exists
  $i_0\in I$ such that, for every $i\ge i_0$, $\iota_{\Cl
    B}(u_i)\in\Cl U_{f^{-1}(L)}$. It follows that $f^{-1}(L)\in u$, as
  required. As, taking into account that $K$ is closed, all the above
  steps are reversible, the converse also holds.
\end{proof}

The following lemma shows the relevance of the induced linear
transformations.

\begin{Lemma}
  \label{l:reduction-to-induced-linear-transformations}
  Let $\varphi:T\to S$ be a Stone completion and let $K$ be a clopen
  subset of~$S$. Then, for every $h\in M(S)$ there is $g\in
  M_\varphi(S)$ such that $h^{-1}(K)=g^{-1}(K)$.
\end{Lemma}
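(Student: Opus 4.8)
The plan is to reduce the statement to a finite combinatorial matching of clopen subsets of $S$, using that $\varphi(T)$ is dense in $S$. Write $h\in M(S)$ as $h(s)=w_S(s,s_2,\ldots,s_n)$ for some $\Omega$-term $w$ linear in $x_1$ and constants $s_2,\ldots,s_n\in S$. By~\eqref{eq:induced-lin-transf}, keeping the same term $w$ and replacing the tuple $(s_2,\ldots,s_n)$ by $(\varphi(t_2),\ldots,\varphi(t_n))$ for suitable $t_2,\ldots,t_n\in T$ produces an element $g$ of~$M_\varphi(S)$; so it suffices to choose the $t_\ell$ so that $g^{-1}(K)=h^{-1}(K)$.

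First I would observe that the term operation $w_S\colon S^n\to S$ is continuous (terms induce continuous operations on a topological algebra), so $w_S^{-1}(K)$ is a clopen subset of the Stone space $S^n$. Since the clopen boxes $C_1\times\cdots\times C_n$, with each $C_i$ clopen in $S$, form a basis of $S^n$ and $w_S^{-1}(K)$ is compact, one can write $w_S^{-1}(K)=\bigcup_{j=1}^{r}C_{j,1}\times\cdots\times C_{j,n}$ with all $C_{j,\ell}$ clopen in $S$. Unwinding the definitions, for any $t_2,\ldots,t_n\in T$ the map $g(s)=w_S(s,\varphi(t_2),\ldots,\varphi(t_n))$ satisfies $g^{-1}(K)=\bigcup_{j\in J'}C_{j,1}$, where $J'=\{\,j:\varphi(t_\ell)\in C_{j,\ell}\text{ for all }\ell\in\{2,\ldots,n\}\,\}$; likewise $h^{-1}(K)=\bigcup_{j\in J}C_{j,1}$ with $J=\{\,j:s_\ell\in C_{j,\ell}\text{ for all }\ell\in\{2,\ldots,n\}\,\}$. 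Thus it is enough to arrange $J'=J$, i.e.\ to make $\varphi(t_\ell)$ lie in exactly the same sets among $C_{1,\ell},\ldots,C_{r,\ell}$ as $s_\ell$ does, for each $\ell$.

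This is where density enters. For each $\ell\in\{2,\ldots,n\}$, the set
\[
D_\ell=\bigcap_{j\,:\,s_\ell\in C_{j,\ell}}C_{j,\ell}\ \cap\ \bigcap_{j\,:\,s_\ell\notin C_{j,\ell}}(S\setminus C_{j,\ell})
\]
is clopen and contains $s_\ell$, hence is nonempty, so by density of $\varphi(T)$ there is $t_\ell\in T$ with $\varphi(t_\ell)\in D_\ell$. For this choice, $\varphi(t_\ell)\in C_{j,\ell}$ if and only if $s_\ell\in C_{j,\ell}$, for all $j$ and $\ell$, so $J'=J$ and therefore $g^{-1}(K)=h^{-1}(K)$ for $g(s)=w_S(s,\varphi(t_2),\ldots,\varphi(t_n))\in M_\varphi(S)$.

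The one point requiring care — and the step I would watch most closely — is the reduction to a finite union of clopen boxes: it uses that $K$ is clopen, so that $w_S^{-1}(K)$ is clopen and not merely closed, which is exactly what turns density into an \emph{exact} equality of preimages rather than an approximation. The rest is bookkeeping; alternatively the box decomposition could be phrased via the tensor powers $\Cl B^{(n)}$ of Section~\ref{sec:admissible}, but the elementary Stone-space fact suffices.
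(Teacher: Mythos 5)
Your proof is correct and follows essentially the same route as the paper's: decompose $w_S^{-1}(K)$ into finitely many clopen boxes using continuity of $w_S$ together with compactness and zero-dimensionality of $S$, then use density of $\varphi(T)$ to replace the parameters $s_2,\ldots,s_n$ by images $\varphi(t_\ell)$ without changing the preimage of $K$. Your coordinatewise choice of $t_\ell$ in the clopen cell $D_\ell$ (rather than disjointifying the boxes in the last $n-1$ coordinates and picking the tuple $(t_2,\ldots,t_n)$ at once, as the paper does) is a mild streamlining that also dispenses with the paper's separate treatment of the case $h^{-1}(K)=\emptyset$.
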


\begin{proof}
  Assume first that $h^{-1}(K)\ne\emptyset$. There exist $w\in
  T_\Omega(\{x_1,\ldots,x_n\})$ linear in $x_1$ and $s_2,\ldots,s_n\in
  S$ such that
  \begin{displaymath}
    h(s)=w_S(s,s_2,\ldots,s_n)
    \quad(s\in S).
  \end{displaymath}
  Since $S$ is a topological algebra, the mapping $w_S:S^n\to S$ is
  continuous. As $S$ is a Stone space, it follows that there are
  $m\ge0$, nonempty clopen subsets $K_i$ of~$S$ and $P_i$ of~$S^{n-1}$
  such that
  \begin{equation*}
    w_S^{-1}(K)=\bigcup_{i=1}^m K_i\times P_i,
  \end{equation*}
  where we may assume that the sets $P_i$ are pairwise disjoint. If
  $(s_2,\ldots,s_n)$ belongs to no $P_i$, then $h^{-1}(K)=\emptyset$,
  contrary to our initial assumption. Hence, there is a unique $i$
  such that $(s_2,\ldots,s_n)\in P_i$ and we may choose a point
  $(t_2,\ldots,t_n)$ in the set $(\varphi^{n-1})^{-1}(P_i)$. Note that
  \begin{displaymath}
    w_S(s,s_2,\ldots,s_n)\in K
    \iff
    s\in K_i
    \iff
    w_S(s,\varphi(t_2),\ldots,\varphi(t_n))\in K.
  \end{displaymath}
  Thus, $h^{-1}(K)=K_i=g^{-1}(K)$, where $g\in M_\varphi(S)$ is
  defined by $g(s)=w_S(s,\varphi(t_2),\ldots,\varphi(t_n))$.

  Finally, we consider the case where $h^{-1}(K)=\emptyset$. 
  Let $K^\complement$ denote the complement of $K$ in $S$.
  Then
  $h^{-1}(K^\complement)=S\ne\emptyset$ and so, by the first part of
  the proof, there is $g\in M_\varphi(S)$ such that
  $S=h^{-1}(K^\complement)=g^{-1}(K^\complement)$, whence
  $h^{-1}(K)=\emptyset=g^{-1}(K)$.
\end{proof}

Combining Lemmas~\ref{l:induced-linear-in-Stsynt}
and~\ref{l:reduction-to-induced-linear-transformations}, we obtain the
following result.

\begin{Prop}
  \label{p:M-closed-in-Stone-completion}
  Let $\varphi:T\to S$ be a Stone completion and $\Cl C$ be a subset
  of~$\Cl P_{co}(S)$. Then the following conditions are equivalent:
  \begin{enumerate}
  \item\label{item:M-closed-in-Stone-completion-1} the set $\Cl C$ is
    M-closed in~$S$;
  \item\label{item:M-closed-in-Stone-completion-2}
    for every $K\in\Cl C$ and every $g\in M_\varphi(S)$, $g^{-1}(K)$
    belongs to~$\Cl C$;
  \item\label{item:M-closed-in-Stone-completion-3} the set
    $\varphi^{-1}(\Cl C)$ is M-closed in~$T$;
  \item\label{item:M-closed-in-Stone-completion-4} there exists an
    M-closed subset $\Cl D$ of $\varphi^{-1}(\Cl P_{co}(S))$ such that
    $\Cl C=\{\overline{{\varphi(L)}}:L\in\Cl D\}$.
  \end{enumerate}
\end{Prop}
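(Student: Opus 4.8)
I would prove this by reading the four conditions as reformulations of one another under the Stone-duality dictionary, using Lemma~\ref{l:reduction-to-induced-linear-transformations} to replace an arbitrary linear transformation of $S$ by an induced one, and the identity $\varphi\circ f=g\circ\varphi$ recorded just before Lemma~\ref{l:induced-linear-in-Stsynt} to transport things between $T$ and $S$. (Lemma~\ref{l:induced-linear-in-Stsynt} itself is that identity read in the dual picture, and could be used in its place.)

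First I would set up the bookkeeping coming from the density of $\varphi(T)$ in $S$. Since clopen sets are open, the assignment $K\mapsto\varphi^{-1}(K)$ is an injective Boolean homomorphism $\Cl P_{co}(S)\to\Cl P_{co}(T)$: if $\varphi^{-1}(K)=\varphi^{-1}(K')$ then the open set $K\triangle K'$ is disjoint from the dense set $\varphi(T)$, hence empty. Moreover $\overline{\varphi(\varphi^{-1}(K))}=\overline{K\cap\varphi(T)}=K$, because $K\cap\varphi(T)$ is dense in the open set $K$ while $K$ is closed. It follows that $\Cl C\mapsto\varphi^{-1}(\Cl C)$ is a bijection from the subsets of $\Cl P_{co}(S)$ onto the subsets of $\varphi^{-1}(\Cl P_{co}(S))$, with inverse $\Cl D\mapsto\{\overline{\varphi(L)}:L\in\Cl D\}$. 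Finally, if $f\in M(T)$ induces $g\in M_\varphi(S)$, then from $\varphi\circ f=g\circ\varphi$ I get $f^{-1}(\varphi^{-1}(K))=\varphi^{-1}(g^{-1}(K))$ for every clopen $K\subseteq S$, and $g^{-1}(K)$ is again clopen because $g$ is continuous.

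With this preparation the equivalences become routine. For (\ref{item:M-closed-in-Stone-completion-1})$\Leftrightarrow$(\ref{item:M-closed-in-Stone-completion-2}): one direction holds since $M_\varphi(S)\subseteq M(S)$; conversely, given $K\in\Cl C$ and $h\in M(S)$, Lemma~\ref{l:reduction-to-induced-linear-transformations} produces $g\in M_\varphi(S)$ with $h^{-1}(K)=g^{-1}(K)$, which lies in $\Cl C$ by (\ref{item:M-closed-in-Stone-completion-2}). For (\ref{item:M-closed-in-Stone-completion-2})$\Leftrightarrow$(\ref{item:M-closed-in-Stone-completion-3}): the members of $\varphi^{-1}(\Cl C)$ are exactly the $\varphi^{-1}(K)$ with $K\in\Cl C$, and $f^{-1}(\varphi^{-1}(K))=\varphi^{-1}(g^{-1}(K))$; so (\ref{item:M-closed-in-Stone-completion-2}) gives $f^{-1}(\varphi^{-1}(K))\in\varphi^{-1}(\Cl C)$, i.e.\ (\ref{item:M-closed-in-Stone-completion-3}), while if (\ref{item:M-closed-in-Stone-completion-3}) holds then $\varphi^{-1}(g^{-1}(K))=\varphi^{-1}(K')$ for some $K'\in\Cl C$, whence $g^{-1}(K)=K'\in\Cl C$ by injectivity applied to the clopen sets $g^{-1}(K)$ and $K'$, giving (\ref{item:M-closed-in-Stone-completion-2}). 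Lastly, (\ref{item:M-closed-in-Stone-completion-3})$\Rightarrow$(\ref{item:M-closed-in-Stone-completion-4}) by taking $\Cl D=\varphi^{-1}(\Cl C)$, which is M-closed by (\ref{item:M-closed-in-Stone-completion-3}), is contained in $\varphi^{-1}(\Cl P_{co}(S))$, and satisfies $\{\overline{\varphi(L)}:L\in\Cl D\}=\{\overline{\varphi(\varphi^{-1}(K))}:K\in\Cl C\}=\Cl C$; and (\ref{item:M-closed-in-Stone-completion-4})$\Rightarrow$(\ref{item:M-closed-in-Stone-completion-3}) since $\varphi^{-1}(\Cl C)=\varphi^{-1}(\{\overline{\varphi(L)}:L\in\Cl D\})=\Cl D$, which is M-closed by hypothesis.

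I do not expect a genuine obstacle: the substance is carried by the two Lemmas already established, and the only points demanding care are the density dictionary of the second paragraph, namely injectivity of $K\mapsto\varphi^{-1}(K)$ on clopen sets together with $\overline{\varphi(\varphi^{-1}(K))}=K$, and the observation that $g^{-1}(K)$ is clopen so that injectivity can be applied to it. The whole statement is essentially a transport-of-structure exercise across the Stone completion.
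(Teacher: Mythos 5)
Your proof is correct, and its ingredients are the same as the paper's: Lemma~\ref{l:reduction-to-induced-linear-transformations}, the intertwining identity $\varphi\circ f=g\circ\varphi$ giving $f^{-1}(\varphi^{-1}(K))=\varphi^{-1}(g^{-1}(K))$, and the density facts $\overline{\varphi(\varphi^{-1}(K))}=K$ and injectivity of $K\mapsto\varphi^{-1}(K)$ on clopen sets. The organization differs, though. The paper closes a single cycle $(\ref{item:M-closed-in-Stone-completion-1})\Rightarrow(\ref{item:M-closed-in-Stone-completion-2})\Rightarrow(\ref{item:M-closed-in-Stone-completion-3})\Rightarrow(\ref{item:M-closed-in-Stone-completion-4})\Rightarrow(\ref{item:M-closed-in-Stone-completion-1})$, and its only nontrivial density argument sits in the last step, where for $h\in M(S)$ it verifies the claim $g^{-1}(K)=\overline{\varphi\bigl(f^{-1}(L)\bigr)}$ by testing the two clopen sets on the dense image of~$\varphi$. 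You instead prove the three adjacent equivalences $(\ref{item:M-closed-in-Stone-completion-1})\Leftrightarrow(\ref{item:M-closed-in-Stone-completion-2})\Leftrightarrow(\ref{item:M-closed-in-Stone-completion-3})\Leftrightarrow(\ref{item:M-closed-in-Stone-completion-4})$, packaging the density argument once and for all into the dictionary of your second paragraph: $\Cl C\mapsto\varphi^{-1}(\Cl C)$ is a bijection onto the subsets of $\varphi^{-1}(\Cl P_{co}(S))$ with inverse $\Cl D\mapsto\{\overline{\varphi(L)}:L\in\Cl D\}$. After that, $(\ref{item:M-closed-in-Stone-completion-3})\Leftrightarrow(\ref{item:M-closed-in-Stone-completion-4})$ is pure bookkeeping, and the return from $(\ref{item:M-closed-in-Stone-completion-3})$ to $(\ref{item:M-closed-in-Stone-completion-2})$ uses injectivity of $\varphi^{-1}$ applied to the clopen sets $g^{-1}(K)$ and $K'$ -- a point you rightly flag, since $g^{-1}(K)$ being clopen is what licenses that cancellation. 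What your route buys is that the condition-$(\ref{item:M-closed-in-Stone-completion-4})$ equivalence becomes formal and the density reasoning is isolated in two reusable facts; what the paper's route buys is that it never needs the explicit bijection, deriving M-closedness of $\Cl C$ in one stroke from an M-closed $\Cl D$ as in $(\ref{item:M-closed-in-Stone-completion-4})$. Both are complete; yours is a legitimate, slightly more modular rearrangement of the same argument.
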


\begin{proof}
  The implication
  $(\ref{item:M-closed-in-Stone-completion-1})\Rightarrow
  (\ref{item:M-closed-in-Stone-completion-2})$ is obvious.

  For $(\ref{item:M-closed-in-Stone-completion-2})\Rightarrow
  (\ref{item:M-closed-in-Stone-completion-3})$, let
  $L=\varphi^{-1}(K)$ with $K\in\Cl C$. Given $f\in M(T)$, let $g\in
  M_\varphi(S)$ be the induced linear transformation of~$S$, so that
  $\varphi\circ f=g\circ\varphi$. Then, for each $t\in T$, the
  following equivalences hold:
  \begin{align*}
    t\in f^{-1}(L)
    &\iff f(t)\in\varphi^{-1}(K) \\
    &\iff \varphi(f(t))\in K\\
    &\iff g(\varphi(t))\in K\\
    &\iff t\in \varphi^{-1}(g^{-1}(K)).
  \end{align*}
  Hence, $f^{-1}(L)=\varphi^{-1}(g^{-1}(K))$ so that, in view
  of~(\ref{item:M-closed-in-Stone-completion-2}), $g^{-1}(K)$ belongs
  to~$\Cl C$ and $f^{-1}(L)$ 
  belongs to~$\varphi^{-1}(\Cl C)$.

  To prove $(\ref{item:M-closed-in-Stone-completion-3})\Rightarrow
  (\ref{item:M-closed-in-Stone-completion-4})$, it suffices to observe
  that, for every clopen subset $K$ of~$S$,
  $\overline{\varphi(\varphi^{-1}(K))}=K$ because $\varphi$ is a
  mapping with dense image while $\varphi^{-1}(K)$ is admissible by
  \cite[Corollary~6.10]{Almeida&Klima:2024a}.
  
  Finally, to establish
  $(\ref{item:M-closed-in-Stone-completion-4})\Rightarrow
  (\ref{item:M-closed-in-Stone-completion-1})$, suppose that $L$ is an
  admissible subset of~$T$ and let $K=\overline{{\varphi(L)}}$ and
  $h\in M(S)$. Note that $K$ is clopen in~$S$ and $\varphi^{-1}(K)=L$.
  By Lemma~\ref{l:reduction-to-induced-linear-transformations}, there
  is $f\in M(T)$ such that the induced linear transformation $g$
  of~$S$ satisfies $h^{-1}(K)=g^{-1}(K)$. We claim that
  \begin{equation}
    \label{eq:M-closed-in-Stone-completion}
    g^{-1}(K)=\overline{\varphi\bigl(f^{-1}(L)\bigr)}.
  \end{equation}
  Since the image of $\varphi$ is dense in~$S$ and both sides
  of~\eqref{eq:M-closed-in-Stone-completion} are clopen sets, the
  equality \eqref{eq:M-closed-in-Stone-completion} holds if and only
  if both sides contain the same elements of the form $\varphi(t)$
  with $t\in T$, a property that is a consequence of the following
  chain of equivalences:
  \begin{align*}
    \varphi(t)\in g^{-1}(K)
    &\iff g(\varphi(t))\in K \\
    &\iff \varphi(f(t))\in K\\
    &\iff f(t)\in\varphi^{-1}(K)=L\\
    &\iff \varphi(t)\in \varphi(f^{-1}(L)).
  \end{align*}
  This establishes the claim, which yields the desired implication.
\end{proof}

\section{Ordering Stone topological algebras}
\label{sec:ordering}

A binary relation $\rho$ on an $\Omega$-algebra $S$ is said to be
\emph{stable} if, for all $n\ge1$, $o\in\Omega_n$, and $s_i,t_i\in S$,
\begin{equation}
  \label{eq:ordered-Stone}
  s_i\mathrel{\rho} t_i\ (i=1,\ldots,n)
  \implies
  o_S(s_1,\ldots,s_n) \mathrel{\rho} o_S(t_1,\ldots,t_n).
\end{equation}
We say that a pair $(S,{\preccurlyeq})$ is a \emph{quasi-ordered
  topological algebra} if $\preccurlyeq$ is a stable closed
quasi-order on~$S$.
An \emph{ordered Stone topological algebra} is a
quasi-ordered topological algebra $(S,{\le})$ in which $\le$ is
a partial order.

If $\preccurlyeq$ is a quasi-order on a set $S$ then a subset $L$
of~$S$ is said to be an \emph{$\preccurlyeq$-upset} if $s\in L$ and
$s\preccurlyeq t$ implies $t\in L$.
In case $(S,{\preccurlyeq})$ is a
quasi-ordered topological algebra, we denote by $\Cl P_{uco}(S)$ the
set of all clopen $\preccurlyeq$-upsets of~$S$.

\begin{Lemma}
  \label{l:Puco}
  For a quasi-ordered topological algebra $(S,{\preccurlyeq})$, the
  set $\Cl P_{uco}(S)$ is an $M$-closed $(0,1)$-sublattice of~$\Cl
  P_{co}(S)$.
\end{Lemma}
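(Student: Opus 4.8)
The plan is to check the three defining properties of an $M$-closed $(0,1)$-sublattice separately: that $\Cl P_{uco}(S)$ contains the least and greatest elements $\emptyset$ and $S$, that it is closed under binary joins and meets, and that it is $M$-closed. The first two points are purely formal. The set $\emptyset$ is vacuously a $\preccurlyeq$-upset, $S$ is trivially one, and both are clopen; moreover, since a union or an intersection of two $\preccurlyeq$-upsets is again a $\preccurlyeq$-upset and $\Cl P_{co}(S)$ is closed under $\cup$ and $\cap$, the same closure holds for $\Cl P_{uco}(S)$. This already gives the $(0,1)$-sublattice assertion, so the remaining work is to establish $M$-closedness.

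For that, I would fix $L\in\Cl P_{uco}(S)$ and $f\in M(S)$ and write $f(s)=w_S(s,s_2,\dots,s_n)$ for some $\Omega$-term $w(x_1,\dots,x_n)$ linear in $x_1$ and fixed $s_2,\dots,s_n\in S$. Continuity of $f$ is immediate: in a topological algebra every basic operation $o_S$ is continuous, hence so is each term operation $w_S\colon S^n\to S$, and $f$ is obtained from $w_S$ by fixing all coordinates but the first; consequently $f^{-1}(L)$ is clopen. It then remains to see that $f^{-1}(L)$ is a $\preccurlyeq$-upset, and the key point is monotonicity of $w_S$ with respect to $\preccurlyeq$, namely that $a_i\preccurlyeq b_i$ for $i=1,\dots,n$ implies $w_S(a_1,\dots,a_n)\preccurlyeq w_S(b_1,\dots,b_n)$. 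Granting this, if $s\in f^{-1}(L)$ and $s\preccurlyeq t$, then reflexivity of $\preccurlyeq$ gives $s_i\preccurlyeq s_i$ for $i\ge2$, so $f(s)\preccurlyeq f(t)$; since $f(s)\in L$ and $L$ is a $\preccurlyeq$-upset, $f(t)\in L$, that is, $t\in f^{-1}(L)$.

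The only step that is not completely immediate is the monotonicity of $w_S$, but this is the standard fact that term operations of a quasi-ordered algebra are order preserving, proved by a straightforward induction on the structure of $w$: a variable is trivial, a nullary symbol uses reflexivity of $\preccurlyeq$, and the inductive step is a single application of the stability condition~\eqref{eq:ordered-Stone}. So I do not anticipate a genuine obstacle: the content of the lemma is just the combination of continuity of linear transformations with this monotonicity, together with the elementary behaviour of clopen upsets under finite Boolean operations.
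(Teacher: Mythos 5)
Your proposal is correct and follows essentially the same route as the paper's proof: the lattice part is handled by the same trivial verifications, and M-closedness comes from continuity of the linear transformation together with the fact that $s\preccurlyeq s'$ implies $f(s)\preccurlyeq f(s')$. The only difference is that you spell out the latter monotonicity by induction on the term (using stability and reflexivity), which the paper simply takes as immediate.
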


\begin{proof}
  Certainly the empty set and $S$ belong to~$\Cl P_{uco}(S)$ and it is
  easy to check that $\Cl P_{uco}(S)$ is closed under binary union and
  binary intersection. Suppose now that $K\in\Cl P_{uco}(S)$ and let
  $w\in T_\Omega(\{x_1,\ldots,x_n\})$, $s_2,\ldots,s_n\in S$ and
  $f(s)=w_S(s,s_2,\ldots,s_n)$ for each $s\in S$. If
  $s\preccurlyeq s'$, then $f(s)\preccurlyeq f(s')$ and so
  \begin{displaymath}
    s\in f^{-1}(K)
    \iff
    f(s)\in K
    \implies
    f(s')\in K
    \iff
    s'\in f^{-1}(K)
  \end{displaymath}
  which shows that $f^{-1}(K)\in\Cl P_{uco}(S)$ since $f$ is
  continuous.
\end{proof}

Say that a subset $\Cl L$ of $\Cl P(T)$ is \emph{P-closed} if,
whenever $f$ is a polynomial transformation of~$T$, $L\in\Cl L$
implies $f^{-1}(L)\in\Cl L$. The above proof shows that the set of
clopen upsets of a quasi-ordered topological algebra is actually
P-closed. Yet, it is not true that M-closed subsets of a topological
algebra are always P-closed. For instance, the set of all context-free
languages over a three-letter alphabet is M-closed but not P-closed
\cite{MSE1573677:2015}. We do not know whether in case $\Cl L$ is an
M-closed lattice of subsets of~$T$, then $\Cl L$ is also P-closed.

For a quasi-order $\preccurlyeq$ on a set $S$, we may consider the
associated equivalence relation $\equiv$ given by
${\preccurlyeq}\cap{\succcurlyeq}$. On the quotient set $S/{\equiv}$
we get an induced partial order $\le$ such that $(s/{\equiv})\le
(t/{\equiv})$ if and only if $s\preccurlyeq t$. In case $S$ is an
$\Omega$-algebra and $\preccurlyeq$ is a stable quasi-order on~$S$,
the equivalence relation $\equiv$ is a congruence on $S$ and so
$S/{\equiv}$ inherits a structure of~$\Omega$-algebra. On the other
hand, if $S$ is a topological space, we may consider on $S/{\equiv}$
the quotient topology. It is well known that, in case $S$ is compact
and $\preccurlyeq$ is closed in~$S^2$, the quotient space $S/{\equiv}$
is also compact. But, if $S$ is a Stone space and $\preccurlyeq$ is a
closed quasi-order on~$S$, the quotient space $S/{\equiv}$ may not be
zero-dimensional. A property assuring that $S/{\equiv}$ is
zero-dimensional is the following \emph{Priestley condition}
\begin{equation}
  \label{eq:Priestley-property}
  \forall s,t\in S\
  \bigl( s\not\preccurlyeq t
  \implies \exists L\in\Cl P_{uco}(S)\
  (s\in L\wedge t\notin L) \bigr).
\end{equation}
In fact, since $\preccurlyeq$-upsets are $\equiv$-saturated, we see
that the quotient space $(S/{\equiv},{\le})$ also
satisfies~(\ref{eq:Priestley-property}) for the induced partial
order~$\le$. Conversely, if $(S/{\equiv},{\le})$
satisfies~(\ref{eq:Priestley-property}) then so does
$(S,{\preccurlyeq})$. Note also that, if $(\preccurlyeq_i)_{i\in I}$
is a nonempty family of quasi-orders on the topological space $S$
satisfying the Priestley condition then the
intersection~$\preccurlyeq$ of the family is also a quasi-order
satisfying the Priestley condition, simply because an
$\preccurlyeq_i$-upset is also an $\preccurlyeq$-upset.

We say that the ordered Stone space $(S,{\le})$ is a \emph{Priestley
  space} if it satisfies the Priestley condition. In case $S$ is also
a topological $\Omega$-algebra for the same topology, then we say that
$(S,{\le})$ is a \emph{Priestley $\Omega$-algebra}. 

A further role of the Priestley condition is given by the following
result which is observed in~\cite{Priestley:1970}.

\begin{Lemma}
  \label{l:Priestley}
  If $(S,{\le})$ is a Priestley space then the Boolean algebra $\Cl
  P_{co}(S)$ is generated by the lattice $\Cl P_{uco}(S)$.
\end{Lemma}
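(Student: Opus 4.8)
The plan is to show that every clopen subset of $S$ lies in the Boolean subalgebra of $\Cl P_{co}(S)$ generated by $\Cl P_{uco}(S)$; since that subalgebra is contained in $\Cl P_{co}(S)$, this gives the claimed equality. Throughout I use only that $S$ is a compact space, that $\le$ is a closed partial order satisfying the Priestley condition~\eqref{eq:Priestley-property}, and that complements of clopen upsets are clopen downsets, hence members of the Boolean subalgebra generated by $\Cl P_{uco}(S)$.

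The first step is a point-separation fact: for any two distinct points $s,t\in S$ there is a set $D_{s,t}$, equal either to a clopen upset or to the complement of a clopen upset, with $s\in D_{s,t}$ and $t\notin D_{s,t}$. Indeed, since $\le$ is antisymmetric and $s\neq t$, we cannot have both $s\le t$ and $t\le s$. If $s\not\le t$, then~\eqref{eq:Priestley-property} directly provides a clopen upset $D_{s,t}$ containing $s$ but not $t$. If instead $t\not\le s$, then~\eqref{eq:Priestley-property} provides a clopen upset $V$ with $t\in V$ and $s\notin V$, and we take $D_{s,t}=S\setminus V$, a clopen downset. In either case $D_{s,t}$ belongs to the Boolean subalgebra generated by $\Cl P_{uco}(S)$.

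The second step is two compactness extractions. Fix a clopen set $C$ and a point $s\in C$. The family $\{S\setminus D_{s,t}:t\in S\setminus C\}$ is an open cover of $S\setminus C$, which is closed, hence compact; choosing a finite subcover indexed by $t_1,\dots,t_k$ and setting $E_s=\bigcap_{j=1}^k D_{s,t_j}$, we obtain a clopen set with $s\in E_s\subseteq C$ that lies in the Boolean subalgebra generated by $\Cl P_{uco}(S)$ (a Boolean subalgebra being closed under finite intersections). Next, $\{E_s:s\in C\}$ is an open cover of the compact set $C$, so $C=E_{s_1}\cup\dots\cup E_{s_m}$ for finitely many points $s_i\in C$, since each $E_{s_i}\subseteq C$. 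This writes $C$ as a finite union of members of the Boolean subalgebra generated by $\Cl P_{uco}(S)$, hence as a member of it, completing the argument.

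The only subtle point is the separation step: the Priestley condition produces only clopen \emph{up}sets, so when the order between $s$ and $t$ runs "the wrong way" one is forced to pass to a clopen downset, which is why it is essential to work inside the generated Boolean subalgebra rather than inside $\Cl P_{uco}(S)$ itself. Everything else is routine, relying only on the compactness of $C$ and of $S\setminus C$.
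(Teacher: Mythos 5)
Your proof is correct. The paper does not supply its own argument for this lemma --- it is stated as an observation from Priestley's original work --- so there is nothing internal to compare against; your argument is exactly the classical one from Priestley duality: antisymmetry guarantees $s\not\le t$ or $t\not\le s$ for distinct points, the Priestley condition then separates them by a clopen upset or the complement of one, and two compactness extractions write an arbitrary clopen set as a finite union of finite intersections of such sets, hence as a member of the Boolean subalgebra generated by $\Cl P_{uco}(S)$. All steps check out, including the point you flag: one must work in the generated Boolean subalgebra, since the separating sets are sometimes clopen downsets rather than upsets.
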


The ``Ersatzkette'' of Stralka \cite[Proposition~(a)]{Stralka:1980} is
a counterexample to the converse of Lemma~\ref{l:Priestley} when the
signature $\Omega$ is empty. For a given sublattice of~$\Cl P_{co}(S)$
generating it as a Boolean algebra, there is a unique closed partial
order on $S$ satisfying the Priestley condition such that the members
of the lattice are the clopen upsets. This result is
part of the
Priestley duality theory and is generalized below when the Stone space
is a Stone completion of a topological algebra.

It is convenient to denote the complement of a subset $K$ in an
ambient set which should be clear from the context by~$K^\complement$.
The notation is somewhat abusive when we don't specify where the
complement is being taken. Yet, we trust that the reader will have no
problem figuring out from the context in which set a subset is being
complemented.

We say that $\varphi:T\to(S,{\le})$ is a \emph{Priestley stamp (over
  $T$)} if $\varphi:T\to S$ is a Stone completion and $(S,{\le})$ is a
Priestley space. Our aim is to show that, up to isomorphism, Priestley
stamps over $T$ are in bijection with the M-closed sublattices of $\Cl
B_{\mathrm{max}}^T$. Equivalently, such lattices are in bijection with
the closed stable quasi-orders on the Stone topological algebra $(\Cl
B_{\mathrm{max}}^T)^\star$ that satisfy the Priestley condition. Our
next proposition is a preparatory result in that direction. Before
stating it note that, given a subset $\Cl C$ of~$\Cl P(T)$ and a
Boolean subalgebra $\Cl B$ of~$\Cl P(T)$ containing~$\Cl C$, consider
the following binary relation on~$\Cl B^\star$: for ultrafilters
$u,v\in\Cl B^\star$, let
\begin{equation}
  \label{eq:qo-for-M-closed-admissible}
  u\preccurlyeq_{\Cl C} v
  \quad\text{if\/ }
  \forall L\in\Cl C\ (L\in u \implies L\in v).
\end{equation}
The relation $\preccurlyeq_{\Cl C}$ is clearly a quasi-order on~$\Cl
B^\star$. To show that $\preccurlyeq_{\Cl C}$ is closed, note that it
is the intersection of the relations $S^2\setminus(\Cl U_L\times\Cl
U_{L^\complement})$ ($L\in\Cl C$), which are closed.

\begin{Prop}
  \label{p:M-closed-admissible-to-qo}
  Let $\Cl C$ be an M-closed subset of $\Cl B_{\mathrm{max}}^T$. Then
  the pair $\bigl((\Cl B_{\mathrm{max}}^T)^\star,{\preccurlyeq_{\Cl
      C}}\bigr)$ is a quasi-ordered Stone topological algebra that
  satisfies the Priestley condition.
\end{Prop}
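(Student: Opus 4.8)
The plan is to verify three things about the pair $\bigl((\Cl B_{\mathrm{max}}^T)^\star,{\preccurlyeq_{\Cl C}}\bigr)$, writing $S=(\Cl B_{\mathrm{max}}^T)^\star$ and $\Cl B=\Cl B_{\mathrm{max}}^T$ throughout: that $\preccurlyeq_{\Cl C}$ is a closed quasi-order (already noted in the text preceding the statement, so I can cite it), that it is stable for the $\Omega$-algebra structure on $S$, and that it satisfies the Priestley condition. The quasi-order and closedness are given, so the real work is stability and the Priestley condition.

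**For stability**, I would unwind the definition: I need that for $o\in\Omega_n$ and ultrafilters $u_i\preccurlyeq_{\Cl C} v_i$ ($i=1,\dots,n$), one has $o_S(u_1,\dots,u_n)\preccurlyeq_{\Cl C} o_S(v_1,\dots,v_n)$, i.e. for every $L\in\Cl C$, if $L\in o_S(u_1,\dots,u_n)$ then $L\in o_S(v_1,\dots,v_n)$. The key is to reduce membership $L\in o_S(u_1,\dots,u_n)$ to conditions on the individual $u_i$ that are testable by sets in $\Cl C$. By Proposition~\ref{p:M-closed}, since $\Cl C$ is M-closed and admissible, $\Cl C$ satisfies Condition~\eqref{eq:w-Om-fin}, so $(o_T)^{-1}(L)=\bigcup_{j=1}^r L_{j,1}\times\cdots\times L_{j,n}$ with all $L_{j,\ell}\in\Cl C$. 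Interpreting this in $S$: using that $\iota_{\Cl B}$ is a dense homomorphism and the continuity of $o_S$, one gets $o_S^{-1}(\Cl U_L)=\bigcup_{j}\Cl U_{L_{j,1}}\times\cdots\times\Cl U_{L_{j,n}}$ (closures of the boxes, which are already clopen). Hence $L\in o_S(u_1,\dots,u_n)$ iff there is $j$ with $L_{j,\ell}\in u_\ell$ for all $\ell$; since each $L_{j,\ell}\in\Cl C$ and $u_\ell\preccurlyeq_{\Cl C}v_\ell$, the same $j$ witnesses $L\in o_S(v_1,\dots,v_n)$. The identity $o_S^{-1}(\Cl U_L)=\bigcup_j \Cl U_{L_{j,1}}\times\cdots\times\Cl U_{L_{j,n}}$ is the step I expect to need a little care with: it should follow because both sides are clopen in $S^n$ and agree on the dense set $\iota_{\Cl B}(T)^n$, using $\varphi\circ o_T = o_S\circ\varphi^n$ with $\varphi=\iota_{\Cl B}$ and $\iota_{\Cl B}^{-1}(\Cl U_{L'})=L'$.

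**For the Priestley condition**, suppose $u\not\preccurlyeq_{\Cl C} v$. By definition there is $L\in\Cl C$ with $L\in u$ but $L\notin v$. Then $\Cl U_L$ is a clopen set containing $u$ but not $v$, and it is a $\preccurlyeq_{\Cl C}$-upset: if $w\in\Cl U_L$, i.e. $L\in w$, and $w\preccurlyeq_{\Cl C} w'$, then since $L\in\Cl C$ we get $L\in w'$, so $w'\in\Cl U_L$. Thus $\Cl U_L\in\Cl P_{uco}(S)$ separates $u$ from $v$ as required. This part is short and essentially a direct translation of the definitions.

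**The main obstacle** is the box-decomposition transfer step in the stability argument — making precise that Condition~\eqref{eq:w-Om-fin} for $\Cl C$, which lives over $T$, yields the clopen decomposition of $o_S^{-1}(\Cl U_L)$ over $S$, and conversely that membership of $L$ in $o_S(u_1,\dots,u_n)$ is exactly captured by the sets $L_{j,\ell}\in\Cl C$ lying in the respective $u_\ell$. Everything else (quasi-order, closedness, Priestley separation) is routine bookkeeping with ultrafilters and the basic clopen sets $\Cl U_L$.
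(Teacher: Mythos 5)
Your treatment of the quasi-order, closedness, and the Priestley condition is fine and matches the paper. The gap is in the stability argument, at exactly the step you flag as the main obstacle: you invoke Proposition~\ref{p:M-closed} to get a decomposition $(o_T)^{-1}(L)=\bigcup_{j}L_{j,1}\times\cdots\times L_{j,n}$ with all factors $L_{j,\ell}\in\Cl C$. But Proposition~\ref{p:M-closed} applies to Boolean \emph{subalgebras} of $\Cl P_{co}(T)$, whereas here $\Cl C$ is only an M-closed \emph{subset} of $\Cl B_{\mathrm{max}}^T$ --- in the intended applications it is typically a $(0,1)$-lattice of languages not closed under complement (indeed the whole point of the ordered/Priestley setting), and the converse half of Proposition~\ref{p:M-closed} genuinely uses closure under Boolean operations to place the refined partition classes back in the algebra. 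So Condition~\eqref{eq:w-Om-fin} with factors in $\Cl C$ is not available. Nor can you retreat to a decomposition with admissible factors (which admissibility of $L$ does give): the relation $\preccurlyeq_{\Cl C}$ only transfers membership of sets belonging to $\Cl C$ (or, with a small extra argument, to the lattice it generates), so factors merely in $\Cl B_{\mathrm{max}}^T$ do not let you conclude $L\in o_S(v_1,\dots,v_n)$.

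The paper's proof avoids box decompositions altogether. It first reduces stability to preservation of $\preccurlyeq_{\Cl C}$ by linear transformations $h\in M(S)$ (change one coordinate at a time and use transitivity). Then, given $L\in\Cl C$ with $L\in h(u)$, i.e.\ $u\in h^{-1}(\Cl U_L)$, Lemma~\ref{l:reduction-to-induced-linear-transformations} replaces $h$ by an induced transformation $g$ coming from some $f\in M(T)$ with $h^{-1}(\Cl U_L)=g^{-1}(\Cl U_L)$, and Lemma~\ref{l:induced-linear-in-Stsynt} converts $u\in g^{-1}(\Cl U_L)$ into $f^{-1}(L)\in u$. Since $\Cl C$ is M-closed, $f^{-1}(L)\in\Cl C$, so $u\preccurlyeq_{\Cl C}v$ gives $f^{-1}(L)\in v$, and reversing the chain yields $L\in h(v)$. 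This is the mechanism by which bare M-closedness of $\Cl C$ suffices; your argument, as written, needs a hypothesis ($\Cl C$ a Boolean algebra) that the statement does not provide.
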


\begin{proof}
  To simplify the notation used in this proof, let $S=(\Cl
  B_{\mathrm{max}}^T)^\star$, $\varphi=\iota_{\Cl B_{\mathrm{max}}^T}$
  and denote $\preccurlyeq_{\Cl C}$ by $\preccurlyeq$. By the above,
  we already know that $\preccurlyeq$ is a closed quasi-order on $S$.

  To show that $\preccurlyeq$ is stable, first note that it suffices
  to show that it is preserved by linear transformations of~$S$. So,
  suppose that $u\preccurlyeq v$ and let $h\in M(S)$. Let $L\in\Cl C$
  and assume that $L\in h(u)$, that is, $h(u)\in \Cl U_L$, which
  further means that $u\in h^{-1}(\Cl U_L)$. By
  Lemma~\ref{l:reduction-to-induced-linear-transformations}, there is
  $f\in M(T)$ such that, for the induced $g\in M_\varphi(S)$,
  $h^{-1}(\Cl U_L)=g^{-1}(\Cl U_L)$. By
  Lemma~\ref{l:induced-linear-in-Stsynt}, we know that $u\in
  g^{-1}(\Cl U_L)$ if and only if $f^{-1}(L)\in u$. Since $\Cl C$ is
  M-closed and $u\preccurlyeq v$, we get $f^{-1}(L)\in v$. Reversing
  the above steps with $v$ in the place of~$u$, we conclude that $L\in
  h(v)$, thereby showing that $h(u)\preccurlyeq h(v)$, which proves
  that $\preccurlyeq$ is stable.
  
  It remains to show that $\preccurlyeq$ satisfies the Priestley
  condition. Indeed if $u,v\in S$ and $u\not\preccurlyeq v$, then
  there exists $L\in\Cl C$ such that $L\in u$ and $L\notin v$, that is
  the clopen set $\Cl U_L$ contains $u$ but not $v$. To conclude, it
  suffices to observe that $\Cl U_L$ is an $\preccurlyeq$-upset.
\end{proof}

\section{Syntactic Priestley algebra}
\label{sec:syntactic}

As in the preceding section, we assume throughout this section that
$\Omega$ is an arbitrary topological signature and $T$ is an arbitrary
topological $\Omega$-algebra.

Let $\Cl C$ be a subset of~$\Cl B_{\mathrm{max}}^T$ and let
$\varphi:(\Cl B_{\mathrm{max}}^T)^\star\to \langle\Cl C\rangle^\star$
be the associated continuous homomorphism, which is defined by
$\varphi(u)=u\cap\langle\Cl C\rangle$. Consider the least M-closed set
containing $\Cl C$:
\begin{equation}
  \label{eq:M-closure}
  \Cl C_M=\{f^{-1}(L):f\in M(T), \ L\in\Cl C\}.
\end{equation}
Let $[\Cl C]$ the $(0,1)$-sublattice of~$\Cl B_{\mathrm{max}}^T$
generated by $\Cl C_M$, which is the smallest M-closed
$(0,1)$-sublattice of~$\Cl B_{\mathrm{max}}^T$ containing $\Cl C$. Let
$\Stsynt(\Cl C)$ be the Stone topological algebra $\langle\Cl
C\rangle^\star$ and denote by $\eta_{\Cl C}$ the Stone completion
$\iota_{\langle\Cl C\rangle}$. We endow $\Stsynt(\Cl C)$ with the
binary relation defined by
\begin{equation}
  \label{eq:po-for-admissible}
  u\le_{\Cl C} v
  \quad\text{if\/ }
  \forall L\in\Cl C_M\ (L\in u \implies L\in v).
\end{equation}
Then, the equality $\varphi(u)=\varphi(v)$ holds if and only if the
inequalities $u\preccurlyeq_{\Cl C_M}v$ and $v\preccurlyeq_{\Cl C_M}
u$ are both valid, where the quasi-order $\preccurlyeq_{\Cl C_M}$ is
that defined in~(\ref{eq:qo-for-M-closed-admissible}). Moreover,
$u\preccurlyeq_{\Cl C_M}v$ is equivalent to $\varphi(u)\le_{\Cl
  C}\varphi(v)$. This proves the following corollary of
Proposition~\ref{p:M-closed-admissible-to-qo}.

\begin{Cor}
  \label{c:admissible-to-po}
  Let $\Cl C$ be a subset of $\Cl B_{\mathrm{max}}^T$ and $\equiv_{\Cl
    C_M}$ be the congruence ${\preccurlyeq_{\Cl
      C_M}}\cap{\succcurlyeq_{\Cl C_M}}$. Then $\bigl(\Stsynt(\Cl
  C),{\le_{\Cl C}}\bigr)$ is a Priestley algebra isomorphic to the
  quotient $\bigl((\Cl B_{\mathrm{max}}^T)^\star/{\equiv_{\Cl
      C_M}},{\preccurlyeq_{\Cl C_M}}\bigr)$ and, therefore, $\eta_{\Cl
    C}:T\to(\Stsynt(\Cl C),{\le_{\Cl C}})$ is a Priestley stamp.
\end{Cor}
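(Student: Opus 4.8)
The plan is to deduce the corollary almost mechanically from Proposition~\ref{p:M-closed-admissible-to-qo}, using the fact that $[\Cl C]$ is an M-closed subset of $\Cl B_{\mathrm{max}}^T$ and that the intermediate observations recorded before the statement already identify $\le_{\Cl C}$ with the partial order induced by $\preccurlyeq_{\Cl C_M}$ on the quotient. First I would apply Proposition~\ref{p:M-closed-admissible-to-qo} to the M-closed set $\Cl C_M$ (equivalently, to $[\Cl C]$, which generates the same M-closure): this gives that $\bigl((\Cl B_{\mathrm{max}}^T)^\star,{\preccurlyeq_{\Cl C_M}}\bigr)$ is a quasi-ordered Stone topological algebra satisfying the Priestley condition. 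Passing to the quotient by the congruence $\equiv_{\Cl C_M}={\preccurlyeq_{\Cl C_M}}\cap{\succcurlyeq_{\Cl C_M}}$, the discussion in Section~\ref{sec:ordering} (compactness of the quotient of a compact space by a closed equivalence, plus the fact that the Priestley condition is inherited because $\preccurlyeq_{\Cl C_M}$-upsets are $\equiv_{\Cl C_M}$-saturated) shows that $\bigl((\Cl B_{\mathrm{max}}^T)^\star/{\equiv_{\Cl C_M}},{\preccurlyeq_{\Cl C_M}}\bigr)$ is a Priestley $\Omega$-algebra.

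Next I would identify this quotient with $\bigl(\Stsynt(\Cl C),{\le_{\Cl C}}\bigr)$. The underlying Stone space identification is exactly $\varphi:(\Cl B_{\mathrm{max}}^T)^\star\to\langle\Cl C\rangle^\star=\Stsynt(\Cl C)$, $\varphi(u)=u\cap\langle\Cl C\rangle$: this is a continuous surjective homomorphism of Stone topological algebras (surjectivity because $\langle\Cl C\rangle$ is a Boolean subalgebra of $\Cl B_{\mathrm{max}}^T$, hence $\varphi$ is dual to an inclusion), and by the computation recorded just before the corollary its fibres are exactly the $\equiv_{\Cl C_M}$-classes, so $\varphi$ factors as a continuous algebra isomorphism $(\Cl B_{\mathrm{max}}^T)^\star/{\equiv_{\Cl C_M}}\to\Stsynt(\Cl C)$, which is then a homeomorphism by compactness. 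Under this identification, the same computation gives $u\preccurlyeq_{\Cl C_M}v\iff\varphi(u)\le_{\Cl C}\varphi(v)$, so the order $\le_{\Cl C}$ on $\Stsynt(\Cl C)$ corresponds to the induced partial order of $\preccurlyeq_{\Cl C_M}$ on the quotient; in particular $\le_{\Cl C}$ is a stable closed partial order and $\bigl(\Stsynt(\Cl C),{\le_{\Cl C}}\bigr)$ is a Priestley $\Omega$-algebra.

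Finally, the statement that $\eta_{\Cl C}:T\to(\Stsynt(\Cl C),{\le_{\Cl C}})$ is a Priestley stamp is immediate: $\eta_{\Cl C}=\iota_{\langle\Cl C\rangle}$ is a Stone completion of $T$ by the very definition of $\Omega$-finiteness and the fact (\cite[Corollary~6.8]{Almeida&Klima:2024a}) that $\langle\Cl C\rangle$ is $\Omega$-finite, and we have just shown the target carries a Priestley $\Omega$-algebra structure. One should also check the compatibility $\varphi\circ\iota_{\Cl B_{\mathrm{max}}^T}=\eta_{\Cl C}$, which holds because for $t\in T$ both sides send $t$ to $\{L\in\langle\Cl C\rangle:t\in L\}$.

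I expect the only genuinely non-routine point to be the verification that the fibres of $\varphi$ coincide with the $\equiv_{\Cl C_M}$-classes, i.e.\ that $\varphi(u)=\varphi(v)$ if and only if $u$ and $v$ agree on membership of all sets in $\Cl C_M$ — this is where one uses that $\langle\Cl C\rangle$ is the $\Omega$-finite Boolean algebra generated by $\Cl C$ together with Corollary~\ref{c:B(L)} (or rather its consequence that $\langle\Cl C\rangle$ is Boolean-generated by $\Cl C_M$), so that two ultrafilters agreeing on $\Cl C_M$ agree on all of $\langle\Cl C\rangle$. Everything else is bookkeeping around quotients of compact spaces and transport of structure along a homeomorphism.
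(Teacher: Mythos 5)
Your argument is correct and follows essentially the same route as the paper, which derives the corollary from Proposition~\ref{p:M-closed-admissible-to-qo} applied to the M-closed set $\Cl C_M$ together with the observations stated just before the corollary (that $\varphi(u)=\varphi(v)$ iff $u\equiv_{\Cl C_M}v$ and that $u\preccurlyeq_{\Cl C_M}v$ iff $\varphi(u)\le_{\Cl C}\varphi(v)$) and the quotient discussion of Section~\ref{sec:ordering}. You merely spell out the details the paper leaves implicit, in particular that $\langle\Cl C\rangle$ is Boolean-generated by $\Cl C_M$ so that ultrafilters agreeing on $\Cl C_M$ agree on $\langle\Cl C\rangle$, which is exactly the justification needed for the paper's asserted identification.
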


Further properties of the quasi-orders $\preccurlyeq_{\Cl C_M}$ are
presented in the following result.

\begin{Prop}
  \label{p:admissible-sets-vs-quasi-orders}
  Let $\Cl C$ and $\Cl D$ be subsets of~$\Cl B_{\mathrm{max}}^T$.
  \begin{enumerate}
  \item\label{item:admissible-sets-vs-quasi-orders-1} If $\Cl
    C\subseteq\Cl D$ then $\preccurlyeq_{\Cl D_M}$ is contained in
    $\preccurlyeq_{\Cl C_M}$.
  \item\label{item:admissible-sets-vs-quasi-orders-2} The quasi-orders
    $\preccurlyeq_{\Cl C_M}$ and $\preccurlyeq_{[\Cl C]}$ coincide.
  \item\label{item:admissible-sets-vs-quasi-orders-3} The preimages
    under~$\iota_{\Cl B_{\mathrm{max}}^T}$ of the clopen
    $\preccurlyeq_{\Cl C_M}$-upsets are precisely the members of~$[\Cl
    C]$.
  \item\label{item:admissible-sets-vs-quasi-orders-4} The inclusion
    $\Cl C\subseteq[\Cl D]$ holds if and only if $\preccurlyeq_{\Cl
      D_M}$ is contained in $\preccurlyeq_{\Cl C_M}$.
  \end{enumerate}
\end{Prop}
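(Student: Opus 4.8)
The plan is to prove the four items in the order (1), (3), (2), (4), since (1) is immediate from the definitions, (3) is the technical heart, and (2) and (4) follow cleanly once (3) is available. For item (1), observe directly from~\eqref{eq:M-closure} that $\Cl C\subseteq\Cl D$ forces $\Cl C_M\subseteq\Cl D_M$; then the defining condition~\eqref{eq:qo-for-M-closed-admissible} of $\preccurlyeq_{\Cl D_M}$ quantifies over a larger family of sets $L$ than that of $\preccurlyeq_{\Cl C_M}$, so every pair related by $\preccurlyeq_{\Cl D_M}$ is also related by $\preccurlyeq_{\Cl C_M}$. This is a one-line argument.

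For item (3), write $S=(\Cl B_{\mathrm{max}}^T)^\star$, $\varphi=\iota_{\Cl B_{\mathrm{max}}^T}$, and $\preccurlyeq\;=\;\preccurlyeq_{\Cl C_M}$. By Proposition~\ref{p:M-closed-admissible-to-qo}, $(S,{\preccurlyeq})$ is a quasi-ordered Stone topological algebra satisfying the Priestley condition, so the quotient $(S/{\equiv},{\le})$ is a Priestley space and its clopen upsets generate $\Cl P_{co}(S/{\equiv})$ by Lemma~\ref{l:Priestley}. The first inclusion is easy: each $L\in[\Cl C]$ is, via Stone duality, of the form $\varphi^{-1}(\Cl U_L)$ where in fact $\Cl U_L=\Cl U^{\Cl B_{\mathrm{max}}^T}_L$; for $L\in\Cl C_M$ the set $\Cl U_L$ is a $\preccurlyeq$-upset by the argument at the end of the proof of Proposition~\ref{p:M-closed-admissible-to-qo}, and since the clopen $\preccurlyeq$-upsets form a $(0,1)$-sublattice (Lemma~\ref{l:Puco}), every member of $[\Cl C]$ — being a lattice combination of sets $\Cl U_L$ with $L\in\Cl C_M$ — has $\Cl U_L$ a clopen $\preccurlyeq$-upset, whence its $\varphi$-preimage lies in $[\Cl C]$ by construction. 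For the reverse inclusion, suppose $K$ is a clopen $\preccurlyeq$-upset of $S$ and put $L=\varphi^{-1}(K)$, which is admissible. Since $\Cl P_{uco}(S)$ is an M-closed $(0,1)$-sublattice and contains every $\Cl U_{L'}$ with $L'\in\Cl C_M$, and since by Priestley duality (Lemma~\ref{l:Priestley}) $K$ is a lattice combination of such generators together with $\emptyset,S$, I would pull this combination back along $\varphi$: using $\varphi^{-1}(\Cl U_{L'})=L'$ and the fact that $\varphi^{-1}$ is a $(0,1)$-lattice homomorphism, we get that $L$ is the corresponding lattice combination of members of $\Cl C_M$, hence $L\in[\Cl C]$. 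The main obstacle here is making precise that the generators of the clopen upsets of $(S,{\preccurlyeq})$ can be taken to be exactly the $\Cl U_{L'}$ with $L'\in\Cl C_M$ — this is where one must invoke that $\preccurlyeq$ is by definition the intersection of the quasi-orders forcing each such $\Cl U_{L'}$ to be an upset, combined with the uniqueness clause of Priestley duality quoted in the paragraph before Proposition~\ref{p:M-closed-admissible-to-qo}.

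For item (2), note that $\preccurlyeq_{\Cl C_M}$ and $\preccurlyeq_{[\Cl C]}$ are both quasi-orders satisfying the Priestley condition on $S$, and by item (3) they have the same clopen upsets pulled back to $T$, namely $[\Cl C]$; equivalently, tracing through the bijection $L\mapsto\Cl U_L$, they have the same clopen upsets in $S$. By the uniqueness part of Priestley duality, a closed quasi-order satisfying the Priestley condition is determined by its family of clopen upsets, so the two quasi-orders coincide. (Alternatively one checks the two inclusions directly: $\Cl C_M\subseteq[\Cl C]$ gives one containment by item (1)'s reasoning, and $[\Cl C]\subseteq\langle\Cl C_M\rangle_{\mathrm{lattice}}$ together with the fact that the $\preccurlyeq_{\Cl C_M}$-upsets already form a lattice containing $\Cl C_M$ gives the other.) Finally, item (4): if $\Cl C\subseteq[\Cl D]$, then $\preccurlyeq_{[\Cl D]_M}$ is contained in $\preccurlyeq_{\Cl C_M}$ by item (1), and $\preccurlyeq_{[\Cl D]_M}=\preccurlyeq_{[\Cl D]}=\preccurlyeq_{\Cl D_M}$ by item (2) applied to $\Cl D$ and the observation $[[\Cl D]]=[\Cl D]$; conversely, if $\preccurlyeq_{\Cl D_M}\subseteq\preccurlyeq_{\Cl C_M}$, then every clopen $\preccurlyeq_{\Cl D_M}$-upset is a clopen $\preccurlyeq_{\Cl C_M}$-upset, so by item (3) we get $[\Cl D]\subseteq[\Cl C]$, and in particular $\Cl C\subseteq[\Cl C]$ may be refined: actually we want $\Cl C\subseteq[\Cl D]$, which follows since each $L\in\Cl C\subseteq\Cl C_M$ has $\Cl U_L$ a $\preccurlyeq_{\Cl C_M}$-upset, hence a $\preccurlyeq_{\Cl D_M}$-upset by hypothesis, hence $L=\varphi^{-1}(\Cl U_L)\in[\Cl D]$ by item (3).
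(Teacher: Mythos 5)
Your items (1), the parenthetical alternative for (2), and the corrected ending of (4) are essentially the paper's arguments (note that in (4) your first formulation is backwards: from ${\preccurlyeq_{\Cl D_M}}\subseteq{\preccurlyeq_{\Cl C_M}}$ one gets that every $\preccurlyeq_{\Cl C_M}$-upset is a $\preccurlyeq_{\Cl D_M}$-upset, not the inclusion $[\Cl D]\subseteq[\Cl C]$ you first claim — but the sentence you add afterwards is the right argument). The genuine gap is in the reverse inclusion of item (3), which is the technical heart and on which your primary proofs of (2) and (4) also lean. You need: every clopen $\preccurlyeq_{\Cl C_M}$-upset $K$ of $S=(\Cl B_{\mathrm{max}}^T)^\star$ is a finite union of finite intersections of sets $\Cl U_{L'}$ with $L'\in\Cl C_M$. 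Neither of the two facts you invoke delivers this. Lemma~\ref{l:Priestley} only says that, in a Priestley space, the clopen upsets generate the Boolean algebra of clopens; it says nothing about expressing a given clopen upset as a lattice combination of a prescribed family of generators. The uniqueness clause quoted before Proposition~\ref{p:M-closed-admissible-to-qo} applies to a sublattice of $\Cl P_{co}(S)$ that generates it \emph{as a Boolean algebra} and to \emph{partial orders} on $S$; here the lattice $\{\Cl U_{L'}:L'\in[\Cl C]\}$ does not generate $\Cl P_{co}(S)$ (the clopens of $S$ correspond to all of $\Cl B_{\mathrm{max}}^T$, not just $\langle\Cl C\rangle$), and $\preccurlyeq_{\Cl C_M}$ is only a quasi-order. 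One could repair this by first passing to the quotient $S/{\equiv}$ and identifying it with the Priestley dual of the lattice $[\Cl C]$, but that identification (that the saturated clopens are exactly the preimages of $\langle\Cl C\rangle$-clopens and that the clopen upsets of the dual of a lattice are exactly its elements) is precisely what is at stake, so citing it would be circular in this context.

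The paper closes this gap with a self-contained double compactness argument, which is the missing idea in your proposal: given a clopen $\preccurlyeq_{\Cl C_M}$-upset $\Cl U_L$ and $v\notin\Cl U_L$, each $u\in\Cl U_L$ satisfies $u\not\preccurlyeq_{\Cl C_M}v$, yielding $K_{u,v}\in\Cl C_M$ with $K_{u,v}\in u\setminus v$; compactness of $\Cl U_L$ gives finitely many $u_1,\dots,u_m$ with $\Cl U_L\subseteq\Cl U_{K_v}$ and $v\notin\Cl U_{K_v}$, where $K_v=\bigcup_i K_{u_i,v}$; then compactness of $(\Cl U_L)^\complement$ gives $v_1,\dots,v_n$ with $\Cl U_L=\bigcap_j\Cl U_{K_{v_j}}$, whence $L=\bigcap_j\bigcup_i K_{u_i,v_j}\in[\Cl C]$. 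With this argument inserted in place of your appeal to ``Priestley duality,'' the rest of your plan goes through: your easy direction of (3) (using that $\Cl U$ commutes with finite unions and intersections and that clopen upsets form a lattice) is fine, and your primary route for (2) — two closed quasi-orders satisfying the Priestley condition with the same clopen upsets coincide — is also valid once (3) is available for both $\Cl C_M$ and $[\Cl C]$.
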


\begin{proof}
  (\ref{item:admissible-sets-vs-quasi-orders-1}) Suppose that $\Cl
  C\subseteq\Cl D$. Let $u,v\in(\Cl B_{\mathrm{max}}^T)^\star$ be such
  that $u\preccurlyeq_{\Cl D_M}v$ and let $L\in\Cl C_M\cap u$. Since
  $\Cl C_M\subseteq\Cl D_M$, it follows that $L\in v$. This shows that
  $u\preccurlyeq_{\Cl C_M}v$.
  
  (\ref{item:admissible-sets-vs-quasi-orders-2}) Since $[\Cl C]$ is
  $M$-closed, by~(\ref{item:admissible-sets-vs-quasi-orders-1}) we
  conclude that $\preccurlyeq_{[\Cl C]}$ is contained
  in~$\preccurlyeq_{\Cl C_M}$. For the reverse inclusion suppose that
  $u,v\in(\Cl B_{\mathrm{max}}^T)^\star$ are such that
  $u\preccurlyeq_{\Cl C_M}v$ and let $L\in[\Cl C]\cap u$. Then $L$ is
  a finite union of finite intersections of members of~$\Cl C_M$.
  Since $u$ is an ultrafilter, one such intersection
  $L_1\cap\cdots\cap L_n$ (with the $L_i\in\Cl C_M$) belongs to $u$,
  so that each $L_i$ belongs to~$u$. As $u\preccurlyeq_{\Cl C_M}v$, it
  follows that each $L_i$ also belongs to $v$ and, therefore so do
  $L_1\cap\cdots\cap L_n$ and the larger $L$. This shows that
  $u\preccurlyeq_{[\Cl C]}v$.

  (\ref{item:admissible-sets-vs-quasi-orders-3}) Let $L$ be an
  arbitrary admissible subset of the topological algebra~$T$, so that
  $L=(\iota_{\Cl B_{\mathrm{max}}^T})^{-1}(\Cl U_L)$. Suppose first
  that $L\in[\Cl C]$. If $u\preccurlyeq_{\Cl C_M}v$ and $u\in\Cl U_L$,
  then $L\in u$ so that,
  by~(\ref{item:admissible-sets-vs-quasi-orders-2}), $L\in v$ and
  $v\in\Cl U_L$. This shows that $\Cl U_L$ is an $\preccurlyeq_{\Cl
    C_M}$-upset. Conversely, suppose that $\Cl U_L$ is an
  $\preccurlyeq_{\Cl C_M}$-upset, that is, $u\preccurlyeq_{\Cl C_M}v$
  and $L\in u$ implies $L\in v$. As we wish to show that $L\in[\Cl
  C]$, we may as well assume that both $\Cl U_L$ and its complement
  are nonempty. Given $v\in(\Cl U_L)^\complement$, each $u\in\Cl U_L$
  is such that $u\not\preccurlyeq_{\Cl C_M}v$ and so there exists
  $K_{u,v}\in\Cl C_M$ such that $K_{u,v}\in u\setminus v$. As $\Cl
  U_L$ is compact and the open sets $\Cl U_{K_{u,v}}$ with $u\in\Cl
  U_L$ cover $\Cl U_L$, there exist $u_1,\ldots,u_m\in\Cl U_L$ such
  that $\Cl U_L\subseteq\bigcup_{i=1}^m\Cl U_{K_{u_i,v}}=\Cl U_{K_v}$,
  where $K_v=\bigcup_{i=1}^mK_{u_i,v}$ and $v\in(\Cl
  U_{K_v})^\complement$. Since $(\Cl U_L)^\complement$ is also
  compact, there exist $v_1,\ldots,v_n$ such that
  \begin{displaymath}
    (\Cl U_L)^\complement
    =\bigcup_{j=1}^n(\Cl U_{K_{v_j}})^\complement
    =\biggl(\bigcap_{j=1}^n\Cl U_{K_{v_j}}\biggr)^\complement
    =\left(\Cl U_{\bigcap_{j=1}^nK_{v_j}}\right)^\complement.
  \end{displaymath}
  Hence
  $L=\bigcap_{j=1}^nK_{v_j}=\bigcap_{j=1}^n\bigcup_{i=1}^mK_{u_i,v_j}$
  belongs to~$[\Cl C]$.

  (\ref{item:admissible-sets-vs-quasi-orders-4}) The direct
  implication follows from
  (\ref{item:admissible-sets-vs-quasi-orders-1}) and
  (\ref{item:admissible-sets-vs-quasi-orders-2}). For the converse,
  assume that $\preccurlyeq_{\Cl D_M}$ is contained in
  $\preccurlyeq_{\Cl C_M}$ and let $L\in\Cl C_M$. As $\Cl U_L$ is an
  $\preccurlyeq_{\Cl C_M}$-upset, it is also an $\preccurlyeq_{\Cl
    D_M}$-upset. Hence $L\in[\Cl D]$
  by~(\ref{item:admissible-sets-vs-quasi-orders-3}).
\end{proof}

\begin{Cor}
  \label{c:duality}
  Let $\Cl C$ and $\Cl D$ be subsets of~$\Cl B_{\mathrm{max}}^T$. Then
  there is a (unique, onto) continuous order-preserving homomorphism
  $\varphi:\Stsynt(\Cl C)\to\Stsynt(\Cl D)$ such that
  $\varphi\circ\eta_{\Cl C}=\eta_{\Cl D}$ if and only if $\Cl
  D\subseteq[\Cl C]$.
\end{Cor}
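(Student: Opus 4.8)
The plan is to deduce the corollary by transporting everything through the canonical surjection from the ``top'' Stone algebra $(\Cl B_{\mathrm{max}}^T)^\star$. I would write $\psi_{\Cl C}\colon(\Cl B_{\mathrm{max}}^T)^\star\to\Stsynt(\Cl C)$ for the continuous surjective homomorphism $u\mapsto u\cap\langle\Cl C\rangle$ introduced in Section~\ref{sec:syntactic}; the remarks preceding Corollary~\ref{c:admissible-to-po} supply exactly the facts I will need, namely $\psi_{\Cl C}\circ\iota_{\Cl B_{\mathrm{max}}^T}=\eta_{\Cl C}$, the equivalence $u\preccurlyeq_{\Cl C_M}v\iff\psi_{\Cl C}(u)\le_{\Cl C}\psi_{\Cl C}(v)$, and hence also the kernel description $\psi_{\Cl C}(u)=\psi_{\Cl C}(v)\iff u\equiv_{\Cl C_M}v$; the same three facts hold with $\Cl D$ in place of $\Cl C$. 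I will also use that $\Stsynt(\Cl D)$ is compact Hausdorff and that $\le_{\Cl D}$ is a genuine partial order (Corollary~\ref{c:admissible-to-po}), and that $\eta_{\Cl C}$ has dense image. Finally, the whole combinatorial content is already packaged in Proposition~\ref{p:admissible-sets-vs-quasi-orders}(\ref{item:admissible-sets-vs-quasi-orders-4}), which says that $\Cl D\subseteq[\Cl C]$ iff $\preccurlyeq_{\Cl C_M}\subseteq\preccurlyeq_{\Cl D_M}$. So the corollary reduces to showing that a $\varphi$ as in the statement exists if and only if $\preccurlyeq_{\Cl C_M}\subseteq\preccurlyeq_{\Cl D_M}$.

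For the ``only if'' direction, given $\varphi$ I would first observe that $\varphi\circ\psi_{\Cl C}$ and $\psi_{\Cl D}$ are continuous maps into the Hausdorff space $\Stsynt(\Cl D)$ that agree on the dense set $\iota_{\Cl B_{\mathrm{max}}^T}(T)$ (both restrict there to $\eta_{\Cl D}$), whence $\varphi\circ\psi_{\Cl C}=\psi_{\Cl D}$. Then, if $u\preccurlyeq_{\Cl C_M}v$, the order equivalence gives $\psi_{\Cl C}(u)\le_{\Cl C}\psi_{\Cl C}(v)$, so order-preservation of $\varphi$ yields $\psi_{\Cl D}(u)\le_{\Cl D}\psi_{\Cl D}(v)$, i.e.\ $u\preccurlyeq_{\Cl D_M}v$. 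Thus $\preccurlyeq_{\Cl C_M}\subseteq\preccurlyeq_{\Cl D_M}$, and Proposition~\ref{p:admissible-sets-vs-quasi-orders}(\ref{item:admissible-sets-vs-quasi-orders-4}) finishes this implication.

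For the ``if'' direction, assuming $\preccurlyeq_{\Cl C_M}\subseteq\preccurlyeq_{\Cl D_M}$ (hence also $\equiv_{\Cl C_M}\subseteq\equiv_{\Cl D_M}$), I would define $\varphi$ by $\varphi(\psi_{\Cl C}(u))=\psi_{\Cl D}(u)$, using surjectivity of $\psi_{\Cl C}$. This is well defined: if $\psi_{\Cl C}(u)=\psi_{\Cl C}(u')$ then $u\equiv_{\Cl C_M}u'$ by the kernel description, hence $u\equiv_{\Cl D_M}u'$, hence $\psi_{\Cl D}(u)=\psi_{\Cl D}(u')$ again by the kernel description (for $\Cl D$). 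By construction $\varphi\circ\psi_{\Cl C}=\psi_{\Cl D}$, which immediately gives $\varphi\circ\eta_{\Cl C}=\eta_{\Cl D}$, surjectivity of $\varphi$, and that $\varphi$ is an $\Omega$-algebra homomorphism (since $\psi_{\Cl C}$ is a surjective homomorphism and $\psi_{\Cl D}$ a homomorphism). Order-preservation runs through the same chain of equivalences read backwards: $\psi_{\Cl C}(u)\le_{\Cl C}\psi_{\Cl C}(v)\Rightarrow u\preccurlyeq_{\Cl C_M}v\Rightarrow u\preccurlyeq_{\Cl D_M}v\Rightarrow\psi_{\Cl D}(u)\le_{\Cl D}\psi_{\Cl D}(v)$. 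Continuity follows because $\psi_{\Cl C}$, being a continuous surjection between compact Hausdorff spaces, is a topological quotient map and $\varphi\circ\psi_{\Cl C}=\psi_{\Cl D}$ is continuous; uniqueness is the usual density argument, two such homomorphisms agreeing on the dense set $\eta_{\Cl C}(T)$ of the Hausdorff space $\Stsynt(\Cl C)$.

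I do not expect a serious obstacle here, since Proposition~\ref{p:admissible-sets-vs-quasi-orders} carries the real weight. The step that most needs care is the bookkeeping that ties the three ``$\psi$-level'' facts (the $\eta$-commutation, the order equivalence, and the kernel description) together coherently for both $\Cl C$ and $\Cl D$, so that the single hypothesis $\preccurlyeq_{\Cl C_M}\subseteq\preccurlyeq_{\Cl D_M}$ is exactly what unlocks well-definedness, order-preservation, and continuity at once; in particular one should resist proving continuity of $\varphi$ by hand and instead exploit that $\psi_{\Cl C}$ is a quotient map, and one must not overlook that the kernel description of $\psi_{\Cl D}$ relies on $\Stsynt(\Cl D)$ being genuinely ordered rather than merely quasi-ordered.
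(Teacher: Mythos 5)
Your proof is correct and follows essentially the same route as the paper: the paper's own (one-sentence) proof likewise reduces the existence of $\varphi$ to the containment $\preccurlyeq_{\Cl C_M}\subseteq\preccurlyeq_{\Cl D_M}$ via the quotient description in Corollary~\ref{c:admissible-to-po} and then invokes Proposition~\ref{p:admissible-sets-vs-quasi-orders}(\ref{item:admissible-sets-vs-quasi-orders-4}). You have merely spelled out in detail the well-definedness, continuity, and density arguments that the paper leaves implicit.
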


\begin{proof} 
  In view of Corollary~\ref{c:admissible-to-po}, the existence of such
  a mapping $\varphi$ is equivalent to $\preccurlyeq_{\Cl C_M}$ being
  contained in~$\preccurlyeq_{\Cl D_M}$, which in turn is equivalent to
  $\Cl D\subseteq[\Cl C]$ by
  Proposition~\ref{p:admissible-sets-vs-quasi-orders}.
\end{proof}

\begin{Prop}
  \label{p:M-closed-0,1-sublattice-vs-preimage}
  Let $\varphi:T\to S$ be a Stone completion and let $\Cl L$ be a
  subset of~$\Cl P_{co}(S)$. Then the following equivalences hold:
  \begin{enumerate}
  \item\label{item:M-closed-0,1-sublattice-vs-preimage-1} $\Cl L$ is a
    $(0,1)$-sublattice of $\Cl P_{co}(S)$ if and only if
    $\varphi^{-1}(\Cl L)$ is a $(0,1)$-sublattice of~$\Cl P_{co}(T)$;
  \item\label{item:M-closed-0,1-sublattice-vs-preimage-2}
    $\Cl L$ is an M-closed subset of $\Cl P_{co}(S)$ if and only if
    $\varphi^{-1}(\Cl L)$ is an M-closed subset of~$\Cl P_{co}(T)$.
  \end{enumerate}
\end{Prop}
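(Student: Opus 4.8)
The plan is to exploit the fact that, since $\varphi$ has dense image, the assignment $K\mapsto\varphi^{-1}(K)$ is an \emph{injective} map from $\Cl P_{co}(S)$ into $\Cl P_{co}(T)$. Indeed, as recalled in the proof of Proposition~\ref{p:M-closed-in-Stone-completion}, for every clopen $K\subseteq S$ the set $\varphi^{-1}(K)$ is admissible in $T$ and $\overline{\varphi(\varphi^{-1}(K))}=K$, so $K\mapsto\overline{\varphi(\,\cdot\,)}$ is a left inverse of $K\mapsto\varphi^{-1}(K)$. In particular this map restricts to a bijection between $\Cl L$ and $\varphi^{-1}(\Cl L)$ which is order-reflecting, so that lattice-theoretic and $M$-closure properties transfer back and forth once we check that the relevant operations are respected.

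For part~(\ref{item:M-closed-0,1-sublattice-vs-preimage-1}), note that preimages commute with finite unions and intersections and satisfy $\varphi^{-1}(\emptyset)=\emptyset$ and $\varphi^{-1}(S)=T$. Hence if $\Cl L$ is a $(0,1)$-sublattice of $\Cl P_{co}(S)$, then for $\varphi^{-1}(K_1),\varphi^{-1}(K_2)\in\varphi^{-1}(\Cl L)$ we get $\varphi^{-1}(K_1)\cup\varphi^{-1}(K_2)=\varphi^{-1}(K_1\cup K_2)\in\varphi^{-1}(\Cl L)$ and likewise for intersection, while $\emptyset=\varphi^{-1}(\emptyset)$ and $T=\varphi^{-1}(S)$ belong to $\varphi^{-1}(\Cl L)$; so $\varphi^{-1}(\Cl L)$ is a $(0,1)$-sublattice of $\Cl P_{co}(T)$. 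Conversely, suppose $\varphi^{-1}(\Cl L)$ is a $(0,1)$-sublattice. Since $\emptyset\in\varphi^{-1}(\Cl L)$, there is $K\in\Cl L$ with $\varphi^{-1}(K)=\emptyset$, whence $K=\overline{\varphi(\emptyset)}=\emptyset$; similarly $S\in\Cl L$. Given $K_1,K_2\in\Cl L$, the clopen set $\varphi^{-1}(K_1)\cup\varphi^{-1}(K_2)=\varphi^{-1}(K_1\cup K_2)$ lies in $\varphi^{-1}(\Cl L)$, so $\varphi^{-1}(K_1\cup K_2)=\varphi^{-1}(K)$ for some $K\in\Cl L$; by injectivity of $K\mapsto\varphi^{-1}(K)$ on clopen sets, $K_1\cup K_2=K\in\Cl L$. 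The same argument with intersections in place of unions finishes part~(\ref{item:M-closed-0,1-sublattice-vs-preimage-1}).

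Part~(\ref{item:M-closed-0,1-sublattice-vs-preimage-2}) is nothing but the equivalence of conditions $(\ref{item:M-closed-in-Stone-completion-1})$ and $(\ref{item:M-closed-in-Stone-completion-3})$ in Proposition~\ref{p:M-closed-in-Stone-completion}, applied to $\Cl C=\Cl L$, so there is nothing new to prove. The only point requiring any care is the injectivity of $K\mapsto\varphi^{-1}(K)$ on clopen sets used in the converse half of part~(\ref{item:M-closed-0,1-sublattice-vs-preimage-1}), which rests entirely on the density of $\varphi(T)$ in $S$; everything else is a routine transfer of finite lattice operations across preimages.
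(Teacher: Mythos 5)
Your argument is correct and follows essentially the same route as the paper: part~(\ref{item:M-closed-0,1-sublattice-vs-preimage-1}) rests on the fact that $\varphi^{-1}$ restricts to an injective (lattice/Boolean algebra) homomorphism $\Cl P_{co}(S)\to\Cl P_{co}(T)$ thanks to the density of $\varphi(T)$, and part~(\ref{item:M-closed-0,1-sublattice-vs-preimage-2}) is exactly the equivalence of conditions (\ref{item:M-closed-in-Stone-completion-1}) and (\ref{item:M-closed-in-Stone-completion-3}) of Proposition~\ref{p:M-closed-in-Stone-completion}. You merely spell out the injectivity and the transfer of the $(0,1)$-lattice operations in more detail than the paper does, which is fine.
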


\begin{proof}
  As the set mapping $\varphi^{-1}$ restricts to an injective Boolean
  algebra homomorphism $\Cl P_{co}(S)\to\Cl P_{co}(T)$, we see that
  $\Cl L$ is a $(0,1)$-sublattice of $\Cl P_{co}(S)$ if and only if
  $\varphi^{-1}(\Cl L)$ is a $(0,1)$-sublattice of~$\Cl P_{co}(T)$.
  This proves (\ref{item:M-closed-0,1-sublattice-vs-preimage-1}) while
  (\ref{item:M-closed-0,1-sublattice-vs-preimage-2})~is given by
  Proposition~\ref{p:M-closed-in-Stone-completion}.
\end{proof}

The following result gives the structure of all Priestley stamps.

\begin{Prop}
  \label{p:stamps} 
  Let $\varphi:T\to S$ be a stamp and consider the subset $\Cl
  C=\varphi^{-1}\bigl(\Cl P_{uco}(S)\bigr)$ of\/~$\Cl
  B_{\mathrm{max}}^T$. Then $\Cl C$ is an M-closed sublattice of~$\Cl
  B_{\mathrm{max}}^T$ and there is a (unique) homomorphism
  $\delta:(S,{\le})\to(\Stsynt(\Cl C),{\le_{\Cl C}})$ of ordered
  topological algebras such that $\delta\circ\varphi=\eta_{\Cl C}$.
  Moreover, the following hold:
  \begin{enumerate}
  \item\label{item:stamps-1} the lattice $\Cl P_{uco}(S)$ generates
    the Boolean algebra $\Cl P_{co}(S)$ if and only if $\delta$ is an
    isomorphism of topological algebras;
  \item\label{item:stamps-2} the pair $(S,{\le})$ is a Priestley
    algebra if and only if $\delta$ is an isomorphism of ordered
    topological algebras.
  \end{enumerate}
\end{Prop}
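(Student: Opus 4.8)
The plan is to realise $\delta$ directly via Stone duality, using the embedding of $\langle\Cl C\rangle$ into $\Cl P_{co}(S)$ induced by $\varphi$, and then to read off both (1) and (2) from the single combinatorial identity $\{K_L:L\in\Cl C_M\}=\Cl P_{uco}(S)$ established below. That $\Cl C=\varphi^{-1}(\Cl P_{uco}(S))$ is an M-closed $(0,1)$-sublattice of~$\Cl P_{co}(T)$ is immediate from Lemma~\ref{l:Puco} and Proposition~\ref{p:M-closed-0,1-sublattice-vs-preimage}, and its members are admissible by \cite[Corollary~6.10]{Almeida&Klima:2024a}, so $\Cl C\subseteq\Cl B_{\mathrm{max}}^T$. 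For the morphism, I would put $\Cl B_\varphi=\varphi^{-1}(\Cl P_{co}(S))$, which is $\Omega$-finite by \cite[Corollary~6.10]{Almeida&Klima:2024a}; since $\Cl C\subseteq\Cl B_\varphi$ we get $\langle\Cl C\rangle\subseteq\Cl B_\varphi$, so each $L\in\langle\Cl C\rangle$ is $\varphi^{-1}(K_L)$ for a unique clopen $K_L\subseteq S$, and $L\mapsto K_L$ is a Boolean embedding $\langle\Cl C\rangle\hookrightarrow\Cl P_{co}(S)$. Define $\delta\colon S\to\Stsynt(\Cl C)=\langle\Cl C\rangle^\star$ by $\delta(s)=\{L\in\langle\Cl C\rangle:s\in K_L\}$; this is an ultrafilter of~$\langle\Cl C\rangle$, the relation $\delta^{-1}(\Cl U_L)=K_L$ makes $\delta$ continuous, and $\delta(\varphi(t))=\{L\in\langle\Cl C\rangle:t\in L\}=\eta_{\Cl C}(t)$. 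That $\delta$ is an $\Omega$-homomorphism is a density argument: $\delta\circ\varphi=\eta_{\Cl C}$ is one, so the two continuous maps $s\mapsto\delta(o_S(s_1,\dots,s_n))$ and $s\mapsto o_{\Stsynt(\Cl C)}(\delta(s_1),\dots,\delta(s_n))$ agree on the dense set $\varphi(T)^n\subseteq S^n$, hence everywhere. Uniqueness of~$\delta$ is the usual dense-image argument, and $\delta$ is onto because its image is compact, hence closed, and contains the dense set $\eta_{\Cl C}(T)$.

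The combinatorial core is $\{K_L:L\in\Cl C_M\}=\Cl P_{uco}(S)$. For $\subseteq$, write $L=f^{-1}(L_0)$ with $f\in M(T)$ and $L_0=\varphi^{-1}(K_0)$, $K_0\in\Cl P_{uco}(S)$; if $g\in M_\varphi(S)$ is the linear transformation of~$S$ induced by~$f$, then $\varphi\circ f=g\circ\varphi$ gives $L=\varphi^{-1}(g^{-1}(K_0))$, so $K_L=g^{-1}(K_0)$, which lies in~$\Cl P_{uco}(S)$ because that set is M-closed in~$S$ (Lemma~\ref{l:Puco}). For $\supseteq$, note that $\Cl C\subseteq\Cl C_M$ and $K_{\varphi^{-1}(K)}=K$ for every $K\in\Cl P_{uco}(S)$. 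Two consequences follow. First, $\delta$ is order-preserving: if $s\le s'$, then for $L\in\Cl C_M$ the set $K_L$ is an upset, so $s\in K_L$ forces $s'\in K_L$, i.e. $\delta(s)\le_{\Cl C}\delta(s')$; thus $\delta$ is a homomorphism of ordered topological algebras. Second, exactly as in Corollary~\ref{c:B(L)}, $\langle\Cl C\rangle$ is the Boolean subalgebra of~$\Cl P_{co}(T)$ generated by~$\Cl C_M$ (the Boolean algebra generated by~$\Cl C_M$ is M-closed and sits inside~$\Cl B_{\mathrm{max}}^T$, hence is $\Omega$-finite by Proposition~\ref{p:M-closed}).

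For (1): if $\delta$ is an isomorphism of topological algebras, then $\delta^{-1}$ is a Boolean isomorphism $\Cl P_{co}(\langle\Cl C\rangle^\star)\to\Cl P_{co}(S)$ sending $\Cl U_L$ to~$K_L$; since the domain is generated as a Boolean algebra by $\{\Cl U_L:L\in\Cl C_M\}$ (by the previous paragraph), its image $\Cl P_{co}(S)$ is generated by $\{K_L:L\in\Cl C_M\}=\Cl P_{uco}(S)$. Conversely, if $\Cl P_{uco}(S)$ generates~$\Cl P_{co}(S)$, then since $\delta$ is already surjective it suffices to prove it injective: if $s\ne s'$, then, $S$ being a Stone space, some clopen separates them, and expressing that clopen through a finite subfamily of~$\Cl P_{uco}(S)$ shows that some $K\in\Cl P_{uco}(S)$ separates $s$ and~$s'$; then $\varphi^{-1}(K)\in\Cl C\subseteq\langle\Cl C\rangle$ lies in exactly one of $\delta(s),\delta(s')$.

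For (2): a homomorphism of ordered topological algebras that is an isomorphism of the underlying topological algebras is an isomorphism of ordered topological algebras exactly when it also reflects the order. If $(S,{\le})$ is a Priestley algebra, then $\Cl P_{uco}(S)$ generates~$\Cl P_{co}(S)$ by Lemma~\ref{l:Priestley}, so $\delta$ is an isomorphism of topological algebras by~(1); and $\delta(s)\le_{\Cl C}\delta(s')$ forces $s\in K\Rightarrow s'\in K$ for all $K\in\Cl P_{uco}(S)$, whence $s\le s'$ by the Priestley condition, so $\delta$ reflects~$\le$. Conversely, if $\delta$ is an isomorphism of ordered topological algebras, then $(S,{\le})\cong(\Stsynt(\Cl C),{\le_{\Cl C}})$, which is a Priestley algebra by Corollary~\ref{c:admissible-to-po}, hence so is $(S,{\le})$. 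The one genuinely delicate step throughout is the identity $\{K_L:L\in\Cl C_M\}=\Cl P_{uco}(S)$ together with the verification that $\delta$ is a homomorphism; these rest on the induced-linear-transformation mechanism preceding Lemma~\ref{l:induced-linear-in-Stsynt} and on the M-closedness of $\Cl P_{uco}(S)$ from Lemma~\ref{l:Puco}, while everything else is a matter of assembling results already in hand.
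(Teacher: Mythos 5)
Your argument is correct, and its overall shape (work with the lattice $\Cl C$, dualize $\langle\Cl C\rangle$, tie ``$\Cl P_{uco}(S)$ generates $\Cl P_{co}(S)$'' to invertibility of $\delta$, get the Priestley case from Corollary~\ref{c:admissible-to-po}) matches the paper's. The execution differs in where the duality work is done: the paper obtains $\delta$ and the equivalence in item~(\ref{item:stamps-1}) by citing \cite[Corollary~7.4]{Almeida&Klima:2024a} twice, and settles the direct implication in item~(\ref{item:stamps-2}) via Proposition~\ref{p:admissible-sets-vs-quasi-orders}(\ref{item:admissible-sets-vs-quasi-orders-3}) (clopen upsets of $\Stsynt(\Cl C)$ pull back to $[\Cl C]=\Cl C$, and in a Priestley algebra the clopen upsets determine the order), whereas you reconstruct these ingredients inline: you define $\delta(s)=\{L\in\langle\Cl C\rangle: s\in K_L\}$ explicitly through the embedding $L\mapsto K_L$ of $\langle\Cl C\rangle$ into $\Cl P_{co}(S)$, verify the homomorphism and uniqueness by density, prove item~(\ref{item:stamps-1}) by a direct separation argument, and replace the appeal to Proposition~\ref{p:admissible-sets-vs-quasi-orders}(\ref{item:admissible-sets-vs-quasi-orders-3}) by the identity $\{K_L: L\in\Cl C_M\}=\Cl P_{uco}(S)$, whose two inclusions you justify correctly via the induced linear transformations (so that $K_{f^{-1}(L_0)}=g^{-1}(K_0)$, M-closed in $\Cl P_{uco}(S)$ by Lemma~\ref{l:Puco}) and via $K_{\varphi^{-1}(K)}=K$, using injectivity of $\varphi^{-1}$ on clopens from density. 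What your route buys is independence from the unpublished \cite[Corollary~7.4]{Almeida&Klima:2024a} at the cost of redoing a piece of that equivalence theorem; what the paper's route buys is brevity, since the existence/uniqueness/isomorphism criteria for $\delta$ and the description of the clopen upsets of $\Stsynt(\Cl C)$ are already available as black boxes. All the small steps you leave implicit (identity in $M(T)$ giving $\Cl C\subseteq\Cl C_M$, a continuous bijection of compact Hausdorff algebras being an isomorphism, order-reflection plus bijectivity giving an ordered isomorphism) are routine, so I see no gap.
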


\begin{proof}
  Since $\le$ is a stable relation on~$S$, if $s_1\le s_2$ in~$S$ and
  $g\in M(S)$ then $g(s_1)\le g(s_2)$. It follows that $\Cl
  P_{uco}(S)$ is $M$-closed, while this set is obviously a
  $(0,1)$-sublattice of~$\Cl P_{co}(S)$. By
  Proposition~\ref{p:M-closed-0,1-sublattice-vs-preimage}, as $\Cl C$
  is contained in the $\Omega$-finite Boolean algebra
  $\varphi^{-1}(\Cl P_{co}(S))$ (cf.\
  \cite[Corollary~6.10]{Almeida&Klima:2024a}), it follows that $\Cl C$
  is an M-closed $(0,1)$-sublattice of~$\Cl B_{\mathrm{max}}^T$,
  whence $\Cl C=[\Cl C]$. Thanks to
  \cite[Corollary~7.4]{Almeida&Klima:2024a}, there is a continuous
  homomorphism $\delta:S\to \langle\Cl C\rangle^\star=\Stsynt(\Cl C)$
  of Stone topological algebras such that
  $\delta\circ\varphi=\eta_{\Cl C}$. As the set function
  $\varphi^{-1}$ restricts to a Boolean algebra isomorphism $\Cl
  P_{co}(S)\to\langle\Cl C\rangle$, by the same
  \cite[Corollary~7.4]{Almeida&Klima:2024a}, we obtain the
  equivalence~(\ref{item:stamps-1}). It remains to establish the
  equivalence~(\ref{item:stamps-2}) where the reverse implication is
  given by Corollary~\ref{c:admissible-to-po}. By
  Proposition~\ref{p:admissible-sets-vs-quasi-orders}(\ref{item:admissible-sets-vs-quasi-orders-3}),
  we know that $\eta_{\Cl C}^{-1}\bigl(\Cl P_{uco}(\Stsynt(\Cl
  C))\bigr)=[\Cl C]=\Cl C$. Since, in a Priestley algebra, the clopen
  upsets completely determine the order, the direct implication
  in~(\ref{item:stamps-2}) also holds.
\end{proof}

In the special case where $\Cl C$ consists of a single admissible
subset of~$T$, we denote $(\Stsynt(\Cl C),{\le_{\Cl C}})$ by
$(\Stsynt(L),{\le_L})$, or $(\Stsynt_T(L),{\le_L^T})$ if it is
convenient to refer to the topological algebra $T$, and call it the
\emph{Priestley syntactical algebra of~$L$}. We may further add
reference to the signature, $(\Stsynt^\Omega(L),{\le_L^\Omega})$, if
more than one signature may be involved (as in
Section~\ref{sec:min-automata}). We also denote $\eta_{\Cl C}$ by
$\eta_L$ (or $\eta_L^\Omega$) and call it the \emph{Priestley
  completion of $T$ determined by~$L$}.

Recall that, for a subset $L$ of~$T$, the \emph{algebraic syntactic
  quasi-order} on $T$ is defined by
\begin{equation}
  \label{eq:algebraic-syntactic-qo}
  s\curlyeqprec_L t
  \quad\text{if }
  \forall f\in M(T)\ \bigl( s\in f^{-1}(L)\implies t\in f^{-1}(L)\bigr).
\end{equation}
As above, we may also denote this relation by $\curlyeqprec_L^\Omega$
when the signature considered is~$\Omega$. The relation
$\curlyeqprec_L$ is the largest stable quasi-order on $T$ for which
$L$ is an \emph{upset}. The associated equivalence relation
${\curlyeqprec_L}\cap{\curlyeqsucc_L}$ on $T$ is the \emph{syntactic
  equivalence} $\sigma_L$. The quotient algebra ordered algebra
$(T/\sigma_L,{\le})$ is called the \emph{ordered syntactic algebra}
of~$L$ and it is denoted $\synt(L)$ or $\synt_T(L)$, as may be more
convenient.

\begin{Prop}
  \label{p:syntactic-in-dual}
  Let $L$ be an admissible subset of\/~$T$. Then, for $s,t\in T$, the
  inequality $s\curlyeqprec_Lt$ holds if and only if so does
  $\eta_L(s)\le_L\eta_L(t)$. Hence, the congruence $\ker\eta_L$ is the
  syntactic congruence $\sigma_L$ of $L$ in $T$ and so the image
  $\eta_L(T)$ is isomorphic with the syntactic algebra $\synt(L)$. In
  particular, the syntactic algebra $\synt(L)$ has a natural structure
  of topological algebra, where a basis of the topology is given by
  all sets of the form $K/\sigma_L$ with $K\in\langle L\rangle$.
\end{Prop}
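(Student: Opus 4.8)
The plan is to identify the Priestley stamp $\eta_L \colon T \to (\Stsynt(L), {\le_L})$ with the ordered syntactic algebra via the general machinery already assembled. First I would unwind the definitions: by \eqref{eq:M-closure} with $\Cl C = \{L\}$ we have $\Cl C_M = \{f^{-1}(L) : f \in M(T)\}$, so the relation $\le_L$ on $\Stsynt(L) = \langle L\rangle^\star$ is exactly $\preccurlyeq_{\Cl C_M}$ restricted along $\varphi \colon (\Cl B_{\mathrm{max}}^T)^\star \to \langle L\rangle^\star$, and the algebraic syntactic quasi-order $\curlyeqprec_L$ of \eqref{eq:algebraic-syntactic-qo} is by inspection nothing but the pullback of $\preccurlyeq_{\Cl C_M}$ along $\iota_{\Cl B_{\mathrm{max}}^T}$. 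Concretely: for $s,t \in T$, the condition $\forall f\in M(T)\ (s\in f^{-1}(L) \Rightarrow t\in f^{-1}(L))$ says that every member of $\Cl C_M$ containing $s$ contains $t$, which by \eqref{eq:qo-for-M-closed-admissible} (applied with $\Cl C$ there equal to $\Cl C_M$) is precisely $\iota(s) \preccurlyeq_{\Cl C_M} \iota(t)$.

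Next I would invoke Corollary~\ref{c:admissible-to-po}: for the singleton family, it gives that $(\Stsynt(L), {\le_L})$ is isomorphic to the quotient $\bigl((\Cl B_{\mathrm{max}}^T)^\star/{\equiv_{\Cl C_M}}, {\preccurlyeq_{\Cl C_M}}\bigr)$, and that $\eta_L$ factors as this quotient map composed with $\iota_{\Cl B_{\mathrm{max}}^T}$. Combining this with the previous paragraph, the equivalence $\eta_L(s) \le_L \eta_L(t) \iff \iota(s) \preccurlyeq_{\Cl C_M} \iota(t) \iff s \curlyeqprec_L t$ follows; and $\eta_L(s) = \eta_L(t)$ iff $s \sigma_L t$, so $\ker \eta_L = \sigma_L$. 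Since $\eta_L$ is a continuous homomorphism with image $\eta_L(T)$, the first isomorphism theorem identifies $\eta_L(T)$ (with the subspace topology and order) with $T/\sigma_L = \synt(L)$, and the order on $\eta_L(T)$ pulled back along this isomorphism is exactly the induced partial order $\le$ on $\synt(L)$ — this is where I must check that the order $\le_L$ restricted to $\eta_L(T)$ matches the order of $\synt(L)$, but that is immediate from the displayed equivalence since $\le$ on $T/\sigma_L$ is defined as the relation induced by $\curlyeqprec_L$.

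For the final sentence on the topology of $\synt(L)$: transport the Stone topology of $\Stsynt(L) = \langle L\rangle^\star$ through the isomorphism. A basis for the topology of $\langle L\rangle^\star$ is given by the clopen sets $\Cl U_K^{\langle L\rangle} = \{u : K \in u\}$ with $K \in \langle L\rangle$, and these intersect the image $\eta_L(T)$; since $\eta_L^{-1}(\Cl U_K) = K$ (Stone duality) and $\ker\eta_L = \sigma_L$, the trace on $\eta_L(T) \cong \synt(L)$ is the image of $K$, i.e. the $\sigma_L$-saturated set $K/\sigma_L$. I expect the only real point requiring care — the main obstacle — is making the identification $\curlyeqprec_L = {\preccurlyeq_{\Cl C_M}}$ pulled back along $\iota$ airtight, namely verifying that "$L$ is admissible" lets us write every $f^{-1}(L)$ as a clopen set in $\Cl B_{\mathrm{max}}^T$ (so that Corollary~\ref{c:B(L)} applies and $\langle L\rangle$ really is generated by the $f^{-1}(L)$), and that the quotient in Corollary~\ref{c:admissible-to-po} is along the correct map; everything else is a bookkeeping translation between ultrafilter language and the syntactic quasi-order, with no genuinely new estimate.
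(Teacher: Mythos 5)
Your proposal is correct and follows essentially the same route as the paper: both proofs amount to unwinding the definitions of $\le_L$ (via $\{L\}_M=\{f^{-1}(L):f\in M(T)\}$ and Corollary~\ref{c:B(L)}) and of $\curlyeqprec_L$ to obtain the chain $\eta_L(s)\le_L\eta_L(t)\iff\forall f\in M(T)\ \bigl(s\in f^{-1}(L)\implies t\in f^{-1}(L)\bigr)\iff s\curlyeqprec_L t$, and then transporting the basic clopen sets $\Cl U_K$ with $K\in\langle L\rangle$ onto the image $\eta_L(T)$ to get the topology on $\synt(L)$. Your detour through $(\Cl B_{\mathrm{max}}^T)^\star$ and Corollary~\ref{c:admissible-to-po} is just a repackaging of the same computation, which the paper carries out directly in $\Stsynt(L)$.
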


\begin{proof}
  By definition of $\eta_L$ and
  Corollary~\ref{c:B(L)}, the following equivalences hold for all
  $s,t\in T$:
  \begin{align*}
    \eta_L(s)\le_L\eta_L(t)
    &\iff
      \forall K\in\langle L\rangle\ \bigl(s\in K \implies t\in K\bigr) \\
    &\iff
      \forall f\in M(T)\ \bigl(s\in f^{-1}(L) \implies t\in f^{-1}(L)\bigr) \\
    &\iff
      s \curlyeqprec_L t,
  \end{align*}
  which entails the equality $\ker\eta_L=\sigma_L$. All that remains
  to observe is that the induced topology on $\eta_L(T)$ admits the
  basis consisting of the sets of the form $\eta_L(K)=\Cl
  U_K\cap\eta_L(T)$ with $K\in\langle L\rangle$.
\end{proof}

The following result  gives yet another connection of $\Stsynt(L)$ with
syntactic algebras.

\begin{Prop}
  \label{p:Stsynt-as-styntactic-algebra}
  Let $T$ be a topological algebra, let $\Cl B=\Cl
  B_{\mathrm{max}}^T$, and let $L$ be an admissible subset of~$T$. Let
  $\delta:\Cl B^\star\to\Stsynt(L)$ be the continuous homomorphism
  such that $\delta\circ\iota_{\Cl B}=\eta_L$. Then the quasi-order
  $(\delta\times\delta)^{-1}({\le_L})$ coincides with the algebraic
  syntactic quasi-order of~$K=\Cl U^{\Cl B}_L$, so that the congruence
  $\ker\delta$ is the syntactic congruence of~$K$.
\end{Prop}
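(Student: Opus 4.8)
The plan is to unwind the definitions on both sides and match them via the Stone duality machinery already in place. Write $\varphi=\iota_{\Cl B}$, $K=\Cl U^{\Cl B}_L$, and recall that by Proposition~\ref{p:syntactic-in-dual} the map $\eta_L$ factors as $\delta\circ\varphi$. The algebraic syntactic quasi-order of $K$ in $\Cl B^\star$ is, by definition~\eqref{eq:algebraic-syntactic-qo}, the relation $u\curlyeqprec_K v$ meaning $g^{-1}(K)\ni u\implies g^{-1}(K)\ni v$ for all $g\in M(\Cl B^\star)$. The first move is to reduce, via Lemma~\ref{l:reduction-to-induced-linear-transformations}, the quantifier over all $g\in M(\Cl B^\star)$ to a quantifier over induced linear transformations $g\in M_\varphi(\Cl B^\star)$, since for each $h\in M(\Cl B^\star)$ there is $g\in M_\varphi(\Cl B^\star)$ with $h^{-1}(K)=g^{-1}(K)$ (here $K$ is clopen as required). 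Thus $u\curlyeqprec_K v$ iff for every $f\in M(T)$ with induced $g$, $u\in g^{-1}(K)\implies v\in g^{-1}(K)$.

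Next I would apply Lemma~\ref{l:induced-linear-in-Stsynt}, which says precisely that $u\in g^{-1}(K)$ iff $f^{-1}(L)\in u$ (using $K=\Cl U_L$ and that $\langle L\rangle$ is $\Omega$-finite). Hence $u\curlyeqprec_K v$ iff for every $f\in M(T)$, $f^{-1}(L)\in u\implies f^{-1}(L)\in v$, which is exactly the quasi-order $\preccurlyeq_{\Cl C_M}$ with $\Cl C=\{L\}$, i.e.\ $\preccurlyeq_{\langle L\rangle_M}$ in the notation of~\eqref{eq:qo-for-M-closed-admissible}, since $\Cl C_M=\{f^{-1}(L):f\in M(T)\}$. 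On the other hand, by the discussion preceding Corollary~\ref{c:admissible-to-po}, $u\preccurlyeq_{\Cl C_M}v$ is equivalent to $\varphi(u)\le_{\Cl C}\varphi(v)$ — here one must be slightly careful: the relevant statement is that $u\preccurlyeq_{(\{L\})_M}v$ holds iff $\delta(u)\le_L\delta(v)$, which is the content of the fact that $\le_L$ on $\Stsynt(L)$ is induced from $\preccurlyeq_{\Cl C_M}$ under the quotient $\delta$ (Corollary~\ref{c:admissible-to-po} identifies $(\Stsynt(L),\le_L)$ with $((\Cl B^\star)/{\equiv_{\Cl C_M}},\preccurlyeq_{\Cl C_M})$, and $\delta$ is exactly this quotient map). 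Combining, $u\curlyeqprec_K v$ iff $\delta(u)\le_L\delta(v)$ iff $(u,v)\in(\delta\times\delta)^{-1}({\le_L})$, which is the first assertion.

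The statement about the syntactic congruence of $K$ is then immediate: the syntactic equivalence $\sigma_K$ is ${\curlyeqprec_K}\cap{\curlyeqsucc_K}$, and from the first part this equals $(\delta\times\delta)^{-1}({\le_L})\cap(\delta\times\delta)^{-1}({\ge_L})=(\delta\times\delta)^{-1}({\le_L}\cap{\ge_L})$; since $\le_L$ is a partial order on $\Stsynt(L)$, the latter intersection is the diagonal, so $\sigma_K=(\delta\times\delta)^{-1}(\Delta)=\ker\delta$. The main obstacle, I expect, is the bookkeeping in the middle step: one must confirm that the reduction to induced transformations (Lemma~\ref{l:reduction-to-induced-linear-transformations}) applies to the particular clopen set $K$, that the hypothesis of Lemma~\ref{l:induced-linear-in-Stsynt} is met (i.e.\ that $L\in\langle L\rangle$ and $\langle L\rangle$ is $\Omega$-finite, which holds as $L$ is admissible), and that the identification of $\le_L$ as the pushforward of $\preccurlyeq_{\Cl C_M}$ along $\delta$ is used in the correct direction — everything else is a direct translation through the definitions.
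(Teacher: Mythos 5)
Your proof is correct and follows essentially the same route as the paper: the same chain of equivalences via Lemma~\ref{l:reduction-to-induced-linear-transformations}, Lemma~\ref{l:induced-linear-in-Stsynt}, and the identification of $\le_L$ as the quotient of $\preccurlyeq_{\{L\}_M}$ under $\delta$ (Corollary~\ref{c:admissible-to-po}), finishing with antisymmetry of~$\le_L$. The only bookkeeping slip is that Lemma~\ref{l:induced-linear-in-Stsynt} must be invoked for the $\Omega$-finite Boolean algebra $\Cl B=\Cl B_{\mathrm{max}}^T$ (whose dual space contains $u,v$ and whose clopen set is $K=\Cl U^{\Cl B}_L$), not for $\langle L\rangle$; its hypotheses hold there because $L$ is admissible, so the argument goes through unchanged.
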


\begin{proof}
  By \cite[Corollary~7.4]{Almeida&Klima:2024a}, there is indeed a
  continuous homomorphism $\delta$ such that the following diagram
  commutes, namely $\delta(u)=u\cap\langle L\rangle$:
  \begin{displaymath}
    \xymatrix{
      T
      \ar[r]^{\iota_{\Cl B}}
      \ar[d]_{\iota_{\langle L\rangle}}
      &
      \Cl B^\star
      \ar[ld]^\delta
      \\
      \Stsynt(L)
      &
    }
  \end{displaymath}
  Given $u,v\in\Cl B^\star$, the following equivalences hold:
  \begin{align*}
    u\curlyeqprec_K v
    &\iff
      \forall h\in M(\Cl B^\star)\ \left(u\in h^{-1}(K) \implies v\in
      h^{-1}(K)\right)\\
    &\iff
      \forall g\in M_{\iota_{\Cl B}}(\Cl B^\star)\ \left(u\in g^{-1}(K) \implies v\in
      g^{-1}(K)\right)\\
    &\iff
      \forall f\in M(T)\ \left(f^{-1}(L)\in u \implies f^{-1}(L)\in v\right)\\
    &\iff
      u\preccurlyeq_{\{L\}_M}v\\
    &\iff
      \delta(u)\le_L\delta(v),
  \end{align*}
  where the steps are justified respectively by the definition of
  $\curlyeqprec_K$,
  Lemma~\ref{l:reduction-to-induced-linear-transformations},
  Lemma~\ref{l:induced-linear-in-Stsynt}, and the definitions of
  $\preccurlyeq_{\{L\}_M}$ 
  and $\le_L$. The final conclusion follows from the
  fact that $\sigma_K$ is the intersection ${\curlyeqprec_K}\cap{\curlyeqsucc_K}$,
  which coincides with the kernel of~$\delta$ because of the above and
  the fact that $\le_L$ is a partial order.
\end{proof}

The next result exhibits a minimality property of the Priestley
completion

\begin{Prop}
  \label{p:ordered-recognizers}
  Let $\varphi:T\to S$ be a stamp where $\le$ denotes the order of~$S$
  and let $K\in\Cl P_{uco}(S)$ and $L=\varphi^{-1}(K)$. Then $\langle
  L\rangle_T$ is contained in $\varphi^{-1}\bigl(\Cl P_{co}(S)\bigr)$
  and the unique continuous homomorphism $\varepsilon:S\to\Stsynt(L)$
  such that $\varepsilon\circ\varphi=\eta_L$ preserves order.
\end{Prop}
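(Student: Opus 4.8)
The plan is to handle the containment and the existence/uniqueness of~$\varepsilon$ by invoking the results of~\cite{Almeida&Klima:2024a} recalled in Section~\ref{sec:admissible}, and then to reduce order preservation to the elementary fact that preimages of upsets under order-preserving maps are upsets.

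For the containment, note that $K$ is clopen (being a clopen upset), so $L=\varphi^{-1}(K)$ is a clopen subset of~$T$; moreover the Boolean algebra $\varphi^{-1}\bigl(\Cl P_{co}(S)\bigr)$ is $\Omega$-finite by \cite[Corollary~6.10]{Almeida&Klima:2024a}. Since it contains~$L$ and $\langle L\rangle_T$ is the smallest $\Omega$-finite Boolean subalgebra of~$\Cl P_{co}(T)$ containing~$L$, the inclusion $\langle L\rangle_T\subseteq\varphi^{-1}\bigl(\Cl P_{co}(S)\bigr)$ follows. Applying \cite[Corollary~7.4]{Almeida&Klima:2024a} to the $\Omega$-finite subalgebra $\langle L\rangle_T$ of $\varphi^{-1}\bigl(\Cl P_{co}(S)\bigr)$ then produces a continuous homomorphism $\varepsilon\colon S\to\langle L\rangle_T^\star=\Stsynt(L)$ with $\varepsilon\circ\varphi=\iota_{\langle L\rangle_T}=\eta_L$; uniqueness is immediate, as two such maps agree on the dense subset $\varphi(T)$ of the Hausdorff space~$S$.

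It remains to show that $\varepsilon$ preserves order. Given $f\in M(T)$, let $g\in M_\varphi(S)$ be the linear transformation of~$S$ induced by~$f$, so $\varphi\circ f=g\circ\varphi$. The crux is the identity $\varepsilon^{-1}\bigl(\Cl U_{f^{-1}(L)}\bigr)=g^{-1}(K)$. To see it, observe that $f^{-1}(L)\in\langle L\rangle_T$ by Corollary~\ref{c:B(L)}, so the Stone duality property of $\eta_L=\iota_{\langle L\rangle_T}$ gives $\eta_L^{-1}\bigl(\Cl U_{f^{-1}(L)}\bigr)=f^{-1}(L)$, while on the other hand $f^{-1}(L)=(\varphi\circ f)^{-1}(K)=(g\circ\varphi)^{-1}(K)=\varphi^{-1}\bigl(g^{-1}(K)\bigr)$; since $\eta_L=\varepsilon\circ\varphi$, both $\varepsilon^{-1}\bigl(\Cl U_{f^{-1}(L)}\bigr)$ and $g^{-1}(K)$ have the same $\varphi$-preimage, and $\varphi^{-1}$ is injective on clopen subsets of~$S$ because $\varphi$ has dense image. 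Now $g^{-1}(K)$ is a clopen $\le$-upset of~$S$ by Lemma~\ref{l:Puco}, as $g\in M(S)$ and $K\in\Cl P_{uco}(S)$. Hence, whenever $s_1\le s_2$ in~$S$ and $f^{-1}(L)\in\varepsilon(s_1)$, we have $s_1\in g^{-1}(K)$, so $s_2\in g^{-1}(K)$, so $f^{-1}(L)\in\varepsilon(s_2)$. As the sets $f^{-1}(L)$ with $f\in M(T)$ are exactly the members of $\{L\}_M$, the defining condition~\eqref{eq:po-for-admissible} of $\le_L$ yields $\varepsilon(s_1)\le_L\varepsilon(s_2)$.

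I expect the only real obstacle to be the displayed identity $\varepsilon^{-1}(\Cl U_{f^{-1}(L)})=g^{-1}(K)$: it is what translates the order of~$S$, via the induced linear transformations, into the order $\le_L$ on $\Stsynt(L)$. Once it is available, order preservation is essentially automatic, and the rest of the argument is routine bookkeeping with the machinery of~\cite{Almeida&Klima:2024a} already in place.
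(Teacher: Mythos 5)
Your proof is correct, but the order-preservation part takes a genuinely different route from the paper's. The containment $\langle L\rangle_T\subseteq\varphi^{-1}\bigl(\Cl P_{co}(S)\bigr)$ and the construction of $\varepsilon$ via \cite[Corollary~7.4]{Almeida&Klima:2024a} are exactly as in the paper (which, however, does not construct $\varepsilon$ directly: it sets $\Cl C=\varphi^{-1}\bigl(\Cl P_{uco}(S)\bigr)$, obtains an order-preserving $\delta:(S,{\le})\to(\Stsynt(\Cl C),{\le_{\Cl C}})$ from Proposition~\ref{p:stamps} and an order-preserving $\psi:\Stsynt(\Cl C)\to\Stsynt(L)$ from Corollary~\ref{c:duality}, since $L\in\Cl C\subseteq[\Cl C]$, and takes $\varepsilon=\psi\circ\delta$, so that order preservation comes for free from results already proved). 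You instead verify order preservation by hand: the key identity $\varepsilon^{-1}\bigl(\Cl U_{f^{-1}(L)}\bigr)=g^{-1}(K)$ for $f\in M(T)$ with induced $g\in M_\varphi(S)$ is correctly justified (both sides are clopen, have the same $\varphi$-preimage $f^{-1}(L)$ by Corollary~\ref{c:B(L)} and $\varphi\circ f=g\circ\varphi$, and $\varphi^{-1}$ is injective on clopen sets because $\varphi$ has dense image), $g^{-1}(K)$ is a clopen upset by stability of $\le$ (Lemma~\ref{l:Puco}), and the definition~\eqref{eq:po-for-admissible} of $\le_L$ then yields $\varepsilon(s_1)\le_L\varepsilon(s_2)$. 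What you gain is a more self-contained argument that bypasses Proposition~\ref{p:stamps} and Corollary~\ref{c:duality}, essentially redoing in this special case the translation between the order of~$S$ and $\le_L$ that those results (together with Lemma~\ref{l:induced-linear-in-Stsynt}) encapsulate; what the paper's factorization buys is brevity given the machinery in place and the extra information that $\varepsilon$ factors through the canonical stamp $\eta_{\Cl C}$. One minor nitpick: for uniqueness of $\varepsilon$ the Hausdorffness that matters is that of the codomain $\Stsynt(L)$, not of~$S$; both are Stone spaces, so nothing breaks.
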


\begin{proof}
  By assumption, the set $L$ belongs to the $\Omega$-finite Boolean
  subalgebra~$\varphi^{-1}\bigl(\Cl P_{co}(S)\bigr)$ of $\Cl
  P_{co}(T)$, from which we obtain the inclusion $\langle
  L\rangle_T\subseteq\varphi^{-1}\bigl(\Cl P_{co}(S)\bigr)$.
  
  Let $\Cl C=\varphi^{-1}\bigl(\Cl P_{uco}(S)\bigr)$ and let $\Cl
  D=\{L\}$. By Corollary~\ref{c:duality}, there is a continuous
  order-preserving homomorphism $\psi:\Stsynt(\Cl C)\to\Stsynt(\Cl
  D)=\Stsynt(L)$
  such that $\psi\circ\eta_{\Cl C}=\eta_L$. By
  Proposition~\ref{p:stamps}, there is a continuous order-preserving
  homomorphism $\delta:(S,{\le})\to(\Stsynt(\Cl C),{\le_{\Cl
      C}})$ such that $\delta\circ\varphi=\eta_{\Cl C}$, as in the
  following commutative diagram:
  \begin{displaymath}
    \xymatrix{
      T
      \ar[r]^\varphi
      \ar[rd]_{\eta_{\Cl C}}
      \ar[d]_{\eta_L}
      &
      S
      \ar[d]^\delta
      \\
      \Stsynt(L)
      &
      \Stsynt(\Cl C)
      \ar[l]^\psi
    }
  \end{displaymath}
  To complete
  the proof, it suffices to take $\varepsilon=\psi\circ\delta$.
\end{proof}

Let $\mathbf{OSC}_\Omega$ be the category of ordered Stone completions
of topological $\Omega$-algebras:
\begin{itemize}
\item the objects are the continuous homomorphisms $\varphi:U\to S$,
  where $U$ is a topological $\Omega$-algebra and $S$ is an ordered
  Stone topological $\Omega$-algebra  such that $\varphi(U)$ is dense
  in~$S$;
\item the morphisms $(\varphi:U\to S)\to(\varphi':U'\to S')$ are pairs
  $(\psi,\delta)$ such that $\varphi'\circ\psi=\delta\circ\varphi$,
  where $\psi:U\to U'$ and $\delta:S\to S'$ are continuous
  homomorphisms and $\delta$ preserves the order.
\end{itemize}
The subcategory $\mathbf{OSC}_\Omega(T)$ consists of the objects of
$\mathbf{OSC}_\Omega$ of the form $T\to S$ and whose morphisms $(T\to
S)\to(T\to S')$ have $\mathrm{id}_T$ as first component.

\begin{Cor}
  \label{c:St-C-vs-St-L}
  Let $\Cl C$ be a subset of~$\Cl B_{\mathrm{max}}^T$. Then the
  ordered Stone completion $\eta_{\Cl C}:T\to\Stsynt(\Cl C)$ is the
  product in the category $\mathbf{OSC}_\Omega(T)$ of the ordered
  syntactical Stone completions $\eta_L:T\to\Stsynt(L)$ with $L\in\Cl
  C$.
\end{Cor}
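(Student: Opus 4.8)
The plan is to verify the universal property of the product in $\mathbf{OSC}_\Omega(T)$ directly, using the characterization of morphisms out of $\Stsynt(\Cl C)$ supplied by Corollary~\ref{c:duality}. First I would observe that for each $L\in\Cl C$ there is a canonical morphism $\eta_{\Cl C}\to\eta_L$ in $\mathbf{OSC}_\Omega(T)$: since $\{L\}\subseteq\Cl C\subseteq[\Cl C]$, Corollary~\ref{c:duality} (applied with $\Cl D=\{L\}$) gives a unique continuous order-preserving homomorphism $\psi_L\colon\Stsynt(\Cl C)\to\Stsynt(L)$ with $\psi_L\circ\eta_{\Cl C}=\eta_L$, and $(\mathrm{id}_T,\psi_L)$ is the required morphism. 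This provides the projection family exhibiting $\eta_{\Cl C}$ as a cone over the $\eta_L$.

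Next I would check the uniqueness-of-mediating-morphism condition. Suppose $\varrho\colon T\to(R,{\le})$ is an object of $\mathbf{OSC}_\Omega(T)$ together with morphisms $(\mathrm{id}_T,\theta_L)\colon\varrho\to\eta_L$ for every $L\in\Cl C$, so that $\theta_L\circ\varrho=\eta_L$ for each $L$. I want a unique continuous order-preserving homomorphism $\theta\colon R\to\Stsynt(\Cl C)$ with $\theta\circ\varrho=\eta_{\Cl C}$ and $\psi_L\circ\theta=\theta_L$ for all $L$. The key is to pin down $R$ via Proposition~\ref{p:stamps}: here one must first note that, although $R$ need not be a Priestley algebra, we may replace $\varrho$ by the induced Priestley stamp without changing the relevant data — more carefully, I would argue that the existence of $\theta_L$ with $\theta_L\circ\varrho=\eta_L$ forces $L=\varrho^{-1}(\theta_L^{-1}(K_L))$ for a suitable clopen upset $K_L$ of $\Stsynt(L)$, so that $L\in\varrho^{-1}(\Cl P_{uco}(R))$. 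Hence each $L\in\Cl C$ lies in $\Cl C':=\varrho^{-1}(\Cl P_{uco}(R))$, which by Proposition~\ref{p:stamps} is an $M$-closed sublattice of $\Cl B_{\mathrm{max}}^T$ with $\Cl C'=[\Cl C']$; consequently $\Cl C\subseteq[\Cl C']$, and Corollary~\ref{c:duality} yields a unique continuous order-preserving homomorphism $\Stsynt(\Cl C')\to\Stsynt(\Cl C)$ over $T$. Composing with the canonical $\delta\colon R\to\Stsynt(\Cl C')$ from Proposition~\ref{p:stamps} gives the desired $\theta$.

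For the compatibility $\psi_L\circ\theta=\theta_L$ and for uniqueness of $\theta$, I would use that all the maps in sight are continuous homomorphisms agreeing on the dense image of $T$: indeed $\psi_L\circ\theta\circ\varrho=\psi_L\circ\eta_{\Cl C}=\eta_L=\theta_L\circ\varrho$, and since $\varrho(T)$ is dense in $R$ and $\Stsynt(L)$ is Hausdorff, $\psi_L\circ\theta=\theta_L$; the same density argument forces any two mediating morphisms to coincide. The main obstacle I anticipate is precisely the step that $\Cl C$ lands inside $[\Cl C']$ — that is, extracting from the mere existence of the $\theta_L$ the fact that each $L$ is the $\varrho$-preimage of a clopen \emph{upset} of $R$ rather than just of a clopen set; this requires unwinding that $\theta_L$ is order-preserving and that $\eta_L^{-1}(\Cl P_{uco}(\Stsynt(L)))=\langle L\rangle$-related data contains $L$ itself, via Proposition~\ref{p:admissible-sets-vs-quasi-orders}(\ref{item:admissible-sets-vs-quasi-orders-3}) and Proposition~\ref{p:syntactic-in-dual}. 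Once that is in place, the rest is the standard dense-image rigidity argument.
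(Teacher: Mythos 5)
Your proof is correct, but the way you verify the universal property differs from the paper's argument. For the projections the two routes are essentially interchangeable (you invoke Corollary~\ref{c:duality} with $\Cl D=\{L\}$, the paper invokes Proposition~\ref{p:ordered-recognizers}); the real divergence is in the mediating morphism. The paper works concretely: it assembles the projections into a map $\pi\colon\Stsynt(\Cl C)\to\prod_{L\in\Cl C}\Stsynt(L)$, checks via the definition of $\le_{\Cl C}$ that $\pi$ is an embedding of ordered Stone topological algebras whose image is the closure of the diagonal image of $T$, and then, given a cone $(\delta_L)$ from a stamp $\varphi\colon T\to S$, observes that $\delta=(\delta_L)_L$ has the same image as $\pi$ (by density of $\varphi(T)$) and takes $\pi^{-1}\circ\delta$ as the mediating map. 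You instead stay on the language side: from the cone $(\theta_L)$ you deduce $\Cl C\subseteq\Cl C'=\varrho^{-1}(\Cl P_{uco}(R))$ (correctly, since $L=\eta_L^{-1}(\Cl U_L)$ with $\Cl U_L$ a clopen upset and $\theta_L$ is order-preserving and continuous), then factor $\eta_{\Cl C}$ through $R$ by composing the map $\delta\colon R\to\Stsynt(\Cl C')$ of Proposition~\ref{p:stamps} with the comparison map $\Stsynt(\Cl C')\to\Stsynt(\Cl C)$ of Corollary~\ref{c:duality}, and finish by the dense-image/Hausdorff rigidity argument, which indeed also gives uniqueness (morphisms in $\mathbf{OSC}_\Omega(T)$ are determined on $\varrho(T)$). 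Your route is more conceptual and shorter, leaning entirely on the already-established correspondence between M-closed $(0,1)$-sublattices and Priestley stamps; the paper's route buys an explicit description of $\Stsynt(\Cl C)$ as the closed subalgebra of $\prod_{L\in\Cl C}\Stsynt(L)$ generated by the diagonal image of $T$, which is the form in which the corollary is actually used later (e.g.\ to see $\eta_{\Cl V(T)}$ as a $P$-product of the $\eta_L$ in Proposition~\ref{p:Galois-connection}). One small remark: your aside about replacing $\varrho$ by "the induced Priestley stamp" is unnecessary, since Proposition~\ref{p:stamps} applies to an arbitrary stamp; your subsequent argument does not in fact use any such replacement.
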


\begin{proof}
  By Proposition~\ref{p:ordered-recognizers}, for each $L\in\Cl C$ we
  have in $\mathbf{OSC}_\Omega(T)$ a morphism
  $(\mathrm{id}_T,\pi_L):\eta_{\Cl C}\to\eta_L$, which means that
  $\pi_L$ is an order preserving continuous homomorphism such that the
  following diagram commutes:
  \begin{equation}
    \label{eq:St-C-vs-St-L-1}
    \begin{split}
    \xymatrix{
     T
     \ar[d]^{\eta_L}
     &
     T
     \ar[l]_{\mathrm{id}_T}
     \ar[d]^{\eta_{\Cl C}}
     \\
     \Stsynt(L)
     &
     \Stsynt(\Cl C)
     \ar[l]_{\pi_L}
    }      
    \end{split}
  \end{equation}
  Note that, for $u,v\in\Stsynt(\Cl C)$, the following equivalence
  holds:
  \begin{displaymath}
    u\le_{\Cl C}v
    \iff
    \forall L\in\Cl C,\ \pi_L(u)\le_L\pi_L(v). 
  \end{displaymath}
  Hence, the mapping
  \begin{align*}
    \pi:\Stsynt(\Cl C)&\to\prod_{L\in\Cl C}\Stsynt(L)\\
    u&\mapsto \bigl(\pi_L(u)\bigr)_{L\in\Cl C}
  \end{align*}
  is an embedding of ordered Stone topological algebras. Since
  Diagram~(\ref{eq:St-C-vs-St-L-1}) commutes for every $L\in\Cl C$,
  the image of~$\pi$ is the closure of the set
  \begin{displaymath}
    X
    =\pi\bigl(\eta_{\Cl C}(T)\bigr)
    =\left\{\bigl(\eta_L(t)\bigr)_{L\in\Cl C}: t\in T\right\}.
  \end{displaymath}
  
  Suppose that $\varphi:T\to S$ is an ordered Stone completion and,
  for each $L\in\Cl C$, $(\mathrm{id}_T,\delta_L)$ is a morphism
  $\varphi\to\eta_L$. Then $\delta(s)=\bigl(\delta_L(s)\bigr)_{L\in\Cl
    C}$ defines an order preserving continuous homomorphism
  $\delta:S\to\prod_{L\in\Cl C}\Stsynt(L)$. Since
  $\delta_L\circ\varphi=\eta_L$ for each $L\in\Cl C$ and the image
  of~$\varphi$ is dense in~$S$, it follows that the image of $\delta$
  coincides with the image of~$\pi$. This leads to the following
  commutative diagram:
  \begin{displaymath}
    \xymatrix{
      T
      \ar[dd]_\varphi
      \ar[r]^{\mathrm{id}_T}
      &
      T
      \ar[d]_{\eta_L}
      &
      T
      \ar[l]_{\mathrm{id}_T}
      \ar[r]^{\mathrm{id}_T}
      \ar[d]_{\eta_{\Cl C}}
      &
      T
      \ar[d]_{\pi\circ\eta_{\Cl C}}
      \\
      &
      \Stsynt(L)
      &
      \Stsynt(\Cl C)
      \ar[l]_{\pi_L}
      \ar[r]^(.55)\pi_(.55){\simeq}
      &
      \overline{X}
      \ar@{^{(((}->}[d]
      \\
      S
      \ar[ru]^(.6){\delta_L}
      \ar@/_2.5mm/[rru]^(.6){\pi^{-1}\circ\delta}
      \ar[rrr]^\delta
      &&&
      \prod_{L\in\Cl C}\Stsynt(L)
    }      
  \end{displaymath}
  Thus the morphisms $(\mathrm{id}_T,\pi_L)$ ($L\in\Cl C$) have the
  required universal property to define a product in the category
  $\mathbf{OSC}_\Omega(T)$.
\end{proof}

\section{An extension of Eilenberg's correspondence}
\label{sec:Eilenberg}

Let $\mathfrak{C}$ be a subcategory of the category of all topological
$\Omega$-algebras where morphisms are continuous homomorphisms between
them. Whenever we have some morphisms in the category $\mathfrak{C}$,
this leads to extra conditions on certain classes we consider in this
section and which are parameterized by the category $\mathfrak{C}$.
For that reason, we emphasize the case when there are no other
morphisms than those necessarily needed, thus we denote
$\mathfrak{C}_0$ the subcategory of $\mathfrak{C}$ with the same
objects and identities on the objects as unique morphisms.

A \emph{$\mathfrak{C}$-prevariety of languages} is a contravariant
functor \Cl V from $\mathfrak{C}$ to the category of $(0,1)$-lattices
such that:
\begin{enumerate}
\item\label{item:def-P-variety-languages-1} for each object $T$
  in~$\mathfrak{C}$, $\Cl V(T)$ is a $(0,1)$-sublattice 
  of~$\Cl B_{\mathrm{max}}^T$;
\item\label{item:def-P-variety-languages-2} for each morphism
  $\psi\in\mathfrak{C}(T,T')$, the corresponding $(0,1)$-lattice 
  homomorphism $\Cl V(\psi):\Cl V(T')\to\Cl V(T)$ is the mapping
  $\Cl V(\psi)=\psi^{-1}|_{\Cl V(T')}$.
\end{enumerate}
The second condition means that the choices of $\Cl V(T')$ and $\Cl
V(T)$ have to be done in such a way that for every
$\psi\in\mathfrak{C}(T,T')$ and $L\in \Cl V(T')$ we have
$\psi^{-1}(L)\in \Cl V(T)$. Thus, we say informally that $\Cl V$ is
closed under morphism preimages. Moreover, the mapping $\Cl V(\psi)$
is always a $(0,1)$-lattice homomorphism if it is defined, and the
condition~(\ref{item:def-P-variety-languages-2}) immediately implies
that the considered \Cl V is indeed a functor. Altogether, to see that
a given $\Cl V$ is a $\mathfrak{C}$-variety of languages, one needs to
check that every $\Cl V(T)$ is a $(0,1)$-sublattice of~$\Cl
B_{\mathrm{max}}^T$ and that the system $(\Cl V(T))_{T\in
  \mathfrak{C}}$ is closed under preimages for morphisms
in~$\mathfrak{C}$. Finally, the \emph{$\mathfrak{C}_0$-prevariety of
  languages} is just a collection $(\Cl V(T))_{T\in\mathfrak{C}_0}$ of
$(0,1)$-sublattices of the corresponding Boolean algebras~$\Cl
B_{\mathrm{max}}^T$. By the definition, the restriction
to~$\mathfrak{C}_0$ of every $\mathfrak{C}$-prevariety of languages is
also a $\mathfrak{C}_0$-prevariety of languages.

We denote by $\mathbb{L}_0^{\mathfrak{C}}$ the class of all
$\mathfrak{C}_0$-prevarieties of languages, and by
$\mathbb{L}^{\mathfrak{C}}$ its subclass of all
$\mathfrak{C}$-prevarieties of languages.

Paraphrasing \cite{Pin&Straubing:2005}, we call an object $\varphi$ of
$\osc$ a \emph{stamp}, and if $\varphi : T \to S$ for
$T\in\mathfrak{C}$ then we call $\varphi$ a
\emph{$\mathfrak{C}$-stamp}. Recall that, in this situation, $T$ is a
topological algebra, $(S,\le)$ is an ordered Stone topological
algebra, and $\varphi$ is a continuous homomorphism such that
$\varphi(T)$ is dense subset of $S$. Moreover, we call a stamp
$\varphi : T \to S$ a \emph{Priestley stamp} if the ordered Stone
topological algebra $(S,\le)$ satisfies the Priestley condition. For
$\Cl C\subseteq \Cl B_{\mathrm{max}}^{T}$, that is a set of admissible
languages over $T\in \mathfrak{C}$, we call the continuous
homomorphism $\eta_{\Cl C}: T \to \Stsynt(\Cl C)$ into the ordered
Stone syntactical algebra $(\Stsynt(\Cl C),{\le_{\Cl C}})$ of $\Cl C$
the \emph{syntactical stamp} of~$\Cl C$. Every syntactical stamp is a
Priestley stamp by Corollary~\ref{c:admissible-to-po}.

Given a class \Cl S of $\mathfrak{C}$-stamps, consider the following
associated classes:
\begin{itemize}
\item the class $H\Cl S$ consists of all $\mathfrak{C}$-stamps of the
  form $\alpha\circ\varphi: T \to S$ where $\varphi: T \to S'$ belongs
  to~\Cl S and $\alpha:(S',\le)\to (S,\le)$ is an onto morphism
  between ordered Stone topological algebras -- in this case we call
  the stamp $\alpha\circ\varphi$ a \emph{homomorphic image of the
    stamp $\varphi$};
\item the class $P\Cl S$ consists of all $\mathfrak{C}$-stamps of the
  form $\prod_{i\in I}\varphi_i: T \to
  \overline{\mathrm{Im}(\prod_{i\in I}\varphi_i)}$, where 
  $\varphi_i: T\to S_i$ belongs to~\Cl S with $(S_i,\le)$ 
  being an ordered Stone topological algebra (for every $i\in I$),
  and $(\prod_{i\in I}\varphi_i)(w)=(\varphi_i(w))_{i\in I}\in 
  \prod_{i\in I}S_i$ for $w\in T$ -- we call the stamp 
  $\prod_{i\in I}\varphi_i$ the \emph{product} of the stamps $\varphi_i$;
\item the class $S_{\mathfrak{C}}\Cl S$ consists of all
  $\mathfrak{C}$-stamps of the form $\varphi\circ\psi: T \to
  \overline{\mathrm{Im} (\varphi\circ\psi)}$ where $\varphi: T'\to S$
  is a member of~\Cl S, $(S,\le)$ is an ordered Stone topological
  algebra, $\psi\in\mathfrak{C}(T,T')$, and
  $(\overline{\mathrm{Im}(\varphi\circ\psi)},\le)$ is viewed as a
  subalgebra of $S$ with the induced order.
\end{itemize}
A \emph{$\mathfrak{C}_0$-prevariety of stamps} is a class $\vst{V}$ of
$\mathfrak{C}$-stamps that is closed under the operators $H$ and $P$. 
A \emph{$\mathfrak{C}$-prevariety of stamps} is a $\mathfrak{C}_0$-prevariety 
of stamps which is also closed under the operator $S_{\mathfrak{C}}$.

We denote by $\mathbb S_0^{\mathfrak{C}}$ the class of all 
$\mathfrak{C}_0$-prevarieties of stamps and by $\mathbb S^{\mathfrak{C}}$ 
its subclass formed by all $\mathfrak{C}$-prevarieties of stamps. 

Let $\vst{P}_{\mathfrak{C}}$ denote the class of all Priestley
$\mathfrak{C}$-stamps. Notice that $\vst{P}_{\mathfrak{C}}$ is not a
$\mathfrak{C}_0$-prevariety because it need not be closed under
homomorphic images. Nevertheless, it is closed under the other two 
fundamental operators as the next lemma states.

\begin{Lemma} 
  \label{l:closure-properties-of-priestley-stamps}
  The class $\vst P_{\mathfrak{C}}$ 
  is closed under the operators $P$ and $S_{\mathfrak{C}}$. 
\end{Lemma}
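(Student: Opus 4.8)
The plan is to reduce both closure properties to two elementary stability facts about Priestley $\Omega$-algebras: (i) a closed subalgebra of a Priestley $\Omega$-algebra, equipped with the restriction of the order, is again a Priestley $\Omega$-algebra; and (ii) an arbitrary product of Priestley $\Omega$-algebras, equipped with the coordinatewise order, is a Priestley $\Omega$-algebra. The point is that both operators in the statement produce the closure of an image which sits as a closed subalgebra either of a product of the given ordered Stone algebras (the case of $P$) or of a single given one (the case of $S_{\mathfrak{C}}$); once (i) and (ii) are available, the lemma follows immediately.

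For (i), I would take a Priestley $\Omega$-algebra $(S,{\le})$ and a closed subalgebra $S'$, and set ${\le'}={\le}\cap(S')^2$. Then $S'$ is Stone as a closed subspace of a Stone space and a topological $\Omega$-algebra for the induced operations; $\le'$ is a partial order, stable as a restriction of a stable relation, and closed in $(S')^2$ since it is the intersection of the closed set $\le$ with $(S')^2$. For the Priestley condition, if $s,t\in S'$ and $s\not\le' t$, then $s\not\le t$, so there is $L\in\Cl P_{uco}(S)$ with $s\in L$ and $t\notin L$, and $L\cap S'$ is then a clopen $\le'$-upset separating $s$ from $t$. For (ii), with $S=\prod_{i\in I}S_i$ and the coordinatewise order, $S$ is Stone by Tychonoff's theorem together with the fact that a product of zero-dimensional spaces is zero-dimensional, it is a topological $\Omega$-algebra under coordinatewise operations, and the coordinatewise order is a stable partial order which is closed because it equals $\bigcap_{i\in I}(\pi_i\times\pi_i)^{-1}({\le_i})$, an intersection of preimages of closed relations under continuous projections. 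The Priestley condition is verified by choosing, for $(s_i)_{i\in I}\not\le(t_i)_{i\in I}$, an index $j$ with $s_j\not\le_j t_j$ and $L_j\in\Cl P_{uco}(S_j)$ separating $s_j$ and $t_j$: then $\pi_j^{-1}(L_j)$ is a clopen $\le$-upset of $S$ doing the separation.

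Granting (i) and (ii), closure under $S_{\mathfrak{C}}$ is immediate: for a Priestley $\mathfrak{C}$-stamp $\varphi:T'\to S$ and $\psi\in\mathfrak{C}(T,T')$, the composite $\varphi\circ\psi:T\to S$ is a continuous homomorphism whose image is dense in the closed subalgebra $\overline{\mathrm{Im}(\varphi\circ\psi)}$, and the latter is a Priestley $\Omega$-algebra by (i). For closure under $P$, given Priestley $\mathfrak{C}$-stamps $\varphi_i:T\to S_i$ ($i\in I$), the universal property of the product gives a continuous homomorphism $\prod_{i\in I}\varphi_i:T\to\prod_{i\in I}S_i$ into the Priestley $\Omega$-algebra $\prod_{i\in I}S_i$ (by (ii)), hence $\overline{\mathrm{Im}(\prod_{i\in I}\varphi_i)}$ is a Priestley $\Omega$-algebra by (i); in both cases the relevant image is dense in its closure, so all the stamp axioms hold and we obtain a Priestley $\mathfrak{C}$-stamp.

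I do not anticipate a genuine obstacle: the only points needing a bit of care are checking that the coordinatewise order is closed and stable and that clopen upsets restrict well along closed inclusions and pull back well along projections, all of which are routine. One could alternatively invoke the standard fact from Priestley duality that closed subspaces and products of Priestley spaces are Priestley spaces, leaving only the compatibility with the $\Omega$-algebra structure to check, but the direct verification is short enough that the shortcut seems unnecessary.
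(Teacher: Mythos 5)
Your proposal is correct and follows essentially the same route as the paper: the paper's proof simply observes that the Priestley condition concerns only the codomain and invokes the standard facts that closed subspaces and products of Priestley spaces are again Priestley spaces, which are exactly the facts (i) and (ii) you verify in detail. Your extra verifications (closedness and stability of the coordinatewise order, separation via $\pi_j^{-1}(L_j)$ and $L\cap S'$) are sound but are left implicit in the paper.
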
 
\begin{proof}
  For a stamp $\varphi : T \to S$, the Priestley condition 
  is solely a property of the ordered Stone topological algebra $(S,\le)$. 
  Thus the class $\vst P_{\mathfrak{C}}$ is closed under the operator 
  $S_\mathfrak{C}$, because a closed subspace of a Priestley space 
  (with induced order) is a Priestley space.  

  Similarly, the product of Priestley spaces is a Priestley space. 
  Thus the class  $\vst P_{\mathfrak{C}}$ is closed under the operator 
  $P$ which is defined by product and closed subalgebra constructions.
\end{proof}

We would like to obtain an appropriate formula for the smallest class
closed under the above operators which contains a given class. This is
done using the usual pointwise order between operators: given two
operators $O$ and $R$ on classes of $\mathfrak{C}$-stamps, we write
$O\le R$ if $O\Cl S\subseteq R\Cl S$ for every class \Cl S of
$\mathfrak{C}$-stamps. The operators can be composed in the natural
way, that is $RO\Cl S=R(O\Cl S)$. It is then routine to check the
following inequalities:
\begin{align*}
  S_{\mathfrak{C}}S_{\mathfrak{C}} &\le S_{\mathfrak{C}}
  &                                     
  S_{\mathfrak{C}}H&\le HS_{\mathfrak{C}}
  &
  S_{\mathfrak{C}}P&\le PS_{\mathfrak{C}}\\
  HH&\le H
  &
  PP&\le HP
  &
  PH &\le HP.
\end{align*}
These observations yield the following result.

\begin{Lemma}
  \label{l:C-variety}
  The smallest $\mathfrak{C}_0$-prevariety of stamps containing a given 
  class \Cl S of $\mathfrak{C}$-stamps is the class $HP\Cl S$.
  The smallest $\mathfrak{C}$-prevariety of stamps containing a given 
  class \Cl S of $\mathfrak{C}$-stamps is the class $HPS_{\mathfrak{C}}\Cl S$.
\end{Lemma}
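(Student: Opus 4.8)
The plan is to prove both assertions of Lemma~\ref{l:C-variety} by the standard argument that reduces closure under a finite list of operators to a single normal form, using the pointwise inequalities displayed just before the statement. First I would observe that all the operators $H$, $P$, $S_{\mathfrak C}$ are \emph{extensive} (they contain the identity, since a class is closed under taking the empty product, the one-factor product, trivial homomorphic images, and, for $S_{\mathfrak C}$, the substitution by $\mathrm{id}_T$) and \emph{monotone} in the usual sense. Consequently, for the first assertion it suffices to show (i) that $HP\Cl S$ is a $\mathfrak C_0$-prevariety of stamps, i.e.\ $H(HP\Cl S)\subseteq HP\Cl S$ and $P(HP\Cl S)\subseteq HP\Cl S$, and (ii) that $HP\Cl S$ is contained in every $\mathfrak C_0$-prevariety containing $\Cl S$.

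For (i), the inclusion $HHP\le HP$ is immediate from $HH\le H$. For $PHP$, the chain $PHP\le HPP\le HHP\le HP$ does the job, where the first step uses $PH\le HP$ applied inside the outer $P\cdots$, the second uses $PP\le HP$, and the third uses $HH\le H$. Thus $HP\Cl S$ is closed under $H$ and $P$. For (ii), let $\vst V$ be any $\mathfrak C_0$-prevariety of stamps with $\Cl S\subseteq\vst V$; by monotonicity $P\Cl S\subseteq P\vst V\subseteq\vst V$ and then $HP\Cl S\subseteq H\vst V\subseteq\vst V$, so $HP\Cl S$ is contained in $\vst V$. Combining (i) and (ii), $HP\Cl S$ is the smallest $\mathfrak C_0$-prevariety of stamps containing $\Cl S$.

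For the second assertion I would argue analogously with the word $HPS_{\mathfrak C}$. Extensivity of $S_{\mathfrak C}$ gives $\Cl S\subseteq S_{\mathfrak C}\Cl S\subseteq HPS_{\mathfrak C}\Cl S$, and minimality follows exactly as in (ii): any $\mathfrak C$-prevariety $\vst V$ containing $\Cl S$ satisfies $S_{\mathfrak C}\Cl S\subseteq\vst V$, then $PS_{\mathfrak C}\Cl S\subseteq\vst V$, then $HPS_{\mathfrak C}\Cl S\subseteq\vst V$. The real content is closure: I must check $H(HPS_{\mathfrak C}\Cl S)$, $P(HPS_{\mathfrak C}\Cl S)$ and $S_{\mathfrak C}(HPS_{\mathfrak C}\Cl S)$ are all contained in $HPS_{\mathfrak C}\Cl S$. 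The first two are handled as before once one notes $HHPS_{\mathfrak C}\le HPS_{\mathfrak C}$ and $PHPS_{\mathfrak C}\le HPPS_{\mathfrak C}\le HHPS_{\mathfrak C}\le HPS_{\mathfrak C}$. For the third, the computation $S_{\mathfrak C}HPS_{\mathfrak C}\le HS_{\mathfrak C}PS_{\mathfrak C}\le HPS_{\mathfrak C}S_{\mathfrak C}\le HPS_{\mathfrak C}$ uses, in order, $S_{\mathfrak C}H\le HS_{\mathfrak C}$, $S_{\mathfrak C}P\le PS_{\mathfrak C}$, and $S_{\mathfrak C}S_{\mathfrak C}\le S_{\mathfrak C}$, each applied to the appropriate factor.

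The only genuinely delicate point — and the one I would write out carefully rather than assert — is that the six displayed inequalities, which are described in the text as ``routine to check,'' actually hold for \emph{these} operators on $\mathfrak C$-stamps, and that $H$, $P$, $S_{\mathfrak C}$ are extensive and monotone. For instance $PH\le HP$ requires observing that a product of homomorphic images $\alpha_i\circ\varphi_i$ factors through the product $\prod\varphi_i$ via $\prod\alpha_i$ restricted to the (dense image) closure, whose corestriction is onto; and $PP\le HP$ needs the associativity of products together with the fact that an iterated product maps onto the single product of all factors by regrouping coordinates, which is a surjective morphism of ordered Stone topological algebras. Granting these, the lemma follows purely formally by the inequality manipulations above; I expect the bookkeeping of which operator acts on which sub-expression to be the main place where care is needed, but no new idea beyond the monotone-operator calculus is required.
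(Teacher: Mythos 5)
Your proposal is correct and follows essentially the same route as the paper, which derives the lemma directly from the six displayed operator inequalities (stated there as ``routine to check'') together with monotonicity and extensivity of $H$, $P$, $S_{\mathfrak{C}}$; you merely spell out the standard closure-plus-minimality calculus ($HHP\le HP$, $PHP\le HPP\le HHP\le HP$, and $S_{\mathfrak{C}}HPS_{\mathfrak{C}}\le HS_{\mathfrak{C}}PS_{\mathfrak{C}}\le HPS_{\mathfrak{C}}S_{\mathfrak{C}}\le HPS_{\mathfrak{C}}$) that the paper leaves implicit. Your flagged verifications of $PH\le HP$ and $PP\le HP$ (factoring a product of homomorphic images through the product, and regrouping coordinates of iterated products, with $H$ absorbing the resulting isomorphisms) are exactly the intended justifications.
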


Notice, that for each $T\in \mathfrak{C}$, an arbitrary 
$\mathfrak{C}_0$-prevariety of stamps contains the \emph{trivial stamp} 
$\varphi : T \to \mathbf{1}$, where $(\mathbf{1},=)$ is the one-element 
(ordered) Stone topological algebra, since that may be obtained by the 
operator $P$, when the considered index set $I$ is empty. 

We are interested in the class of languages over $T\in\mathfrak{C}$ which 
may be recognized by stamps from a given $\mathfrak{C}_0$-prevariety $\vst V$ 
of stamps: 
\begin{displaymath}
  \Rec_{\vst V}(T)=\{ L \in \Cl B_{\mathrm{max}}^T ;
  \eta_L \in \vst V\}.
\end{displaymath}
From Proposition~\ref{p:ordered-recognizers} we get the following 
alternative description of~$\Rec_{\vst V}(T)$.

\begin{Lemma}
  \label{l:variety-of-stamps-recognition}
  Let $\vst V$ be a $\mathfrak{C}_0$-prevariety of stamps and $T\in\mathfrak{C}$. 
  An admissible language $L\in \Cl B_{max}^T$ over $T$ belongs to 
  $\Rec_{\vst V}(T)$ if and only if there exist $(\varphi : T \to S) \in \vst V$, 
  where $(S,\le)$ is an ordered Stone topological algebra, and a clopen upset 
  $K$ in $S$ such that $L=\varphi^{-1}(K)$.
\end{Lemma}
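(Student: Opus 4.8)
The plan is to prove the two implications of the stated equivalence, using Proposition~\ref{p:ordered-recognizers} for the forward direction and basic closure properties of a $\mathfrak{C}_0$-prevariety for the converse. Throughout, recall that $\Rec_{\vst V}(T)=\{L\in\Cl B_{\mathrm{max}}^T : \eta_L\in\vst V\}$, and that $\eta_L:T\to(\Stsynt(L),{\le_L})$ is the syntactical stamp of $L$, which is in particular a Priestley stamp by Corollary~\ref{c:admissible-to-po}.

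First I would treat the easy direction. Suppose $L\in\Rec_{\vst V}(T)$, so that $\eta_L\in\vst V$. By Proposition~\ref{p:syntactic-in-dual} (or directly by the construction of $\Stsynt(L)$ via Corollary~\ref{c:admissible-to-po} and Proposition~\ref{p:admissible-sets-vs-quasi-orders}(\ref{item:admissible-sets-vs-quasi-orders-3})), the set $L$ is the preimage under $\eta_L$ of a clopen upset of $\Stsynt(L)$: indeed $L = \eta_L^{-1}(\Cl U_L^{\langle L\rangle}\cap\eta_L(T))$-type reasoning shows $L=\eta_L^{-1}(K_0)$ for a suitable clopen upset $K_0$, since $L\in\langle L\rangle=[\{L\}]$ and $[\{L\}]$ consists exactly of the preimages of clopen upsets. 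Taking $(\varphi,S,K)=(\eta_L,\Stsynt(L),K_0)$ gives the required data. So this direction needs only to locate the clopen upset $K_0$ with $\eta_L^{-1}(K_0)=L$, which is immediate from the cited results.

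For the converse, assume there exist $(\varphi:T\to S)\in\vst V$ with $(S,{\le})$ an ordered Stone topological algebra and a clopen upset $K$ of $S$ with $L=\varphi^{-1}(K)$. First observe $L\in\Cl B_{\mathrm{max}}^T$: since $K$ is clopen, $L=\varphi^{-1}(K)$ is admissible by \cite[Corollary~6.10]{Almeida&Klima:2024a}, as $\varphi^{-1}(\Cl P_{co}(S))$ is $\Omega$-finite in $T$. Now apply Proposition~\ref{p:ordered-recognizers}: the unique continuous homomorphism $\varepsilon:S\to\Stsynt(L)$ with $\varepsilon\circ\varphi=\eta_L$ is order-preserving, and in particular $\varepsilon$ is an onto morphism of ordered Stone topological algebras onto $\eta_L(T)^{-}=\Stsynt(L)$ (density of $\varphi(T)$ in $S$ forces $\varepsilon(S)$ dense in, hence equal to, $\Stsynt(L)$). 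Thus $\eta_L=\varepsilon\circ\varphi$ exhibits $\eta_L$ as a homomorphic image of $\varphi$, so $\eta_L\in H\Cl S$ for $\Cl S=\{\varphi\}\subseteq\vst V$; since $\vst V$ is closed under $H$, we get $\eta_L\in\vst V$, i.e.\ $L\in\Rec_{\vst V}(T)$.

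The main point requiring care is the surjectivity of $\varepsilon$ in the converse direction: one must check that $\varepsilon(S)$ is both dense and closed in $\Stsynt(L)$. Density follows because $\varepsilon(S)\supseteq\varepsilon(\varphi(T))=\eta_L(T)$ and $\eta_L(T)$ is dense in $\Stsynt(L)$ by construction (it is a Stone completion); closedness follows because $S$ is compact and $\varepsilon$ continuous, so $\varepsilon(S)$ is compact, hence closed, in the Hausdorff space $\Stsynt(L)$. Once surjectivity is secured, the two halves combine to give the asserted characterization of $\Rec_{\vst V}(T)$.
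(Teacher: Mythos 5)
Your proof is correct and takes essentially the same route as the paper, which obtains this lemma directly from Proposition~\ref{p:ordered-recognizers}: the converse is exactly your argument (with the surjectivity of $\varepsilon$, via dense plus compact image, being the detail needed for the operator $H$), and the forward direction is just that $L$ is the $\eta_L$-preimage of a clopen upset of $\Stsynt(L)$. One cosmetic slip: $\langle L\rangle$ is a Boolean algebra and in general differs from the lattice $[\{L\}]$, but your argument only uses $L\in[\{L\}]$, which holds, so nothing breaks.
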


\begin{Remark}
  \label{r:Rec-is-a-sublattice}
  We claim that $\Rec_{\vst V}(T)$ is a $(0,1)$-sublattice of $\Cl
  B_{\mathrm{max}}^T$. Indeed, as mentioned above, every
  $\mathfrak{C}_0$-prevariety of stamps $\vst V$ contains the trivial
  stamps $T\to\mathbf{1}$ with $T\in\mathfrak{C}$, which are the
  syntactical stamps of the languages $\emptyset,T\in\Cl
  B_{\mathrm{max}}^T$. For $L,K\in\Rec_{\vst V}(T)$, the languages
  $L\cap K$ and $L\cup K$ are recognized by the stamp $\eta_L \times
  \eta_K$, which belongs to~$\vst V$. Finally, we get that $L\cap K$
  and $L\cup K$ belong to $\Rec_{\vst V}(T)$ by
  Lemma~\ref{l:variety-of-stamps-recognition}.
\end{Remark}

We define mappings $\Phi:\mathbb{L}^{\mathfrak{C}}_0\to\mathbb{S}^{\mathfrak{C}}_0$ 
and $\Psi:\mathbb{S}^{\mathfrak{C}}_0\to\mathbb{L}^{\mathfrak{C}}_0$ as follows:
\begin{itemize}
\item for a $\mathfrak{C}_0$-prevariety of languages \Cl V
  in~$\mathbb{L}^{\mathfrak{C}}_0$, we let $\Phi(\Cl V)$ be the
  $\mathfrak{C}_0$-prevariety of stamps generated by the syntactic
  stamps of the sets $\Cl V(T)$, that is,
  \begin{displaymath}
    \Phi(\Cl  V)=H \bigl( \{ \eta_{\Cl V(T)} : T \to \Stsynt(\Cl V(T)) ; T\in
    \mathfrak{C}\} \bigr);
  \end{displaymath}
  \item for a $\mathfrak{C}_0$-prevariety of stamps $\vst{V}$, let $\Psi(\vst{V})$ 
    be the $\mathfrak{C}_0$-prevariety of all languages recognized by stamps 
    from $\vst{V}$, that is $ \Psi(\vst{V}) (T)=\Rec_{\vst V} (T)$.
\end{itemize}
By Remark~\ref{r:Rec-is-a-sublattice}, the mapping $\Psi$ is correctly defined,
however the definition of $\Phi$ needs a short comment. The class 
$\Cl S=\{\eta_{\Cl V(T)}; T\in \mathfrak{C}\}$ contains a single stamp of the 
form $\varphi : T \to S$ for every $T\in \mathfrak{C}$. Thus the class $P\Cl S$ 
consists of isomorphic copies of the stamps in $\Cl S$ and the trivial stamps. 
We deduce that $P\Cl S \subseteq H\Cl S$ and get 
$H\Cl S \subseteq HP\Cl S \subseteq HH\Cl S \subseteq  H\Cl S$. Therefore 
$\Phi(\Cl  V)=H\Cl S=HP\Cl S$ is the $\mathfrak{C}_0$-prevariety of stamps 
generated by the syntactic stamps of $\Cl V(T)$ for $T\in\mathfrak{C}$ as 
stated in the definition.

Our aim is to show that the pair $(\Phi,\Psi)$ of mappings establishes
a monotone Galois connection between $\mathbb{L}^{\mathfrak{C}}_0$ and
$\mathbb{S}^{\mathfrak{C}}_0$. Here the classes
$\mathbb{L}^{\mathfrak{C}}_0$ and $\mathbb{S}^{\mathfrak{C}}_0$ are
naturally ordered by the (pointwise) inclusion order. In particular,
for a pair of $\mathfrak{C}$-prevarieties of languages $\Cl V$ and
$\Cl W$, we write $\Cl V \subseteq \Cl W$ if $\Cl V(T)\subseteq \Cl
W(T)$ holds for every $T\in\mathfrak{C}$.

\begin{Prop}
  \label{p:Galois-connection} 
  The pair of mappings $(\Phi,\Psi)$ is a monotone Galois connection 
  between $\mathbb{L}^{\mathfrak{C}}_0$ and $\mathbb{S}^{\mathfrak{C}}_0$ 
  in the sense that both functions are order-preserving and the following
  equivalence holds for every  $\mathfrak{C}_0$-prevariety of languages 
  $\Cl V$ from $\mathbb{L}^{\mathfrak{C}}_0$ and $\mathfrak{C}_0$-prevariety 
  of stamps $\vst V$ from $\mathbb S^{\mathfrak{C}}_0$:
  \begin{equation}
    \label{eq:Galois-connection}
    \Cl V \subseteq  \Psi({\vst V}) \iff \Phi(\Cl V) \subseteq \vst V.
  \end{equation}
\end{Prop}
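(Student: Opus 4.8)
The plan is to verify the two standard ingredients of a monotone Galois connection: that both $\Phi$ and $\Psi$ are order-preserving, and that the adjunction equivalence~\eqref{eq:Galois-connection} holds. Monotonicity of $\Psi$ is immediate from its definition, since a larger class of stamps recognizes at least as many languages; monotonicity of $\Phi$ follows once we observe that if $\Cl V\subseteq\Cl W$ then each $\Cl V(T)$ is a $(0,1)$-sublattice of $\Cl W(T)$, so by Corollary~\ref{c:duality} (applied with $\Cl C=\Cl W(T)$ and $\Cl D=\Cl V(T)$, noting $\Cl V(T)\subseteq\Cl W(T)=[\Cl W(T)]$ as $\Cl W(T)$ need not itself be $M$-closed) there is an onto order-preserving continuous homomorphism $\Stsynt(\Cl W(T))\to\Stsynt(\Cl V(T))$; hence each generating stamp $\eta_{\Cl V(T)}$ of $\Phi(\Cl V)$ is a homomorphic image of the stamp $\eta_{\Cl W(T)}$, so $\Phi(\Cl V)\subseteq H\Phi(\Cl W)=\Phi(\Cl W)$.

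For the equivalence, I would prove the two implications separately. Assume first $\Phi(\Cl V)\subseteq\vst V$. For $T\in\mathfrak{C}$ and $L\in\Cl V(T)$, the language $L$ is recognized by its own syntactic stamp $\eta_L:T\to\Stsynt(L)$ via a clopen upset (by Proposition~\ref{p:syntactic-in-dual} together with the fact that $L=\eta_L^{-1}(\Cl U_L)$ and $\Cl U_L$ is an $\le_L$-upset, which follows from Proposition~\ref{p:admissible-sets-vs-quasi-orders}\eqref{item:admissible-sets-vs-quasi-orders-3}); and $\eta_L$ is, by Proposition~\ref{p:ordered-recognizers}, a homomorphic image of $\eta_{\Cl V(T)}$, hence lies in $H\Phi(\Cl V)=\Phi(\Cl V)\subseteq\vst V$. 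Therefore $L\in\Rec_{\vst V}(T)=\Psi(\vst V)(T)$, giving $\Cl V\subseteq\Psi(\vst V)$.

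Conversely, assume $\Cl V\subseteq\Psi(\vst V)$. Since $\Phi(\Cl V)=H\Cl S$ where $\Cl S=\{\eta_{\Cl V(T)}:T\in\mathfrak{C}\}$ and $\vst V$ is closed under $H$, it suffices to show $\eta_{\Cl V(T)}\in\vst V$ for each $T$. Write $\Cl C=\Cl V(T)$ and recall from Corollary~\ref{c:St-C-vs-St-L} that $\eta_{\Cl C}:T\to\Stsynt(\Cl C)$ is the product in $\mathbf{OSC}_\Omega(T)$ of the syntactical stamps $\eta_L:T\to\Stsynt(L)$ with $L\in\Cl C$. Each such $L$ lies in $\Cl V(T)\subseteq\Psi(\vst V)(T)=\Rec_{\vst V}(T)$, so by Lemma~\ref{l:variety-of-stamps-recognition} there is a stamp $(\varphi_L:T\to S_L)\in\vst V$ and a clopen upset $K_L$ with $L=\varphi_L^{-1}(K_L)$; by Proposition~\ref{p:ordered-recognizers} the syntactical stamp $\eta_L$ is a homomorphic image of $\varphi_L$, hence $\eta_L\in H\vst V=\vst V$. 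Taking the product of these finitely-described building blocks over $L\in\Cl C$ and using that $\vst V$ is closed under $P$, we obtain that the product stamp—which by Corollary~\ref{c:St-C-vs-St-L} is (isomorphic to) $\eta_{\Cl C}$—belongs to $\vst V$, completing the proof.

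The main obstacle I anticipate is not any single hard step but keeping the universal-property bookkeeping straight: one must be careful that the product in $\mathbf{OSC}_\Omega(T)$ coincides with the product operator $P$ used to define prevarieties of stamps (it does, since $P$ is precisely ``product followed by closure of the image'', and Corollary~\ref{c:St-C-vs-St-L} identifies $\eta_{\Cl C}$ with exactly that closure), and that the homomorphic-image relation in Proposition~\ref{p:ordered-recognizers} matches the operator $H$. Once these identifications are pinned down, each implication is a short chain of citations to the earlier results.
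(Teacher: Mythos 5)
Your proof is correct and follows essentially the same route as the paper's: monotonicity of $\Phi$ via an onto morphism between the syntactic stamps of nested lattices, and the two implications of \eqref{eq:Galois-connection} via Corollary~\ref{c:St-C-vs-St-L} (closure under $P$ up to the isomorphism you rightly flag), Lemma~\ref{l:variety-of-stamps-recognition}, and homomorphic-image arguments; the only differences are which of the interchangeable citations (Corollary~\ref{c:duality}, Proposition~\ref{p:ordered-recognizers}, or the definition of $\Rec_{\vst V}(T)$ as $\{L:\eta_L\in\vst V\}$) carry each step. One small slip: in the monotonicity argument the inclusion you need is $\Cl V(T)\subseteq\Cl W(T)\subseteq[\Cl W(T)]$, not the equality $\Cl W(T)=[\Cl W(T)]$ (which indeed need not hold), but this does not affect the argument.
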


\begin{proof} 
  At first, we check that $\Phi$ and $\Psi$ are order-preserving
  functions. Let $\vst V \subseteq \vst W$ be a pair of
  $\mathfrak{C}_0$-prevarieties of stamps, and $T$ be in
  $\mathfrak{C}$. By definition, $\vst V \subseteq \vst W$ implies
  $\Rec_{\vst V}(T) \subseteq\Rec_{\vst W}(T)$, which means $\Psi(\vst
  V)\subseteq \Psi(\vst W)$. Now, assume that $\Cl V \subseteq \Cl W$
  be a pair of $\mathfrak{C}_0$-prevarieties of languages. The
  inclusion $\Phi(\Cl V) \subseteq \Phi(\Cl W)$ is a consequence of
  Corollary~\ref{c:admissible-to-po} and
  Proposition~\ref{p:admissible-sets-vs-quasi-orders}, which imply
  that, for every $T\in \mathfrak{C}$, the stamp $\eta_{\Cl V(T)}$ is
  a homomorphic image of $\eta_{\Cl W (T)}$.

  Now we establish the equivalence~(\ref{eq:Galois-connection}). We
  start with the implication from left to right and assume that $\Cl V
  \subseteq \Psi({\vst V})$. Since $\vst V$ is a
  $\mathfrak{C}_0$-prevariety of stamps, it is enough to show that for
  every $T\in\mathfrak{C}$ the syntactical stamp $\eta_{\Cl V(T)} : T
  \to \Stsynt(\Cl V(T))$ belongs to $\vst V$. For every $L\in \Cl
  V(T)$, by the assumption that $\Cl V \subseteq \Psi({\vst V})$, we
  have $\eta_L\in \vst V$. By Corollary~\ref{c:St-C-vs-St-L}, the
  syntactical stamp $\eta_{\Cl V(T)}$ belongs to $\vst V$ as it is a
  product (in the sense of the operator $P$) of syntactical stamps
  $\eta_L\in \vst V$.

  Conversely, assume that the inclusion $\Phi(\Cl V) \subseteq \vst V$
  holds, and take $T\in \mathfrak{C}$ and $L\in \Cl V(T)$. If we apply
  Corollary~\ref{c:admissible-to-po} to $\Cl C=\Cl V(T)$, we obtain
  that $L \in \Cl C \subseteq \eta_{\Cl C}^{-1} (\Cl P_{uco} (\Stsynt
  (\Cl C)))$, Thus, the language $L$ is recognized by the stamp $\eta_{\Cl V(T)} \in \Phi(\Cl V) \subseteq \vst V$. By
  Lemma~\ref{l:variety-of-stamps-recognition} we get $L\in \Rec_{\vst
    V} (T)=\Psi(\vst V)(T)$.
\end{proof}

From the general theory of Galois connections, it follows from
Proposition~\ref{p:Galois-connection} that
$\Phi\circ\Psi\circ\Phi=\Phi$, $\Psi\circ\Phi\circ\Psi=\Psi$, and that
$\Psi\circ\Phi$ is a closure operator, $\Phi\circ\Psi$ is an interior
operator, and the mappings $\Phi$ and $\Psi$ restrict to mutual
inverse bijections between the sets
$\Psi(\Phi(\mathbb{L}^{\mathfrak{C}}_0))$ and
$\Phi(\Psi(\mathbb{S}^{\mathfrak{C}}_0))$. We proceed to identify
these subsets of $\mathbb{L}^{\mathfrak{C}}_0$ and
$\mathbb{S}^{\mathfrak{C}}_0$, respectively.

\begin{Lemma}
  \label{l:characterisation-of-stable-stamps-prevarieties}
  Let $\vst V$ be a $\mathfrak{C}_0$-prevariety of stamps. Then a
  stamp $\varphi : T \to S$ belongs to $\Phi(\Psi(\vst V))$ if and
  only if it is a homomorphic image of a Priestley stamp $\varphi' : T
  \to S'$ from $\vst V$. In other words,
  \begin{displaymath}
    \Phi(\Psi(\vst V))=H(\vst{P}_{\mathfrak{C}} \cap \vst V).
  \end{displaymath}
\end{Lemma}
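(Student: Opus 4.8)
The plan is to prove the two inclusions giving $\Phi(\Psi(\vst V)) = H(\vst{P}_{\mathfrak{C}} \cap \vst V)$, using the general Galois-connection machinery together with the concrete description of syntactical stamps as Priestley stamps. First I would establish $\Phi(\Psi(\vst V)) \subseteq H(\vst{P}_{\mathfrak{C}} \cap \vst V)$. Take a stamp $\varphi : T \to S$ in $\Phi(\Psi(\vst V))$. By the definition of $\Phi$ (and the remark in the text that $\Phi(\Cl W) = H\Cl S$ with $\Cl S = \{\eta_{\Cl W(T)} : T \in \mathfrak{C}\}$ for any $\mathfrak{C}_0$-prevariety of languages $\Cl W$), applied to $\Cl W = \Psi(\vst V)$, the stamp $\varphi$ is a homomorphic image of some syntactical stamp $\eta_{\Cl C}$ with $\Cl C = \Psi(\vst V)(T) = \Rec_{\vst V}(T)$. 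Every syntactical stamp is a Priestley stamp by Corollary~\ref{c:admissible-to-po}, so $\eta_{\Cl C} \in \vst{P}_{\mathfrak{C}}$. It remains to see that $\eta_{\Cl C} \in \vst V$: by Corollary~\ref{c:St-C-vs-St-L}, $\eta_{\Cl C}$ is the product in $\mathbf{OSC}_\Omega(T)$ of the syntactical stamps $\eta_L$ with $L \in \Cl C = \Rec_{\vst V}(T)$, and each such $\eta_L$ lies in $\vst V$ (this is essentially the definition of $\Rec_{\vst V}(T)$, via Lemma~\ref{l:variety-of-stamps-recognition} and the fact that the syntactical stamp of a recognized language is dominated by the recognizing stamp through Proposition~\ref{p:ordered-recognizers}, so that closure under $H$ puts $\eta_L$ in $\vst V$); hence $\eta_{\Cl C} \in P\vst V \subseteq \vst V$. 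Thus $\varphi \in H(\vst{P}_{\mathfrak{C}} \cap \vst V)$.

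For the reverse inclusion $H(\vst{P}_{\mathfrak{C}} \cap \vst V) \subseteq \Phi(\Psi(\vst V))$, I would take a Priestley stamp $\varphi' : T \to S'$ in $\vst V$ and a homomorphic image $\varphi = \alpha \circ \varphi' : T \to S$, and show $\varphi \in \Phi(\Psi(\vst V))$. Since $\Phi(\Psi(\vst V)) = H(\{\eta_{\Cl C} : \Cl C = \Rec_{\vst V}(T), T \in \mathfrak{C}\})$ is closed under $H$, it suffices to show that $\varphi'$ itself — or rather, that every Priestley stamp in $\vst V$ over $T$ — is a homomorphic image of $\eta_{\Rec_{\vst V}(T)}$. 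Here the key is Proposition~\ref{p:stamps}: for a Priestley stamp $\varphi' : T \to (S', \le)$, setting $\Cl C' = (\varphi')^{-1}(\Cl P_{uco}(S'))$, the induced map $\delta : (S', \le) \to (\Stsynt(\Cl C'), \le_{\Cl C'})$ with $\delta \circ \varphi' = \eta_{\Cl C'}$ is an \emph{isomorphism} of ordered topological algebras (this is part~(\ref{item:stamps-2}) of that proposition, using that $S'$ is Priestley). So $\varphi'$ is isomorphic to $\eta_{\Cl C'}$ as a stamp over $T$. Now $\Cl C' = \varphi'^{-1}(\Cl P_{uco}(S'))$ consists of languages of the form $\varphi'^{-1}(K)$ with $K \in \Cl P_{uco}(S')$, each of which is recognized by $\varphi' \in \vst V$, hence lies in $\Rec_{\vst V}(T)$ by Lemma~\ref{l:variety-of-stamps-recognition}; that is, $\Cl C' \subseteq \Rec_{\vst V}(T)$. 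By Corollary~\ref{c:duality} (or Corollary~\ref{c:St-C-vs-St-L}, exhibiting $\eta_{\Cl C'}$ as a projection from the product $\eta_{\Rec_{\vst V}(T)}$), $\eta_{\Cl C'}$ is a homomorphic image of $\eta_{\Rec_{\vst V}(T)} \in \Phi(\Psi(\vst V))$, so $\varphi' \cong \eta_{\Cl C'} \in H(\Phi(\Psi(\vst V))) = \Phi(\Psi(\vst V))$, and therefore $\varphi = \alpha \circ \varphi' \in \Phi(\Psi(\vst V))$ as well.

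The main obstacle I anticipate is the bookkeeping around $\Cl C'$ in the reverse direction: one must be careful that $\varphi'^{-1}(\Cl P_{uco}(S'))$ is genuinely a subset of $\Cl B_{\mathrm{max}}^T$ (it is, since $\varphi'$ is a Stone completion, so preimages of clopens are admissible — Corollary~6.10 of \cite{Almeida&Klima:2024a}, already invoked in Proposition~\ref{p:stamps}), and that the isomorphism $\delta$ from Proposition~\ref{p:stamps}(\ref{item:stamps-2}) really respects both the topological-algebra structure \emph{and} the order, so that "homomorphic image" in the target sense (onto morphism of ordered Stone topological algebras) is the right notion throughout. A secondary point to handle cleanly is that the equality $\Phi(\Psi(\vst V)) = H(\{\eta_{\Rec_{\vst V}(T)} : T \in \mathfrak{C}\})$ already incorporates closure under $H$ (as noted in the text, $HP\Cl S = H\Cl S$ for the relevant $\Cl S$), so no further closure argument is needed beyond $HH \le H$; I would state this once at the start to streamline both inclusions. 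Everything else is a direct chaining of the cited corollaries, so the proof should be short.
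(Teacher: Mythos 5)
Your proposal is correct and follows essentially the same route as the paper's proof: the forward inclusion via the definition of $\Phi$, Corollary~\ref{c:admissible-to-po}, Corollary~\ref{c:St-C-vs-St-L} and closure of $\vst V$ under $P$; the reverse via $\Cl C'=\varphi'^{-1}(\Cl P_{uco}(S'))$, Lemma~\ref{l:variety-of-stamps-recognition}, Proposition~\ref{p:stamps}(\ref{item:stamps-2}), Corollary~\ref{c:duality}, and closure of $\Phi(\Psi(\vst V))$ under $H$. The only superfluous detour is your justification that $\eta_L\in\vst V$ for $L\in\Rec_{\vst V}(T)$, which is immediate from the definition $\Rec_{\vst V}(T)=\{L:\eta_L\in\vst V\}$ rather than needing Proposition~\ref{p:ordered-recognizers}.
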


\begin{proof}
  We let $\Cl V=\Psi(\vst V)$ and take an arbitrary $\varphi \in
  \Phi(\Psi(\vst V))$. By the definition of $\Phi$, the stamp
  $\varphi$ is a homomorphic image of $\eta_{\Cl V(T)}$, where $\Cl
  V(T)=\Psi(\vst V)(T)=\Rec_{\vst V}(T)$. By
  Corollary~\ref{c:admissible-to-po}, the stamp $\eta_{\Cl V(T)}$ is a
  Priestley stamp. By Corollary~\ref{c:St-C-vs-St-L}, $\eta_{\Cl
    V(T)}$ is a product of the Priestley completions $\eta_L$ where
  $L\in \Rec_{\vst V}(T)$. This means that all $\eta_L$ belong to
  $\vst V$ and, therefore, so does the product $\eta_{\Cl V(T)}$.
  Hence, $\varphi$ is a homomorphic image of the Priestley stamp
  $\eta_{\Cl V(T)}$ in $\vst V$.

  Conversely, let $\varphi'=\alpha \circ \varphi$ be a homomorphic
  image of a Priestley stamp $\varphi : T \to S$ from $\vst V$, where
  $\alpha : (S,\le) \to (S',\le)$ is an order preserving continuous
  homomorphism of ordered Stone topological algebras. We consider $\Cl
  D =\varphi^{-1}(\Cl P_{uco}(S))$. Since $\varphi \in \vst V$, we
  have $\Cl D\subseteq \Rec_{\vst V}(T)=\Psi(\vst V)(T)=\Cl V(T)$.
  This means that $\eta_{\Cl D}$ is a homomorphic image of $\eta_{\Cl
    V(T)}$ by Corollary~\ref{c:duality}. However, the stamp $\eta_{\Cl
    D}$ is isomorphic to the Priestley stamp $\varphi$ by
  Proposition~\ref{p:stamps}. Altogether, $\varphi'$ is a
  homomorphic image of $\eta_{\Cl V(T)} \in \Phi(\Cl V)=\Phi(\Psi(\vst
  V))$. Since $\Phi(\Psi(\vst V))$ is a $\mathfrak{C}_0$-prevariety of
  stamps, we get $\varphi'\in \Phi(\Psi(\vst V))$.
\end{proof}

A $\mathfrak{C}_0$-prevariety of stamps $\vst V$ is called a
\emph{$\mathfrak{C}_0$-variety of stamps} if it contains only
homomorphic images of Priestley stamps from~$\vst V$. In other words,
$\vst V$ is a $\mathfrak{C}_0$-variety of stamps if and only if
$\Phi(\Psi(\vst V))=\vst V$.

For the next lemma, recall the closure operator
$\__M$ 
on subsets of the Boolean algebra $\Cl B_{\mathrm{max}}^T$, whose
image consists of the M-closed subsets. Recall also that, whenever
$\Cl D$ is a $(0,1)$-sublattice of $\Cl B_{\mathrm{max}}^T$, its
M-closure $\Cl D_M$ is also a $(0,1)$-sublattice of $\Cl
B_{\mathrm{max}}^T$.

We call a $\mathfrak{C}_0$-prevariety of languages a
\emph{$\mathfrak{C}_0$-variety of languages} if, for every
$T\in\mathfrak{C}$, the $(0,1)$-sublattice $\Cl V(T)$ of $\Cl
B_{\mathrm{max}}^T$ is M-closed.

\begin{Lemma}
  \label{l:characterisation-of-stable-language-prevarieties} 
  Let  $\Cl V$  be a $\mathfrak{C}_0$-prevariety of languages, 
  and $T\in\mathfrak{C}$. Then $(\Psi(\Phi (\Cl V)))(T)=\Cl V(T)_M$.  
  In other words, $\Cl V = \Psi(\Phi (\Cl V))$ if and only if 
  $\Cl V$ is a $\mathfrak{C}_0$-variety of languages.  
\end{Lemma}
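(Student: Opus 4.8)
The plan is to show the two inclusions $\Cl V(T)_M\subseteq(\Psi(\Phi(\Cl V)))(T)$ and $(\Psi(\Phi(\Cl V)))(T)\subseteq\Cl V(T)_M$ separately, exploiting the fact that $\Psi(\Phi(\Cl V))(T)=\Rec_{\Phi(\Cl V)}(T)$ and that $\Phi(\Cl V)=H\bigl(\{\eta_{\Cl V(T')}:T'\in\mathfrak{C}\}\bigr)$. The key local computation is that, for a fixed $T$, the syntactic stamp $\eta_{\Cl V(T)}:T\to\Stsynt(\Cl V(T))$ recognizes (via preimages of clopen upsets) exactly the lattice $[\Cl V(T)]=\Cl V(T)_M$, by Proposition~\ref{p:admissible-sets-vs-quasi-orders}(\ref{item:admissible-sets-vs-quasi-orders-3}) together with the fact that $\Cl V(T)$ is already a $(0,1)$-sublattice, so that the $(0,1)$-sublattice generated by $\Cl V(T)_M$ is just $\Cl V(T)_M$ itself.

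First I would prove $\Cl V(T)_M\subseteq(\Psi(\Phi(\Cl V)))(T)$. Take $L\in\Cl V(T)_M$; since $\Cl V(T)$ is a $(0,1)$-sublattice, $[\Cl V(T)]=\Cl V(T)_M$, so by Proposition~\ref{p:admissible-sets-vs-quasi-orders}(\ref{item:admissible-sets-vs-quasi-orders-3}) applied with $\Cl C=\Cl V(T)$ we have $L=\eta_{\Cl V(T)}^{-1}(K)$ for some clopen $\le_{\Cl V(T)}$-upset $K$ of $\Stsynt(\Cl V(T))$. As $\eta_{\Cl V(T)}\in\Phi(\Cl V)$ by the definition of $\Phi$, Lemma~\ref{l:variety-of-stamps-recognition} gives $L\in\Rec_{\Phi(\Cl V)}(T)=(\Psi(\Phi(\Cl V)))(T)$.

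For the reverse inclusion, take $L\in\Rec_{\Phi(\Cl V)}(T)$. By Lemma~\ref{l:variety-of-stamps-recognition} there is a stamp $\varphi:T\to S$ in $\Phi(\Cl V)$ with $(S,\le)$ an ordered Stone topological algebra and a clopen upset $K$ of $S$ with $L=\varphi^{-1}(K)$. Now $\varphi$ is, by the definition of $\Phi$, a homomorphic image of some $\eta_{\Cl V(T')}$; but the first component of a morphism in $\mathbf{OSC}_\Omega(T)$ is $\mathrm{id}_T$, which forces $T'=T$, so there is an onto order-preserving continuous homomorphism $\alpha:(\Stsynt(\Cl V(T)),{\le_{\Cl V(T)}})\to(S,\le)$ with $\alpha\circ\eta_{\Cl V(T)}=\varphi$. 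Then $L=\eta_{\Cl V(T)}^{-1}\bigl(\alpha^{-1}(K)\bigr)$, and $\alpha^{-1}(K)$ is a clopen upset of $\Stsynt(\Cl V(T))$ (clopen since $\alpha$ is continuous, an upset since $\alpha$ preserves order); hence by Proposition~\ref{p:admissible-sets-vs-quasi-orders}(\ref{item:admissible-sets-vs-quasi-orders-3}) again, $L\in[\Cl V(T)]=\Cl V(T)_M$. The final ``in other words'' sentence is then immediate: $\Cl V=\Psi(\Phi(\Cl V))$ holds iff $\Cl V(T)=\Cl V(T)_M$ for all $T$, i.e.\ iff each $\Cl V(T)$ is M-closed, which is the definition of a $\mathfrak{C}_0$-variety of languages.

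\textbf{Main obstacle.}
I expect the delicate point to be the bookkeeping in the reverse inclusion: one must be careful that ``homomorphic image'' in $\Phi(\Cl V)=H(\cdots)$ really does fix the domain $T$ (this uses that $H$ is applied inside $\mathbf{OSC}_\Omega(T)$-style morphisms with identity first component, so no composition with a $\mathfrak{C}$-morphism on the source side occurs — that is precisely the difference with the $S_{\mathfrak{C}}$ operator, which is not assumed here), and then that pulling back a clopen upset along an onto order-preserving continuous homomorphism again yields a clopen upset, so that Proposition~\ref{p:admissible-sets-vs-quasi-orders}(\ref{item:admissible-sets-vs-quasi-orders-3}) applies. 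Everything else is a direct appeal to the cited results.
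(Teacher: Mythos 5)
Your proposal is correct and follows essentially the same route as the paper: both directions reduce recognition by a stamp in $\Phi(\Cl V)$ to recognition by the syntactic stamp $\eta_{\Cl V(T)}$ (whose recognized languages are $[\Cl V(T)]=\Cl V(T)_M$ via Proposition~\ref{p:admissible-sets-vs-quasi-orders}(\ref{item:admissible-sets-vs-quasi-orders-3}) and the lattice remark), exactly as in the paper's proof. Your explicit pull-back of the clopen upset along the onto morphism and the observation that the operator $H$ fixes the domain merely spell out steps the paper leaves implicit.
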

\begin{proof}
  For the whole proof we let $\vst V=\Phi (\Cl V)$.
  
  Let $L\in \Psi(\vst V)(T)$. We need to show that $L\in\Cl V(T) $.
  Since $\Psi(\vst V)(T)=\Rec_{\vst V}(T)$, the language $L$ is
  recognized by a stamp in $\vst V$. As $\vst V=\Phi (\Cl V)$, such a
  stamp is a homomorphic image of the stamp $\eta_{\Cl V(T)}$, and
  therefore $L$ is also recognized by the stamp $\eta_{\Cl V(T)}$.
  Recall that, for the $(0,1)$-sublattice $\Cl C=\Cl V(T)$ of $\Cl
  B_{\mathrm{max}}^T$, we have the Priestley stamp $\eta_{\Cl C} : T
  \to (\Stsynt(\Cl C),{\le_{\Cl C}})$
  and we know that $L\in \eta_{\Cl C}^{-1} (\Cl P_{uco} (\Stsynt (\Cl
  C)))$. By Corollary~\ref{c:St-C-vs-St-L} and the remark before the
  lemma, the set $\eta_{\Cl C}^{-1} (\Cl P_{uco} (\Stsynt (\Cl C)))$
  is in fact the sublattice $[\Cl C]=\Cl C_M=\Cl V(T)_M$. 
  
  Conversely, let $L\in \Cl V(T)_M$. By
  Corollary~\ref{c:St-C-vs-St-L}, we know that $L$ is recognized by
  $\eta_{\Cl V(T)}$. Since $\eta_{\Cl V(T)}\in \vst V=\Phi (\Cl V)$ we
  see that $L\in\Rec_{\vst V}(T)=\Psi(\vst V)(T)$.
\end{proof}

We may now establish the following Eilenberg type correspondence
theorem.

\begin{Thm}
  \label{t:Eilenberg}
  The restrictions of the mappings $\Psi$ and $\Phi$ are mutual
  inverse order-preserving bijections between
  $\mathfrak{C}_0$-varieties of stamps and $\mathfrak{C}_0$-varieties
  of languages.
\end{Thm}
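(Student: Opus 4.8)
The plan is to deduce Theorem~\ref{t:Eilenberg} directly from the
Galois connection of Proposition~\ref{p:Galois-connection} together
with the two characterization lemmas that immediately precede the
theorem. Recall that from any monotone Galois connection
$(\Phi,\Psi)$ one gets, by general nonsense, that $\Psi\circ\Phi$ is a
closure operator, $\Phi\circ\Psi$ is an interior operator, and that
$\Phi$ and $\Psi$ restrict to mutually inverse order-preserving
bijections between the fixed points of $\Psi\circ\Phi$ and the fixed
points of $\Phi\circ\Psi$ (equivalently, between the image of $\Psi$
and the image of $\Phi$). So the entire content of the theorem is the
identification of these two sets of fixed points.

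First I would identify the fixed points of $\Psi\circ\Phi$ on the
language side. By Lemma~\ref{l:characterisation-of-stable-language-prevarieties},
for a $\mathfrak{C}_0$-prevariety of languages $\Cl V$ one has
$(\Psi(\Phi(\Cl V)))(T)=\Cl V(T)_M$ for every $T\in\mathfrak{C}$;
hence $\Cl V$ is a fixed point of $\Psi\circ\Phi$ if and only if
$\Cl V(T)=\Cl V(T)_M$ for every $T$, i.e.\ if and only if each
$\Cl V(T)$ is an M-closed $(0,1)$-sublattice of $\Cl B_{\mathrm{max}}^T$,
which is exactly the definition of a $\mathfrak{C}_0$-variety of
languages. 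Dually, on the stamp side, by
Lemma~\ref{l:characterisation-of-stable-stamps-prevarieties} one has
$\Phi(\Psi(\vst V))=H(\vst P_{\mathfrak{C}}\cap\vst V)$, so $\vst V$ is
a fixed point of $\Phi\circ\Psi$ if and only if
$\vst V=H(\vst P_{\mathfrak{C}}\cap\vst V)$, that is, $\vst V$
consists precisely of the homomorphic images of the Priestley stamps it
contains --- which is exactly the definition of a
$\mathfrak{C}_0$-variety of stamps.

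It then remains only to check that the images $\Psi(\mathbb S_0^{\mathfrak{C}})$
and $\Phi(\mathbb L_0^{\mathfrak{C}})$ coincide with these sets of
fixed points; but this is automatic from the Galois-connection
identities $\Psi\Phi\Psi=\Psi$ and $\Phi\Psi\Phi=\Phi$: every element
of the form $\Psi(\vst V)$ is fixed by $\Psi\circ\Phi$ (apply
$\Psi\Phi\Psi=\Psi$), and every fixed point $\Cl V$ of $\Psi\circ\Phi$
lies in the image of $\Psi$ since $\Cl V=\Psi(\Phi(\Cl V))$; similarly
on the stamp side. Therefore $\Psi$ and $\Phi$ restrict to mutually
inverse order-preserving bijections between $\mathfrak{C}_0$-varieties
of stamps and $\mathfrak{C}_0$-varieties of languages, which is the
assertion of Theorem~\ref{t:Eilenberg}. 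I do not anticipate a serious
obstacle here: the only potentially delicate points are making sure
the two characterization lemmas really do pin down the fixed-point sets
on the nose (no off-by-a-closure discrepancy) and invoking the abstract
Galois-connection facts cleanly; both were essentially spelled out in
the paragraph following Proposition~\ref{p:Galois-connection}, so the
proof should be short.
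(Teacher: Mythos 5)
Your proposal is correct and follows exactly the paper's route: the theorem is deduced from the Galois connection of Proposition~\ref{p:Galois-connection}, with the fixed points of $\Psi\circ\Phi$ and $\Phi\circ\Psi$ identified via Lemmas~\ref{l:characterisation-of-stable-language-prevarieties} and~\ref{l:characterisation-of-stable-stamps-prevarieties}, which is precisely how the paper argues (indeed the definitions of $\mathfrak{C}_0$-variety were set up so these identifications are immediate).
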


\begin{proof}
  The statement is a straightforward consequence of the Galois
  connection in Proposition~\ref{p:Galois-connection} in view of
  Lemmas~\ref{l:characterisation-of-stable-stamps-prevarieties}
  and~\ref{l:characterisation-of-stable-language-prevarieties}.
\end{proof}

Now we concentrate on $\mathfrak{C}$-prevarieties of stamps, which are
$\mathfrak{C}_0$-prevarieties closed under the operator
$S_\mathfrak{C}$. The following lemma is an expected consequence of
Lemma~\ref{l:variety-of-stamps-recognition}.

\begin{Lemma}
  \label{l:recognition-functor-property}
  Let $\vst V$ be a $\mathfrak{C}$-prevariety of stamps and 
  $\psi : T \to T'$ be a morphism in the category $\mathfrak{C}$. Then
  \begin{displaymath}
    \psi^{-1}(\Rec_{\vst V}(T')) \subseteq 
    \Rec_{\vst V}(T).
  \end{displaymath}
\end{Lemma}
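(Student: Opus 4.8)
The plan is to unwind the definitions and reduce the claim to the closure of $\vst V$ under the operator $S_{\mathfrak{C}}$. Take $L\in\psi^{-1}(\Rec_{\vst V}(T'))$, say $L=\psi^{-1}(L')$ with $L'\in\Rec_{\vst V}(T')$. By Lemma~\ref{l:variety-of-stamps-recognition} applied to $L'$, there exist a stamp $(\varphi':T'\to S)\in\vst V$, with $(S,\le)$ an ordered Stone topological algebra, and a clopen upset $K$ in $S$ such that $L'=(\varphi')^{-1}(K)$.

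First I would form the composite stamp $\varphi'\circ\psi:T\to\overline{\mathrm{Im}(\varphi'\circ\psi)}$, where the target is viewed as a closed subalgebra of $S$ with the induced order; this is precisely the stamp produced by the operator $S_{\mathfrak{C}}$ from $\varphi'$ and the morphism $\psi\in\mathfrak{C}(T,T')$, so it belongs to $S_{\mathfrak{C}}\vst V\subseteq\vst V$ since $\vst V$ is a $\mathfrak{C}$-prevariety of stamps. Next I would check that $L$ is recognized by this composite stamp: writing $S_0=\overline{\mathrm{Im}(\varphi'\circ\psi)}$ and $K_0=K\cap S_0$, the set $K_0$ is a clopen upset of $S_0$ (intersection of a clopen upset with a closed subalgebra carrying the induced order), and $(\varphi'\circ\psi)^{-1}(K_0)=\psi^{-1}\bigl((\varphi')^{-1}(K_0)\bigr)=\psi^{-1}\bigl((\varphi')^{-1}(K)\bigr)=\psi^{-1}(L')=L$, where the second equality uses that $\varphi'(T')\subseteq S$ so that $(\varphi')^{-1}(K)=(\varphi')^{-1}(K\cap\varphi'(T'))$ and $\varphi'(T')$ is contained in... well, more simply, $(\varphi')^{-1}(K_0)=(\varphi')^{-1}(K)\cap(\varphi')^{-1}(S_0)$ and $\mathrm{Im}(\varphi')$ need not sit inside $S_0$, so I would instead argue directly: for $t\in T$, $(\varphi'\circ\psi)(t)\in S_0$ always, and $(\varphi'\circ\psi)(t)\in K_0\iff(\varphi'\circ\psi)(t)\in K\iff\psi(t)\in(\varphi')^{-1}(K)=L'\iff t\in\psi^{-1}(L')=L$.

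Finally I would invoke Lemma~\ref{l:variety-of-stamps-recognition} in the other direction: since $\varphi'\circ\psi\in\vst V$ and $L=(\varphi'\circ\psi)^{-1}(K_0)$ for the clopen upset $K_0$ of $S_0$, we conclude $L\in\Rec_{\vst V}(T)$. This proves $\psi^{-1}(\Rec_{\vst V}(T'))\subseteq\Rec_{\vst V}(T)$.

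The only genuinely delicate point is the bookkeeping around $K_0=K\cap S_0$: one must be sure that a clopen upset of $S$ restricts to a clopen upset of the closed subalgebra $S_0$ with the induced order (clopenness is clear from the subspace topology, and the upset property is immediate since the induced order is the restriction), and that the preimage computation is carried out by chasing an arbitrary point $t\in T$ through the equivalences rather than by naively intersecting preimages, since $\mathrm{Im}(\varphi')$ is generally not contained in $S_0$. Everything else is a direct application of the two implications of Lemma~\ref{l:variety-of-stamps-recognition} together with the definition of $S_{\mathfrak{C}}$ and the hypothesis that $\vst V$ is closed under $S_{\mathfrak{C}}$.
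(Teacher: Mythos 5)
Your proposal is correct and follows essentially the same route as the paper's proof: apply Lemma~\ref{l:variety-of-stamps-recognition} to the recognized language over $T'$, form the composite stamp $\varphi'\circ\psi$ onto $\overline{\mathrm{Im}(\varphi'\circ\psi)}$ (which lies in $\vst V$ by closure under $S_{\mathfrak{C}}$), and recognize the preimage by the clopen upset $K\cap\overline{\mathrm{Im}(\varphi'\circ\psi)}$. Your extra care in verifying the preimage identity pointwise is a harmless elaboration of the same argument.
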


\begin{proof} 
  Let $L\in \Rec_{\vst V}(T')$. By
  Lemma~\ref{l:variety-of-stamps-recognition}, there is $\varphi : T'
  \to S$ in $\vst V$, where $(S,\le)$ is an ordered Stone topological
  algebra and a clopen upset $K$ in $S$ such that $L=\varphi^{-1}(K)$.
  Taking $\varphi\circ\psi : T \to S'$, where
  $S'=\overline{\mathrm{Im}(\varphi\circ\psi)}$, we get that
  $\psi^{-1}(L)=(\varphi\circ\psi)^{-1}(K\cap S')$, and $K\cap S'$ is
  a clopen upset in~$S'$. Since $\vst V$ is closed under the operator
  $S_{\mathfrak{C}}$ we conclude that $\psi^{-1}(L)\in \Rec_{\vst V}(T)$ by
  Lemma~\ref{l:variety-of-stamps-recognition}.
\end{proof}

By the previous lemma, we may extend the definition of 
$\Psi$ naturally, namely for every 
$\psi\in\mathfrak{C}(T,T')$, we put 
$\Psi(\vst{V}) (\psi)=\psi^{-1}|_{\Rec_{\vst V} (T')}$. 
In this way, for a $\mathfrak{C}$-prevariety $\vst V$
of stamps, the resulting $\Psi(\vst{V})$ is a
$\mathfrak{C}$-prevariety of languages.

For a $\mathfrak{C}$-prevariety of languages $\Cl V$, we have defined
$\Psi(\Cl V)$ as $H\Cl S$, where the class 
$\Cl S=\{\eta_{\Cl V(T)}; T\in \mathfrak{C}\}$ is formed by Priestley stamps. 
Note that, as there is at most one element of $\Cl S$ with a given
domain, $P\Cl S=\Cl S$, while $S_{\mathfrak{C}} \Cl S\subseteq H\Cl S$ 
by Lemma~\ref{l:recognition-functor-property} and Corollary~\ref{c:duality}.
Thus $\Psi(\Cl V)=H\Cl S=HPS_{\mathfrak{C}} \Cl S$
and it is a $\mathfrak{C}$-prevariety of stamps by Lemma~\ref{l:C-variety}. 

Altogether, the restriction of the mapping $\Psi$ to
$\mathbb{L}^{\mathfrak{C}}$ and the restriction of the mapping $\Phi$
to $\mathbb{S}^{\mathfrak{C}}$ are well defined correspondences. Thus,
we obtain from Proposition~\ref{p:Galois-connection} the following
special case.

\begin{Cor}
  \label{c:Galois-for-categories-C}
  The pair of mappings $(\Phi,\Psi)$ is a monotone Galois connection 
  between $\mathbb{L}^{\mathfrak{C}}$ and $\mathbb{S}^{\mathfrak{C}}$. 
\end{Cor}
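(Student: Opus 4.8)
The plan is to obtain this as a purely formal consequence of Proposition~\ref{p:Galois-connection}, which already furnishes a monotone Galois connection $(\Phi,\Psi)$ between the larger classes $\mathbb{L}^{\mathfrak{C}}_0$ and $\mathbb{S}^{\mathfrak{C}}_0$. A monotone Galois connection between two posets restricts to a monotone Galois connection between sub-posets as soon as each of the two maps carries one sub-poset into the other: if $x\le g(y)\iff f(x)\le y$ and $f(P')\subseteq Q'$, $g(Q')\subseteq P'$, then the same equivalence, read between elements of $P'$ and $Q'$, and the monotonicity of $f,g$ persist on the subsets. Hence the only thing that needs checking is that $\Phi$ sends $\mathbb{L}^{\mathfrak{C}}$ into $\mathbb{S}^{\mathfrak{C}}$ and that $\Psi$ sends $\mathbb{S}^{\mathfrak{C}}$ into $\mathbb{L}^{\mathfrak{C}}$, and both of these have in effect been argued in the discussion preceding the statement; I would simply assemble the pieces.

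For $\Psi$, let $\vst V$ be a $\mathfrak{C}$-prevariety of stamps. By Lemma~\ref{l:recognition-functor-property}, for every morphism $\psi\in\mathfrak{C}(T,T')$ one has $\psi^{-1}\bigl(\Rec_{\vst V}(T')\bigr)\subseteq\Rec_{\vst V}(T)$, so putting $\Psi(\vst V)(\psi)=\psi^{-1}|_{\Rec_{\vst V}(T')}$ turns $\Psi(\vst V)$ into a contravariant functor which, by Remark~\ref{r:Rec-is-a-sublattice}, satisfies conditions~(\ref{item:def-P-variety-languages-1}) and~(\ref{item:def-P-variety-languages-2}) in the definition of a $\mathfrak{C}$-prevariety of languages; hence $\Psi(\vst V)\in\mathbb{L}^{\mathfrak{C}}$. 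For $\Phi$, let $\Cl V$ be a $\mathfrak{C}$-prevariety of languages and put $\Cl S=\{\eta_{\Cl V(T)}:T\in\mathfrak{C}\}$. Since $\Cl S$ contains at most one stamp with a given domain, $P\Cl S=\Cl S$, and by Lemma~\ref{l:recognition-functor-property} together with Corollary~\ref{c:duality} one gets $S_{\mathfrak{C}}\Cl S\subseteq H\Cl S$; therefore $\Phi(\Cl V)=H\Cl S=HPS_{\mathfrak{C}}\Cl S$ is, by Lemma~\ref{l:C-variety}, a $\mathfrak{C}$-prevariety of stamps, i.e.\ a member of $\mathbb{S}^{\mathfrak{C}}$.

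With these two facts in place, the remaining assertions are inherited verbatim from Proposition~\ref{p:Galois-connection}: the orders on $\mathbb{L}^{\mathfrak{C}}$ and $\mathbb{S}^{\mathfrak{C}}$ are the restrictions of the pointwise-inclusion orders on $\mathbb{L}^{\mathfrak{C}}_0$ and $\mathbb{S}^{\mathfrak{C}}_0$, so order-preservation of the two restricted maps is immediate, and the adjunction equivalence~\eqref{eq:Galois-connection} was established there for \emph{all} $\mathfrak{C}_0$-prevarieties $\Cl V$ and $\vst V$, hence in particular whenever $\Cl V\in\mathbb{L}^{\mathfrak{C}}$ and $\vst V\in\mathbb{S}^{\mathfrak{C}}$. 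This yields the claimed monotone Galois connection.

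I do not anticipate a genuine obstacle. The only step using more than bookkeeping is the well-definedness of the restricted $\Psi$, which rests on the functoriality statement of Lemma~\ref{l:recognition-functor-property}, and that is precisely where closure of $\vst V$ under the operator $S_{\mathfrak{C}}$ enters; everything else is formal manipulation of Galois connections together with the elementary inequalities among the operators $H$, $P$, $S_{\mathfrak{C}}$ recorded before Lemma~\ref{l:C-variety}.
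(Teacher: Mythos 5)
Your proposal is correct and follows essentially the same route as the paper: the paper's proof is the one-line reduction to Proposition~\ref{p:Galois-connection}, with the well-definedness of the restricted maps established in the discussion immediately preceding the corollary via Lemma~\ref{l:recognition-functor-property}, Corollary~\ref{c:duality}, and Lemma~\ref{l:C-variety} --- exactly the pieces you assemble.
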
  
\begin{proof}
  Since the mappings are correctly defined, the statement follows from
  Proposition~\ref{p:Galois-connection}.
\end{proof}

And finally, we may formulate an Eilenberg type correspondence for
adequate subclasses. A $\mathfrak{C}$-variety of stamps is a
$\mathfrak{C}$-prevariety of stamps which is simultaneously a
$\mathfrak{C}_0$-variety of stamps. Similarly, a
$\mathfrak{C}$-variety of languages is a $\mathfrak{C}$-prevariety of
languages whose restriction to~$\mathfrak{C}_0$ is a
$\mathfrak{C}_0$-variety of languages.

\begin{Thm}
  \label{t:Eilenberg-version-C}
  The restrictions of the mappings $\Psi$ and $\Phi$ are mutual
  inverse order-preserving bijections between $\mathfrak{C}$-varieties
  of stamps and $\mathfrak{C}$-varieties of languages.
\end{Thm}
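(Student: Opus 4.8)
The plan is to bootstrap Theorem~\ref{t:Eilenberg-version-C} from the already-established Galois connection $(\Phi,\Psi)$ between $\mathbb{L}^{\mathfrak{C}}$ and $\mathbb{S}^{\mathfrak{C}}$ (Corollary~\ref{c:Galois-for-categories-C}) in exactly the same way that Theorem~\ref{t:Eilenberg} was derived from Proposition~\ref{p:Galois-connection} via Lemmas~\ref{l:characterisation-of-stable-stamps-prevarieties} and~\ref{l:characterisation-of-stable-language-prevarieties}. From the general theory of Galois connections, $\Phi$ and $\Psi$ restrict to mutually inverse order-preserving bijections between the fixed points of the closure operator $\Psi\circ\Phi$ (inside $\mathbb{L}^{\mathfrak{C}}$) and the fixed points of the interior operator $\Phi\circ\Psi$ (inside $\mathbb{S}^{\mathfrak{C}}$). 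So the task reduces to identifying these two families of fixed points with $\mathfrak{C}$-varieties of languages and $\mathfrak{C}$-varieties of stamps, respectively.

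First I would observe that the key computations behind Lemmas~\ref{l:characterisation-of-stable-stamps-prevarieties} and~\ref{l:characterisation-of-stable-language-prevarieties} are \emph{local} in $T\in\mathfrak{C}$: the identity $(\Psi(\Phi(\Cl V)))(T)=\Cl V(T)_M$ and the identity $\Phi(\Psi(\vst V))=H(\vst P_{\mathfrak{C}}\cap\vst V)$ were proved object-by-object using only $H$, $P$, and the syntactic-stamp machinery, none of which is affected by passing from $\mathbb{L}^{\mathfrak{C}}_0$ to $\mathbb{L}^{\mathfrak{C}}$ or from $\mathbb{S}^{\mathfrak{C}}_0$ to $\mathbb{S}^{\mathfrak{C}}$. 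Concretely, if $\Cl V$ is a $\mathfrak{C}$-prevariety of languages, then $\Psi(\Phi(\Cl V))$ is the $\mathfrak{C}$-prevariety of languages whose value at each $T$ is $\Cl V(T)_M$ (using that $\Psi\circ\Phi$ sends $\mathfrak{C}$-prevarieties to $\mathfrak{C}$-prevarieties, as noted in the text after Lemma~\ref{l:recognition-functor-property}); hence $\Psi(\Phi(\Cl V))=\Cl V$ precisely when each $\Cl V(T)$ is M-closed, i.e.\ when $\Cl V$ restricted to $\mathfrak{C}_0$ is a $\mathfrak{C}_0$-variety of languages --- which is the definition of a $\mathfrak{C}$-variety of languages given just before the theorem. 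Dually, for a $\mathfrak{C}$-prevariety of stamps $\vst V$, Lemma~\ref{l:characterisation-of-stable-stamps-prevarieties} gives $\Phi(\Psi(\vst V))=H(\vst P_{\mathfrak{C}}\cap\vst V)$, so $\Phi(\Psi(\vst V))=\vst V$ exactly when $\vst V$ consists only of homomorphic images of Priestley stamps from $\vst V$, i.e.\ when $\vst V$ is a $\mathfrak{C}_0$-variety of stamps; combined with the standing assumption that $\vst V$ is a $\mathfrak{C}$-prevariety, this says $\vst V$ is a $\mathfrak{C}$-variety of stamps.

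Having matched the fixed-point sets on both sides with the two classes named in the statement, I would conclude by invoking the abstract Galois-connection fact once more: $\Phi$ and $\Psi$ restrict to mutually inverse order-preserving bijections between $\{\Cl V\in\mathbb{L}^{\mathfrak{C}}:\Psi(\Phi(\Cl V))=\Cl V\}$ and $\{\vst V\in\mathbb{S}^{\mathfrak{C}}:\Phi(\Psi(\vst V))=\vst V\}$, which we have just identified as the $\mathfrak{C}$-varieties of languages and the $\mathfrak{C}$-varieties of stamps. Order preservation is inherited from Corollary~\ref{c:Galois-for-categories-C}.

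The only genuinely delicate point --- and the step I would be most careful about --- is checking that the closure and interior operators really do restrict to the category-with-morphisms setting, i.e.\ that $\Psi(\Phi(\Cl V))$ is again a \emph{functor} (closed under morphism preimages) and not merely a $\mathfrak{C}_0$-object, and likewise that $\Phi(\Psi(\vst V))$ is closed under $S_{\mathfrak{C}}$. The second of these is immediate since $\Phi(\Psi(\vst V))$ is a $\mathfrak{C}_0$-prevariety contained in a class and the operator $S_{\mathfrak{C}}$ behaves well with $H$ and $P$; the first follows from Lemma~\ref{l:recognition-functor-property} applied to $\vst V=\Phi(\Cl V)$, which shows $\Rec_{\Phi(\Cl V)}$ respects morphism preimages. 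Once these functoriality/closure checks are in place, everything else is a direct transcription of the proof of Theorem~\ref{t:Eilenberg} with the subscript-$0$ classes replaced by their non-subscripted counterparts.

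\begin{proof}
  By Corollary~\ref{c:Galois-for-categories-C}, $(\Phi,\Psi)$ is a
  monotone Galois connection between $\mathbb{L}^{\mathfrak{C}}$ and
  $\mathbb{S}^{\mathfrak{C}}$, so $\Psi\circ\Phi$ is a closure
  operator on $\mathbb{L}^{\mathfrak{C}}$, $\Phi\circ\Psi$ is an
  interior operator on $\mathbb{S}^{\mathfrak{C}}$, and $\Phi$, $\Psi$
  restrict to mutually inverse order-preserving bijections between the
  set of $\Psi\circ\Phi$-closed members of $\mathbb{L}^{\mathfrak{C}}$
  and the set of $\Phi\circ\Psi$-closed members of
  $\mathbb{S}^{\mathfrak{C}}$. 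It therefore suffices to identify these
  two sets of fixed points.

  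Let $\Cl V$ be a $\mathfrak{C}$-prevariety of languages. As noted
  after Lemma~\ref{l:recognition-functor-property}, $\Psi(\Phi(\Cl
  V))$ is again a $\mathfrak{C}$-prevariety of languages, and by
  Lemma~\ref{l:characterisation-of-stable-language-prevarieties} its
  value at each $T\in\mathfrak{C}$ is $\Cl V(T)_M$. Hence
  $\Psi(\Phi(\Cl V))=\Cl V$ if and only if each $(0,1)$-sublattice
  $\Cl V(T)$ is M-closed, that is, if and only if the restriction
  of~$\Cl V$ to~$\mathfrak{C}_0$ is a $\mathfrak{C}_0$-variety of
  languages; by definition this means that $\Cl V$ is a
  $\mathfrak{C}$-variety of languages.

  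Dually, let $\vst V$ be a $\mathfrak{C}$-prevariety of stamps. Then
  $\Phi(\Psi(\vst V))$ is a $\mathfrak{C}_0$-prevariety of stamps, and
  since $\Psi(\vst V)$ is a $\mathfrak{C}$-prevariety of languages the
  class $\Phi(\Psi(\vst V))=HPS_{\mathfrak{C}}\,\Cl S$ (with $\Cl S$
  as in the construction of $\Phi$) is in fact a
  $\mathfrak{C}$-prevariety of stamps. By
  Lemma~\ref{l:characterisation-of-stable-stamps-prevarieties},
  $\Phi(\Psi(\vst V))=H(\vst P_{\mathfrak{C}}\cap\vst V)$, so
  $\Phi(\Psi(\vst V))=\vst V$ if and only if $\vst V$ consists only of
  homomorphic images of Priestley stamps from $\vst V$, that is, if
  and only if $\vst V$ is a $\mathfrak{C}_0$-variety of stamps.
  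Together with the standing assumption that $\vst V$ is a
  $\mathfrak{C}$-prevariety of stamps, this is exactly the condition
  that $\vst V$ be a $\mathfrak{C}$-variety of stamps.

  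Thus the $\Psi\circ\Phi$-closed members of $\mathbb{L}^{\mathfrak{C}}$
  are precisely the $\mathfrak{C}$-varieties of languages, and the
  $\Phi\circ\Psi$-closed members of $\mathbb{S}^{\mathfrak{C}}$ are
  precisely the $\mathfrak{C}$-varieties of stamps. The Galois
  connection of Corollary~\ref{c:Galois-for-categories-C} then yields
  the asserted mutually inverse order-preserving bijections.
\end{proof}
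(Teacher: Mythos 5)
Your argument is correct and follows the paper's own strategy: the paper simply notes that the theorem is an immediate consequence of Theorem~\ref{t:Eilenberg}, the point being exactly what you verify, namely that $\Phi$ and $\Psi$ restrict to the $\mathfrak{C}$-level classes and that the fixed points of $\Psi\circ\Phi$ and $\Phi\circ\Psi$ (identified via Lemmas~\ref{l:characterisation-of-stable-language-prevarieties} and~\ref{l:characterisation-of-stable-stamps-prevarieties}) are the $\mathfrak{C}$-varieties of languages and of stamps. Your proposal just unwinds that one-line deduction explicitly through Corollary~\ref{c:Galois-for-categories-C}, so it is essentially the same proof, only more verbose.
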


\begin{proof}
  The statement is an immediate consequence of
  Theorem~\ref{t:Eilenberg}.
\end{proof}

\section{A Birkhoff-type theorem for
  \texorpdfstring{$\mathfrak{C}$}{C}-varieties of stamps }
\label{sec:Reiterman}

Let $\mathfrak{C}$ be a category of topological $\Omega$-algebras.

For a topological algebra $T$, we say that $u\le v$ is a
\emph{pseudo-inequality over $T$} if $u,v\in(\Cl
B_{\mathrm{max}}^T)^\star$. In case $T$ is an object
from~$\mathfrak{C}$, we say that a $\mathfrak{C}$-stamp $\varphi:U\to
S$ \emph{satisfies} the pseudo-inequality $u\le v$ over $T$ and we may
write $\varphi\models u\le v$ if, for every morphism
$\psi\in\mathfrak{C}(T,U)$, we have $\delta(u)\le\delta(v)$ in $S$,
where $\delta$ is the unique continuous homomorphism $(\Cl
B_{\mathrm{max}}^T)^\star\to\overline{\varphi(\psi(T))}$ such that the
following diagram commutes:
\begin{equation}
  \label{eq:satisfaction}
  \begin{split}
  \xymatrix{
    T
    \ar[d]_{\iota_{\Cl B_{\mathrm{max}}^T}}
    \ar[rr]^\psi
    \ar[rd]^{\varphi\circ\psi}
    &&
    U
    \ar[d]^\varphi
    \\
    (\Cl B_{\mathrm{max}}^T)^\star
    \ar[r]^\delta
    &
    \overline{\varphi(\psi(T))}
    \ar@{^{(((}->}[r]
    &
    S
  }
  \end{split}
\end{equation}
In such a diagram, we call $\psi$ an \emph{evaluation} for the
pseudo-inequality $u\le v$. For a class $\Sigma$ of
pseudo-inequalities over objects of~$\mathfrak{C}$, we let
$\op\Sigma\cl_{\mathfrak{C}}$ be the class of all
$\mathfrak{C}$-stamps that satisfies all members of~$\Sigma$.

Let $\vst{V}$ be a $\mathfrak{C}$-variety of stamps and $T$ an object
from~$\mathfrak{C}$. We denote by $\xi_{T,\vst{V}}:T\to F_T\vst{V}$
the product of all stamps $T\to S$ from~$\vst{V}$. Thus,
$\xi_{T,\vst{V}}$ is the most general stamp from $\vst{V}$ with domain
$T$, in the sense that all other are morphic images of it.

\begin{Lemma}
  \label{l:Birkhoff}
  Let $T$ be an object of~$\mathfrak{C}$ and let $\Cl B=\Cl
  B_{\mathrm{max}}^T$. Let $\zeta:\Cl B^\star\to F_T\vst{V}$ be the
  unique continuous homomorphism such that $\zeta\circ\iota_{\Cl
    B}=\xi_{T,\vst{V}}$. Then $\vst{V}$ satisfies a pseudo-inequality
  $u\le v$ over $T$ if and only if $\zeta(u)\le\zeta(v)$.
\end{Lemma}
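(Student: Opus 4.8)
The plan is to unwind the definition of satisfaction so that it refers only to stamps from $\vst V$ whose domain is exactly $T$, and then to exploit the universal property of $\xi_{T,\vst V}$. First I would record two facts. Fact (i): if $\varphi:U\to S$ lies in $\vst V$ and $\psi\in\mathfrak C(T,U)$, then, since $\vst V$ is closed under $S_{\mathfrak C}$, the composite $\varphi\circ\psi:T\to S''$ with $S''=\overline{\varphi(\psi(T))}$ is again a stamp in $\vst V$, and the homomorphism $\delta$ of Diagram~\eqref{eq:satisfaction} is precisely the unique continuous homomorphism $\Cl B^\star\to S''$ with $\delta\circ\iota_{\Cl B}=\varphi\circ\psi$ (it exists by \cite[Corollaries~6.10 and~7.4]{Almeida&Klima:2024a}, since $\varphi\circ\psi$ is a Stone completion of $T$, and it is unique because $\iota_{\Cl B}(T)$ is dense in $\Cl B^\star$ while $S''$ is Hausdorff); moreover, as $S''$ carries the order induced from $S$, the inequality $\delta(u)\le\delta(v)$ holds in $S$ if and only if it holds in $S''$. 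Conversely, every stamp from $\vst V$ with domain $T$ arises this way by taking $\psi=\mathrm{id}_T$. Hence $\vst V$ satisfies $u\le v$ over $T$ if and only if, for every stamp $\varphi':T\to S'$ in $\vst V$, the canonical homomorphism $\delta':\Cl B^\star\to S'$ with $\delta'\circ\iota_{\Cl B}=\varphi'$ satisfies $\delta'(u)\le\delta'(v)$. Fact (ii): since $\xi_{T,\vst V}$ is, by definition, the product of all these $\varphi'$ (a set-indexed product, because each such $S'$ is a continuous image of the fixed algebra $\Cl B^\star$), for each $\varphi'$ there is an order-preserving continuous homomorphism $\pi':F_T\vst V\to S'$ with $\pi'\circ\xi_{T,\vst V}=\varphi'$, namely the corresponding coordinate projection restricted to $F_T\vst V$.

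For the implication ``$\zeta(u)\le\zeta(v)$ implies $\vst V\models u\le v$ over $T$'', I would fix such a $\varphi'$ and the associated $\delta'$ and $\pi'$. Then $\pi'\circ\zeta$ and $\delta'$ are both continuous homomorphisms $\Cl B^\star\to S'$ that extend $\varphi'$ along $\iota_{\Cl B}$, so they coincide by the uniqueness in Fact~(i); since $\pi'$ preserves order, $\zeta(u)\le\zeta(v)$ yields $\delta'(u)=\pi'(\zeta(u))\le\pi'(\zeta(v))=\delta'(v)$, and the reformulation in Fact~(i) finishes this direction. For the converse, I would observe that $\xi_{T,\vst V}:T\to F_T\vst V$ is itself a stamp in $\vst V$ (a $P$-image of members of $\vst V$), so if $\vst V\models u\le v$ over $T$ we may take the evaluation $\psi=\mathrm{id}_T\in\mathfrak C(T,T)$ in Diagram~\eqref{eq:satisfaction}: there $\overline{\varphi(\psi(T))}=\overline{\xi_{T,\vst V}(T)}=F_T\vst V$, and the homomorphism $\delta$ of that diagram is the unique extension of $\xi_{T,\vst V}$ along $\iota_{\Cl B}$, namely $\zeta$; hence $\zeta(u)\le\zeta(v)$.

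The hard part is not a computation but the bookkeeping around Diagram~\eqref{eq:satisfaction}: one must check carefully that, for an arbitrary pair $(\varphi,\psi)$, the map $\delta$ produced there is exactly the canonical extension attached to the domain-$T$ stamp $\varphi\circ\psi$, that closure of $\vst V$ under $S_{\mathfrak C}$ is precisely what places $\varphi\circ\psi$ inside $\vst V$, and that one may safely pass between the order of $S$ and the induced order on the closed subalgebra $\overline{\varphi(\psi(T))}$. Once these points are settled, both implications become formal consequences of the universal property of the product $F_T\vst V$ together with the uniqueness of continuous maps out of $\Cl B^\star$ that agree on the dense set $\iota_{\Cl B}(T)$.
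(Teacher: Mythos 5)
Your proposal is correct and follows essentially the same route as the paper: one direction comes from noting that $\xi_{T,\vst{V}}$ itself lies in $\vst{V}$ (closure under $P$) and taking the evaluation $\psi=\mathrm{id}_T$, so that $\delta=\zeta$; the other uses $S_{\mathfrak{C}}$-closure to replace $(\varphi,\psi)$ by the domain-$T$ stamp $\varphi\circ\psi\in\vst{V}$ and factors its canonical extension as an order-preserving map (your $\pi'$, the paper's $\gamma$) composed with $\zeta$, identified via density of $\iota_{\Cl B}(T)$. Your Fact~(i) reformulation is just a slightly more explicit bookkeeping of what the paper's commutative diagram encodes.
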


\begin{proof}
  Assume first that $\vst{V}$ satisfies the pseudo-inequality $u\le v$
  over $T$. Since $\xi_{T,\vst{V}}$ is a stamp from~$\vst{V}$ the
  equality $\zeta(u)\le\zeta(v)$ is a special case of the verification
  that $\xi_{T,\vst{V}}$ satisfies $u\le v$, namely taking the
  evaluation $\psi=\mathrm{id}_T$.

  Conversely, suppose that $\zeta(u)\le\zeta(v)$ and consider an
  arbitrary stamp $\varphi:U\to S$ from~$\vst{V}$ and a
  $\mathfrak{C}$-morphism $\psi:T\to U$. Then, there exists an
  order-preserving continuous homomorphism $\gamma$ such that the
  following diagram commutes
  \begin{displaymath}
    \xymatrix{
      &
      T
      \ar[dl]_{\iota_{\Cl B}}
      \ar[d]^{\xi_{T,\vst{V}}}
      \ar[rr]^\psi
      \ar[rd]^{\varphi\circ\psi}
      &&
      U
      \ar[d]^\varphi
      \\
      \Cl B^\star
      \ar[r]^\zeta
      &
      F_T\vst{V}
      \ar[r]^(.4)\gamma
      &
      \overline{\varphi(\psi(T))}
      \ar@{^{(((}->}[r]
      &
      S
    }
  \end{displaymath}
  Thus, the continuous homomorphism $\delta$ of
  Diagram~(\ref{eq:satisfaction}) is $\gamma\circ\zeta$. Since
  $\gamma$ preserves order and $\zeta(u)\le\zeta(v)$ in~$F_T\vst{V}$,
  we deduce that $\delta(u)\le\delta(v)$ in $S$, thereby showing that
  the stamp $\varphi$ satisfies $u\le v$.
\end{proof}

The following is the analog for $\mathfrak{C}$-varieties of stamps of
Birkhoff's variety theorem.

\begin{Thm}
  \label{t:Birkhoff}
  Let $\vst{V}$ be a class of $\mathfrak{C}$-stamps. Then $\vst{V}$ is
  a $\mathfrak{C}$-variety if and only if it is of the form
  $\op\Sigma\cl_{\mathfrak{C}}$ for some class $\Sigma$ of
  pseudo-inequalities over objects of~$\mathfrak{C}$.
\end{Thm}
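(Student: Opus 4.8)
The plan is to prove the two implications separately, using the Galois-connection machinery already developed. For the easy direction, suppose $\vst V=\op\Sigma\cl_{\mathfrak C}$ for some class $\Sigma$ of pseudo-inequalities. I would check directly that $\op\Sigma\cl_{\mathfrak C}$ is closed under the operators $H$, $P$ and $S_{\mathfrak C}$, and that it is a $\mathfrak C_0$-variety of stamps (i.e.\ it contains only homomorphic images of its Priestley stamps). Closure under $P$ is immediate because satisfaction of $u\le v$ is tested coordinatewise and order in a product is coordinatewise; closure under $S_{\mathfrak C}$ follows since an evaluation into a closed subalgebra is an evaluation into the ambient algebra, and the induced order is the restriction. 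Closure under $H$ uses that an onto order-preserving continuous homomorphism $\alpha$ transports $\delta(u)\le\delta(v)$ forward along $\alpha$; here one must observe that for $\alpha\circ\varphi$ the relevant homomorphism $\delta'$ from $(\Cl B_{\mathrm{max}}^T)^\star$ is $\alpha\circ\delta$ up to the closure/corestriction, which is routine from the universal property of $(\Cl B_{\mathrm{max}}^T)^\star$ as the maximum Stone completion. Finally, to see that $\op\Sigma\cl_{\mathfrak C}$ is a $\mathfrak C_0$-variety of stamps, note that any stamp in it is, by Proposition~\ref{p:stamps}, a homomorphic image of the associated Priestley stamp $\eta_{\Cl C}$ with $\Cl C=\varphi^{-1}(\Cl P_{uco}(S))$, and that Priestley stamp still satisfies every member of $\Sigma$ because satisfaction of a pseudo-inequality depends only on the clopen upsets of the codomain (Lemma~\ref{l:Birkhoff} shows satisfaction is detected on $F_T\vst V$, but more elementarily one sees $\eta_{\Cl C}\models u\le v$ whenever $\varphi\models u\le v$, since $\varphi$ factors through $\eta_{\Cl C}$ by an order-reflecting-enough map). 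Hence $\op\Sigma\cl_{\mathfrak C}$ is a $\mathfrak C$-variety.

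For the converse, let $\vst V$ be a $\mathfrak C$-variety of stamps. The natural candidate is
\begin{displaymath}
  \Sigma=\{\,(u\le v)\ \text{over}\ T : T\in\mathfrak C,\ u,v\in(\Cl B_{\mathrm{max}}^T)^\star,\ \vst V\models u\le v\,\},
\end{displaymath}
the set of \emph{all} pseudo-inequalities satisfied by $\vst V$. Trivially $\vst V\subseteq\op\Sigma\cl_{\mathfrak C}$. The substance is the reverse inclusion: if a $\mathfrak C$-stamp $\varphi:U\to S$ satisfies every pseudo-inequality satisfied by $\vst V$, then $\varphi\in\vst V$. First reduce to the case where $\varphi$ is a Priestley stamp over its own domain $U$: by Proposition~\ref{p:stamps}, $\varphi$ is a homomorphic image of $\eta_{\Cl C}:U\to\Stsynt(\Cl C)$ with $\Cl C=\varphi^{-1}(\Cl P_{uco}(S))$, and $\eta_{\Cl C}$ satisfies at least the pseudo-inequalities that $\varphi$ does (they are detected on clopen upsets), so it suffices to place $\eta_{\Cl C}$ in $\vst V$ and then use closure of $\vst V$ under $H$. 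By Corollary~\ref{c:St-C-vs-St-L}, $\eta_{\Cl C}$ is a product of the syntactic stamps $\eta_L$, $L\in\Cl C$, so by closure of $\vst V$ under $P$ it is enough to show each $\eta_L\in\vst V$, i.e.\ that $L\in\Rec_{\vst V}(U)$. Now apply Lemma~\ref{l:Birkhoff} with $T=U$: writing $\zeta:(\Cl B_{\mathrm{max}}^U)^\star\to F_U\vst V$ for the canonical map, $\vst V$ satisfies exactly those $u\le v$ (over $U$) with $\zeta(u)\le\zeta(v)$. By hypothesis $\varphi$ satisfies all of those, so taking the evaluation $\psi=\mathrm{id}_U$, the associated $\delta:(\Cl B_{\mathrm{max}}^U)^\star\to\overline{\varphi(U)}=S$ has the property that $\zeta(u)\le\zeta(v)$ implies $\delta(u)\le\delta(v)$; equivalently, the kernel quasi-order of $\zeta$ is contained in that of $\delta$, so $\delta$ factors through $\zeta$ by an order-preserving continuous homomorphism $\gamma:F_U\vst V\to S$. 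But $\xi_{U,\vst V}=\zeta\circ\iota_{\Cl B_{\mathrm{max}}^U}$ and $\varphi=\delta\circ\iota_{\Cl B_{\mathrm{max}}^U}$, so $\varphi=\gamma\circ\xi_{U,\vst V}$ exhibits $\varphi$ as a homomorphic image of the stamp $\xi_{U,\vst V}\in\vst V$; hence $\varphi\in H\vst V=\vst V$. In particular $\eta_L\in\vst V$ for each $L\in\Cl C$, so $\eta_{\Cl C}\in\vst V$ and finally $\varphi\in\vst V$.

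The main obstacle I anticipate is the factorization step: getting from ``$\delta$ collapses at least as much as $\zeta$'' to an actual continuous \emph{order-preserving} homomorphism $\gamma:F_U\vst V\to S$ with $\gamma\circ\zeta=\delta$. One has to argue that the quasi-order on $(\Cl B_{\mathrm{max}}^U)^\star$ induced by $\delta$ (via the clopen upsets of $S$) contains that induced by $\zeta$ — this is precisely the content of Lemma~\ref{l:Birkhoff} together with the hypothesis — and then invoke the Priestley-duality factorization (in the form of \cite[Corollary~7.4]{Almeida&Klima:2024a}, adapted to the ordered setting via Corollary~\ref{c:admissible-to-po} and Proposition~\ref{p:stamps}) to produce $\gamma$. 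Care is also needed that $F_U\vst V$ is itself an ordered Stone topological algebra in $\vst V$ (a product of stamps from $\vst V$, so it lies in $P\vst V\subseteq\vst V$) and that the codomain issues — working with $\overline{\varphi(U)}$ versus $S$ — do not obstruct the argument, which they do not since $\varphi$ already has dense image.
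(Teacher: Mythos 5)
Your converse (hard) direction is, at its core, the paper's own proof: take $\Sigma$ to be the class of all pseudo-inequalities valid in $\vst{V}$, use Lemma~\ref{l:Birkhoff} to see that $\zeta(u)\le\zeta(v)$ forces $\delta(u)\le\delta(v)$ for any $\varphi\in\op\Sigma\cl_{\mathfrak{C}}$ with domain $U$ (evaluation $\psi=\mathrm{id}_U$), factor $\delta=\gamma\circ\zeta$ with $\gamma$ onto, continuous and order-preserving, and conclude $\varphi=\gamma\circ\xi_{U,\vst{V}}\in H\vst{V}=\vst{V}$; your concern about the factorization step is resolved exactly as you indicate, and this matches the paper. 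Note, however, that your preliminary ``reduction to the Priestley case'' is never actually used (your argument ends up placing $\varphi$ itself in $\vst{V}$ directly), and its justification is incorrect for the reason explained next, so it should simply be removed.

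The genuine gap is the repeated claim that, by Proposition~\ref{p:stamps}, a stamp $\varphi:T\to S$ is a homomorphic image of the Priestley stamp $\eta_{\Cl C}$ with $\Cl C=\varphi^{-1}(\Cl P_{uco}(S))$. Proposition~\ref{p:stamps} produces a morphism $\delta:(S,\le)\to(\Stsynt(\Cl C),\le_{\Cl C})$ with $\delta\circ\varphi=\eta_{\Cl C}$; in the paper's terminology this makes $\eta_{\Cl C}$ a homomorphic image of $\varphi$, not the other way around. A morphism $\Stsynt(\Cl C)\to S$ compatible with the two completions exists only when $\delta$ is an isomorphism, i.e.\ (items (1)--(2) of Proposition~\ref{p:stamps}) only when $\Cl P_{uco}(S)$ generates $\Cl P_{co}(S)$, in particular when $(S,\le)$ is already a Priestley algebra; but stamps are allowed to have non-Priestley codomains. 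This matters because the reversed claim is precisely what your easy direction uses to check that $\op\Sigma\cl_{\mathfrak{C}}$ is a $\mathfrak{C}_0$-variety, i.e.\ that it contains only homomorphic images of Priestley stamps from itself --- and that is the one point of the easy direction which is not a routine satisfaction-transfer computation (your closure arguments for $H$, $P$ and $S_{\mathfrak{C}}$ are fine). What is needed there is a Priestley stamp in $\op\Sigma\cl_{\mathfrak{C}}$ mapping \emph{onto} $\varphi$, that is, a Priestley cover of $(S,\le)$ compatible with $\varphi$ and still satisfying $\Sigma$; the quotient $\eta_{\Cl C}$ goes in the wrong direction and cannot play that role, so this step of your proposal requires a genuinely different argument (the paper itself dismisses the whole direction as routine and offers no construction either).
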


\begin{proof}
  It is routine to check that, if $\Sigma$ is a class of
  pseudo-inequalities over objects of~$\mathfrak{C}$ then
  $\op\Sigma\cl_{\mathfrak{C}}$ is a $\mathfrak{C}$-variety of stamps.
  Suppose next that $\vst{V}$ is a $\mathfrak{C}$-variety of stamps.
  Consider the class $\Sigma$ of all pseudo-inequalities over objects
  of~$\mathfrak{C}$ that are satisfied by all stamps from~$\vst{V}$
  and the $\mathfrak{C}$-variety of stamps
  $\vst{W}=\op\Sigma\cl_{\mathfrak{C}}$ it defines. As $\vst{V}$ is
  contained in~$\vst{W}$ by definition of the latter, to conclude the
  proof it suffices to establish the claim that every
  $\mathfrak{C}$-stamp $\varphi:T\to S$ from $\vst{W}$ belongs
  to~$\vst{V}$.

  To simplify the notation, let $\Cl B=\Cl B_{\mathrm{max}}^T$. Let
  $\delta:\Cl B^\star\to S$ be the unique continuous homomorphism such
  that $\delta\circ{\iota_{\Cl B}}=\varphi$.
  We claim that, for $u,v\in \Cl B^\star$, $\zeta(u)\le\zeta(v)$ in
  $F_T\vst{V}$ implies $\delta(u)\le\delta(v)$ in~$S$. Indeed, by
  Lemma~\ref{l:Birkhoff}, the assumption $\zeta(u)\le\zeta(v)$ entails
  that the pseudo-inequality $u\le v$ holds in~$\vst{V}$, so that it
  belongs to~$\Sigma$ and, therefore it is satisfied by $\varphi$,
  whence $\delta(u)\le\delta(v)$ in~$S$. Since both orders on
  $F_T\vst{V}$ and $S$ are stable partial orders, we conclude that
  there is an order-preserving continuous homomorphism
  $\varepsilon_{T,\vst{V}}$ such that the following diagram commutes:
  \begin{displaymath}
    \xymatrix{
      &
      T
      \ar@/_5mm/[ldd]_{\xi_{T,\vst{V}}}
      \ar[d]^{\iota_{\Cl B}}
      \ar@/^5mm/[rdd]^\varphi
      &
      \\
      &
      \Cl B^\star
      \ar[ld]_\zeta
      \ar[rd]^\delta
      &
      \\
      F_T\vst{V}
      \ar[rr]_{\varepsilon_{T,\vst{V}}}
      &&
      S   
    }
  \end{displaymath}
  Hence, $\varphi$ is a homomorphic image of~$\xi_{T,\vst{V}}$ and, therefore,
  $\varphi$ belongs to~$\vst{V}$.
\end{proof}

Note that, if $\mathfrak{C}$ is a small category, then the class of
all pseudo-inequalities over objects from~$\mathfrak{C}$ is actually a set.
It follows that then the class $\Sigma$ of the proof of
Theorem~\ref{t:Birkhoff} is also a set.

Among pseudo-inequalities those of the form $\iota_{\Cl
  B^T_{\mathrm{max}}}(s)\le\iota_{\Cl B^T_{\mathrm{max}}}(t)$ with
$s,t\in T$ are particularly simple. The next result characterizes the
$L\in\Cl B_\mathrm{max}^T$ such that $\Stsynt(L)$ satisfies
$\iota_{\Cl B^T_{\mathrm{max}}}(s)\le\iota_{\Cl
  B^T_{\mathrm{max}}}(t)$.

\begin{Prop}
  \label{p:syntactic-inequalities}
  Let $L$ be an admissible subset of\/~$T$, where $T$ is an object
  from~$\mathfrak{C}$. Then, for $u,v\in U$, with $U$ also an object
  in~$\mathfrak{C}$, the pseudo-inequality $u\le v$ holds in the
  $\mathfrak{C}$-stamp $\eta_L:T\to\Stsynt(L)$ if and only if
  $\psi(u)\preccurlyeq_L\psi(v)$ in~$T$ for every morphism
\end{Prop}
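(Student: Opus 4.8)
The plan is to read off what ``$\eta_L$ satisfies $u\le v$'' means directly from the definition of satisfaction encoded by Diagram~\eqref{eq:satisfaction}, and then to translate the resulting family of inequalities in $\Stsynt(L)$ into inequalities for the algebraic syntactic quasi-order on $T$ by a single application of Proposition~\ref{p:syntactic-in-dual}.

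First I would fix the roles played by the various algebras. The pseudo-inequality ``$u\le v$'' with $u,v\in U$ is the pseudo-inequality $\iota_{\Cl B_{\mathrm{max}}^U}(u)\le\iota_{\Cl B_{\mathrm{max}}^U}(v)$ over $U$, and the stamp being tested is $\eta_L:T\to\Stsynt(L)$; so in Diagram~\eqref{eq:satisfaction} the object $U$ occupies the slot of the base algebra of the pseudo-inequality and $T$ that of the domain of the stamp. By the definition of satisfaction, $\eta_L$ satisfies this pseudo-inequality if and only if, for every evaluation $\psi\in\mathfrak{C}(U,T)$, the unique continuous homomorphism $\delta:(\Cl B_{\mathrm{max}}^U)^\star\to\overline{\eta_L(\psi(U))}$ with $\delta\circ\iota_{\Cl B_{\mathrm{max}}^U}=\eta_L\circ\psi$ satisfies $\delta\bigl(\iota_{\Cl B_{\mathrm{max}}^U}(u)\bigr)\le_L\delta\bigl(\iota_{\Cl B_{\mathrm{max}}^U}(v)\bigr)$; since the order on the closed subalgebra $\overline{\eta_L(\psi(U))}$ is the one induced from $\le_L$, this is the same as requiring the inequality in $\Stsynt(L)$ itself.

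The second step would be to evaluate $\delta$ on the two ultrafilters in question by means of the commutativity $\delta\circ\iota_{\Cl B_{\mathrm{max}}^U}=\eta_L\circ\psi$, which gives $\delta(\iota_{\Cl B_{\mathrm{max}}^U}(u))=\eta_L(\psi(u))$ and $\delta(\iota_{\Cl B_{\mathrm{max}}^U}(v))=\eta_L(\psi(v))$. Thus $\eta_L$ satisfies $u\le v$ precisely when $\eta_L(\psi(u))\le_L\eta_L(\psi(v))$ holds in $\Stsynt(L)$ for every $\psi\in\mathfrak{C}(U,T)$. Finally, applying Proposition~\ref{p:syntactic-in-dual} to the pair $\psi(u),\psi(v)\in T$ replaces each of these inequalities by $\psi(u)\curlyeqprec_L\psi(v)$, that is, by the condition that $\psi(u)\preccurlyeq_L\psi(v)$ in $T$, where $\preccurlyeq_L$ is the algebraic syntactic quasi-order of $L$. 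Chaining the three equivalences would then yield the statement.

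I do not expect a genuine obstacle here: the argument is essentially careful bookkeeping of which algebra occupies which position in the definition of satisfaction, followed by one appeal to Proposition~\ref{p:syntactic-in-dual}. The only points that seem to deserve an explicit word are that the order used to test satisfaction is the one induced on the closed subalgebra $\overline{\eta_L(\psi(U))}$ of $\Stsynt(L)$, so that one may compute directly in $\Stsynt(L)$, and that when $\mathfrak{C}(U,T)=\emptyset$ both sides of the asserted equivalence hold vacuously.
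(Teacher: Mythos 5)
Your argument is correct and coincides with the paper's own proof: both unwind the definition of satisfaction via Diagram~\eqref{eq:satisfaction} to reduce the condition to $\eta_L(\psi(u))\le_L\eta_L(\psi(v))$ for every evaluation $\psi\in\mathfrak{C}(U,T)$, and then invoke Proposition~\ref{p:syntactic-in-dual} to translate this into the algebraic syntactic quasi-order on $T$. Your additional remarks on the induced order on $\overline{\eta_L(\psi(U))}$ and on the vacuous case $\mathfrak{C}(U,T)=\emptyset$ are harmless bookkeeping that the paper leaves implicit.
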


\begin{proof}
  In view of Proposition~\ref{p:syntactic-in-dual} and the definition
  of satisfaction of a pseudo-inequality, it suffices to observe that
  the pseudo-inequality $u\le v$ holds in the $\mathfrak{C}$-stamp
  $\eta_L:T\to\Stsynt(L)$ if and only if
  $\eta_L(\psi(u))\le\eta_L(\psi(v))$.
\end{proof}

\section{Minimal automata}
\label{sec:min-automata}

This section concerns languages over a fixed finite alphabet $A$, in
the sense of subsets of the free monoid $A^*$. As Stone topological
semigroups are profinite, they only recognize regular languages by
clopen sets. An alternative is to consider $A^*$ as a unary algebra
under the action of the letters by right multiplication, which is in
fact the free monogenic algebra in the signature $\Omega=\Omega_1=A$.

For a topological $\Omega$-algebra $T$, the linear transformations of
$T$ are then functions of the form $f:t\mapsto tw$ with $w\in A^*$.
Thus, when $T=A^*$, for a language $L\subseteq A^*$, we have
$f^{-1}(L)=Lw^{-1}$. Hence, $[L]=[\Cl C_L]$ is the $(0,1)$-sublattice
of~$\Cl P(A^*)$ generated by the languages of the form $Lw^{-1}$ with
$w\in A^*$. By Proposition~\ref{p:syntactic-in-dual}, the image of the
Priestley completion $\eta_L:A^*\to\Stsynt(L)$ is precisely the
syntactic $\Omega$-algebra $\synt_\Omega(L)$ of~$L$ over~$A^*$. Note
that, for $u,v\in A^*$ we have the following equivalences:
\begin{align*}
  u\curlyeqprec_Lv
  &\iff
    \forall w\in A^*\ (u\in Lw^{-1} \implies v\in Lw^{-1})\\
  &\iff
    \forall w\in A^*\ (uw\in L \implies vw\in L)\\
  &\iff
    \forall w\in A^*\ (w\in u^{-1}L \implies w\in v^{-1}L)\\
  &\iff
    u^{-1}L\subseteq v^{-1}L.
\end{align*}
Thus, $\synt_\Omega(L)$ is the familiar minimal semiautomaton of~$L$,
whose states are the languages of the form $u^{-1}L$ and the action of
each letter $a\in A$ is given by $(u^{-1}L)\cdot
a=a^{-1}(u^{-1}L)=(ua)^{-1}L$. The \emph{minimal automaton} of $L$ is
obtained by adding the requirement that $L=1^{-1}L$ is the initial
state and that a state $u^{-1}L$ is final if and only if $u\in L$,
noting that, by the above equivalences, if $u^{-1}L=v^{-1}L$ then
$u\in L$ if and only if $v\in L$.

Steinberg \cite{Steinberg:2013} proposed the following
compactification of the minimal automaton $\synt_\Omega(L)$. He first
considers the set $\Cl P(A^*)$ of all languages $K$ over $A$ as the
Cantor space $X=\{0,1\}^{A^*}$ of their characteristic functions
$\chi_K$ (defined by $\chi_K^{-1}(1)=K$). Note that the function
$a_X:X\to X$ sending $\chi_K$ to $\chi_K(a\_)=\chi_{a^{-1}K}$ is
continuous. Thus, $X$ is a Stone topological $\Omega$-algebra. Within
it, the subalgebra generated by $\chi_L$ is precisely the minimal
semiautomaton $\synt_\Omega(L)$. Steinberg considers within the space
$X$ the closure of the universe of the algebra $\synt_\Omega(L)$,
which is $\chi_L\cdot A^*$, which becomes a Stone topological
$\Omega$-subalgebra $X(L)$ of~$X$.

\begin{Prop}
  \label{p:Steinberg-minimal-automaton}
  The mapping $\varphi:(\chi_L\cdot A^*)\to\synt(A^*)$ sending $\chi_{w^{-1}L}$
  to $w^{-1}L$ extends to an isomorphism of Stone topological algebras
  $X(L)\to\Stsynt(L)$.
\end{Prop}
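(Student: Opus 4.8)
The plan is to identify $X(L)$ with the Stone dual of the Boolean algebra $\langle L\rangle_{A^*}$, and then invoke Proposition~\ref{p:syntactic-in-dual} together with the functoriality of Stone duality to get the required isomorphism. The key observation is that $X(L)=\overline{\chi_L\cdot A^*}$ is a Stone space on which a basis of clopen sets is given by the ``cylinder'' sets determined by evaluation at finitely many words; restricted to $\chi_L\cdot A^*$, the single cylinder $\{\chi\in X:\chi(w)=1\}$ pulls back under $v^{-1}L\mapsto\chi_{v^{-1}L}$ to $\{v^{-1}L:vw\in L\}$, which is exactly $\eta_L(Lw^{-1})$ in the image of $\Stsynt(L)$. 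So the traces on $\chi_L\cdot A^*$ of the clopen subsets of $X$ are precisely the sets $\eta_L(K)$ with $K\in\langle L\rangle$, by Corollary~\ref{c:B(L)} (which says $\langle L\rangle$ is generated by the $Lw^{-1}$).

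First I would make precise that $\varphi$ is well defined on $\chi_L\cdot A^*$: if $\chi_{u^{-1}L}=\chi_{v^{-1}L}$ then $u^{-1}L=v^{-1}L$, so the assignment $\chi_{w^{-1}L}\mapsto w^{-1}L$ is an honest bijection from $\chi_L\cdot A^*$ onto the underlying set of $\synt_\Omega(L)$, and it intertwines the letter actions (on the left $a_X$ sends $\chi_{w^{-1}L}$ to $\chi_{(wa)^{-1}L}$, on the right the action sends $w^{-1}L$ to $(wa)^{-1}L$), hence it is an isomorphism of the dense sub-$\Omega$-algebras. Next, since $\eta_L:A^*\to\Stsynt(L)$ is a Stone completion, $\Stsynt(L)$ is (canonically) the Stone dual $\langle L\rangle^\star$ with $\eta_L(A^*)$ dense, and the clopen subsets of $\Stsynt(L)$ are exactly the $\Cl U_K$ with $K\in\langle L\rangle$; on the dense image they restrict to the $\eta_L(K)$. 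On the other side, $X(L)$ is a Stone space in which $\chi_L\cdot A^*$ is dense, and its clopen sets are generated by the cylinders; by the computation above, these restrict on $\chi_L\cdot A^*$ to exactly the images under $\varphi^{-1}$ (i.e.\ under the inverse bijection) of the sets $\eta_L(K)$, $K\in\langle L\rangle$. Thus $\varphi$ is a homeomorphism between two dense subspaces which induces a bijection between the two bases of clopen sets of the respective Stone completions, so it extends uniquely to a homeomorphism $X(L)\to\Stsynt(L)$.

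Finally I would check that the extension is an $\Omega$-algebra homomorphism: each letter action $a_{X(L)}$ and $a_{\Stsynt(L)}$ is continuous, the two maps $a_{\Stsynt(L)}\circ\bar\varphi$ and $\bar\varphi\circ a_{X(L)}$ agree on the dense set $\chi_L\cdot A^*$ (since $\varphi$ already intertwines the actions there), and two continuous maps into a Hausdorff space agreeing on a dense set are equal. The same continuity-plus-density argument applied to $\bar\varphi^{-1}$ shows the inverse is also a homomorphism, so $\bar\varphi$ is an isomorphism of Stone topological $\Omega$-algebras.

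The main obstacle, and the step deserving the most care, is the identification of the trace topology: one must verify that the subspace topology that $X(L)$ induces on $\chi_L\cdot A^*$ matches, under $\varphi$, the subspace topology that $\Stsynt(L)$ induces on $\eta_L(A^*)$ — equivalently, that the cylinder sets of $X$ generate (the traces of) exactly $\langle L\rangle$ on the dense subalgebra, no more and no less. The ``no less'' direction is Corollary~\ref{c:B(L)} (every $Lw^{-1}$, hence every member of $\langle L\rangle$, is a finite Boolean combination, and each such combination is a trace of a clopen set of $X$), while the ``no more'' direction uses that $\langle L\rangle$, being $\Omega$-finite, already contains all the cylinder traces $\{w^{-1}L: w\in uv^{-1}\text{-type conditions}\}=Lw^{-1}$. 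Once this match is established, uniqueness of Stone completions (the universal property recorded around \cite[Corollary~7.4]{Almeida&Klima:2024a}) gives the extension and its uniqueness for free.
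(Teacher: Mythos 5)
Your proposal is correct and follows essentially the paper's own route: the heart of both arguments is the computation that the cylinder sets of $X$ trace back (via $w\mapsto\chi_{w^{-1}L}$) to Boolean combinations of the quotients $Lw^{-1}$, so that by Corollary~\ref{c:B(L)} (and M-closedness, Proposition~\ref{p:M-closed}) the Boolean algebra of preimages of clopens of $X(L)$ coincides with $\langle L\rangle$, after which the uniqueness/equivalence of Stone completions from \cite{Almeida&Klima:2024a} yields the extension of $\varphi$ to an isomorphism. Your more hands-on extension step (dense homeomorphism matching clopen bases, then a density-and-continuity check of the homomorphism property) is just an unpacking of the equivalence theorem the paper cites, not a genuinely different argument.
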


\begin{proof}
  Consider the two Stone completions $\psi:A^*\to X(L)$, given by
  $\psi(w)=\chi_{w^{-1}L}$, and $\eta_L:A^*\to\Stsynt(L)$. Note that
  $\varphi$ is such that $\varphi\circ\psi=\eta_L$. By the equivalence
  theorem \cite[Theorem~7.2]{Almeida&Klima:2024a} these two Stone
  completions are isomorphic (through an extension of $\varphi$) if and
  only if the Boolean algebras $\Cl B_1$ and $\Cl B_2$ of preimages
  under $\psi$ or $\eta_L$ of respectively clopen subsets of $X(L)$ or
  $\Stsynt(L)$ coincide. From basic duality theory
  (cf.~\cite[Proposition~5.1]{Almeida&Klima:2024a}) and
  Corollary~\ref{c:B(L)}, we know that $\Cl B_2$ is precisely the
  Boolean subalgebra of $\Cl P(A^*)$ generated by the languages of the
  form $Lw^{-1}$ with $w\in A^*$.

  On the other hand, the clopen subsets of $X(L)$ are the
  intersections with $X(L)$ of clopen subsets of~$X$; the latter are
  Boolean combinations of sets (cylinders) of the form
  \begin{displaymath}
    C_{F,G}=\{f\in X: f(F)=\{1\}\wedge f(G)=\{0\}\}
  \end{displaymath}
  where $F$ and $G$ are finite languages. Now, for $w\in A^*$, we have
  \begin{align*} \chi_{w^{-1}L}\in C_{F,G} &\iff F\subseteq w^{-1}L
    \wedge G\cap w^{-1}L=\emptyset\\ &\iff wF\subseteq L \wedge wG\cap
    L=\emptyset\\ &\iff w\in\bigcap_{u\in F}Lu^{-1} \wedge
    w\notin\bigcup_{v\in G}Lv^{-1}\\ &\iff w\in \bigcap_{u\in
      F}Lu^{-1} \setminus\bigcup_{v\in G}Lv^{-1}
  \end{align*} which shows that $\psi^{-1}(C_{F,G})\in\langle
  L\rangle$, so that $\Cl B_1\subseteq\langle L\rangle$. The reverse
  inclusion holds since $L=\psi^{-1}(C_{\{1\},\emptyset})\in\Cl B_1$ and
  $\Cl B_1$ is M-closed by Proposition~\ref{p:M-closed}. Hence, we have
  $\Cl B_1=\Cl B_2$, which establishes the proposition.
\end{proof}

An alternative proof of
Proposition~\ref{p:Steinberg-minimal-automaton} is obtained by
combining the equivalence theorem
\cite[Theorem~7.2]{Almeida&Klima:2024a} with
\cite[Theorem~2.2]{Steinberg:2013}, which basically avoids the
calculations in the second paragraph of the above proof. The above
proof seems preferable as it does not depend on results that so far
have not been presented beyond the preprint form.

Note that, on the free monoid $A^*$, viewed as an algebra in the
signature $\Sigma=\Sigma_0\cup\Sigma_2$ with $\Sigma_0=\{1\}$ and
$\Sigma_2=\{{\cdot}\}$, for a language $L\subseteq A^*$, the algebraic
syntactic quasi-order $\curlyeqprec_L$ considered
in~(\ref{eq:algebraic-syntactic-qo}) is given by
\begin{equation}
  \label{eq:algebraic-syntactic-qo-binary}
  u \curlyeqprec_L^\Sigma v
  \iff
  \forall x,y\in A^*\ (xuy\in L \implies xvy\in L),
\end{equation}
which is the definition of syntactic order originally considered by
Sch\"utzenberger \cite[page 15-10]{Schutzenberger:1956b} although it
was later extensively used in the French school in the reverse form
\cite[Section 3.2]{Pin:1997}.

For an alphabet $A$, let $\mathfrak{C}_A$ be the category with the
monogenic free unary $A$-algebra $A^*$ as single object and its
endomorphisms as morphisms. Proposition~\ref{p:syntactic-inequalities}
yields the following interpretation of validity of inequalities in the
completed minimal automaton $\Stsynt^\Omega(L)$ of a language
$L\subseteq A^*$.

\begin{Prop}
  \label{p:syntactic-inequalities-unary-vs-binary}
  Let $L$ be an arbitrary language over a finite alphabet $A$. Then,
  for $u,v\in A^*$, the pseudo-inequality $u\le v$ holds in the
  $\mathfrak{C}_A$-stamp $\eta_L:A^*\to\Stsynt(L)$ if and only if
  $u\curlyeqprec_L^\Sigma v$.
\end{Prop}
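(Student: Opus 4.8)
The plan is to reduce the statement to Proposition~\ref{p:syntactic-inequalities} and then to make explicit what its two ingredients---the $\mathfrak{C}_A$-morphisms and the algebraic syntactic quasi-order $\curlyeqprec_L^\Omega$ on the unary algebra $A^*$---concretely amount to here. First I would note that Proposition~\ref{p:syntactic-inequalities} is applicable, i.e.\ that every language $L\subseteq A^*$ is admissible in the unary algebra $A^*$: since $A^*$ carries the discrete topology we have $\Cl P_{co}(A^*)=\Cl P(A^*)$, and the full power set is closed under the operations $K\mapsto Ka^{-1}=(a_{A^*})^{-1}(K)$ with $a\in A$, hence satisfies Condition~\eqref{eq:w-Om-fin}; therefore $\Cl B_{\mathrm{max}}^{A^*}=\Cl P(A^*)$. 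Proposition~\ref{p:syntactic-inequalities}, applied with $\mathfrak{C}=\mathfrak{C}_A$ and $U=T=A^*$, then tells us that $u\le v$ holds in the $\mathfrak{C}_A$-stamp $\eta_L\colon A^*\to\Stsynt(L)$ if and only if $\psi(u)\curlyeqprec_L^\Omega\psi(v)$ for every $\mathfrak{C}_A$-morphism $\psi\colon A^*\to A^*$.

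Next I would determine those morphisms. An $\Omega$-homomorphism $\psi\colon A^*\to A^*$ satisfies $\psi(ta)=\psi(t)\,a$ for all $t\in A^*$ and $a\in A$, so a trivial induction on word length gives $\psi(w)=\psi(1)\,w$ for all $w$; thus $\psi$ is the left translation $\lambda_{w_0}\colon w\mapsto w_0w$ with $w_0=\psi(1)$, and conversely each such left translation is an $\Omega$-endomorphism. Feeding this into the previous step, and recalling the description of $\curlyeqprec_L^\Omega$ computed in the paragraph preceding Proposition~\ref{p:Steinberg-minimal-automaton}, namely $s\curlyeqprec_L^\Omega t\iff\bigl(\forall z\in A^*\colon sz\in L\Rightarrow tz\in L\bigr)$, the condition ``$\psi(u)\curlyeqprec_L^\Omega\psi(v)$ for every $\psi$'' unfolds to ``$w_0uz\in L\Rightarrow w_0vz\in L$ for all $w_0,z\in A^*$'', which is exactly $u\curlyeqprec_L^\Sigma v$ by~\eqref{eq:algebraic-syntactic-qo-binary}.

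There is no genuine obstacle here; the single point requiring a little care is the bookkeeping that the morphisms of $\mathfrak{C}_A$ are precisely the left translations of $A^*$ (and not all monoid endomorphisms), so that the outer quantifier over $\psi$ supplies the left factor $x$ while the quantifier built into $\curlyeqprec_L^\Omega$ supplies the right factor $y$ of the two-sided context $xuy$ appearing in~\eqref{eq:algebraic-syntactic-qo-binary}.
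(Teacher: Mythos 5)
Your proof is correct and follows essentially the same route as the paper: invoke Proposition~\ref{p:syntactic-inequalities}, observe that the $\mathfrak{C}_A$-morphisms are exactly the left translations $w\mapsto xw$, and unfold $\curlyeqprec_L^\Omega$ (right quotients) to recover the two-sided contexts of $\curlyeqprec_L^\Sigma$. The only addition is your explicit check that $\Cl B_{\mathrm{max}}^{A^*}=\Cl P(A^*)$, which the paper leaves implicit; that is a harmless (and correct) piece of bookkeeping.
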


\begin{proof}
  Since the endomorphisms of $A^*$ as a unary $A$-algebra are given by
  multiplication on the left by the same word $x$,
  Proposition~\ref{p:syntactic-inequalities} yields the first of the
  following equivalences:
  \begin{align*}
    \eta_L^\Omega \models u\le v
    &\iff
    \forall x\in A^*\ (xu \le_L^\Omega xv) \\
    &\iff
    \forall x,y\in A^*\ (xu\in Ly^{-1} \implies xv\in Ly^{-1}) \\
    &\iff
    \forall x,y\in A^*\ (xuy\in L \implies xvy\in L) \\
    &\iff
    u \curlyeqprec_L^\Sigma v,
  \end{align*}
  where the second and last equivalences are given by the definitions
  of the partial order~$\le_L^\Omega$
  (see~\ref{eq:qo-for-M-closed-admissible} and
  Corollary~\ref{c:admissible-to-po}) and
  quasi-order~$\curlyeqprec_L^\Sigma$
  (\ref{eq:algebraic-syntactic-qo-binary}).
\end{proof}

We may in fact characterize membership in any $\mathfrak{C}_A$-variety
of languages in terms of words as follows.

\begin{Prop}
  \label{p:inequality-definability}
  Let $A$ be an alphabet and let $\Cl C$ be a set of languages
  over~$A$. Then a language $L\subseteq A^*$ is not in the
  $\mathfrak{C}_A$-variety of languages $\Cl V(\Cl C)$ generated
  by~$\Cl C$ if and only if there is a net $(u_i,v_i)_{i\in I}$ in
  $A^*\times A^*$ such that the following condition holds for all
  $K=x^{-1}Jy^{-1}$ with $J\in\Cl C$ and $x,y\in A^*$ but not for
  $K=L$:
  \begin{equation}
    \label{eq:inequality}
    \exists i_0\,\forall i\ge i_0\ (u_i\in K\implies v_i\in K).
  \end{equation}
\end{Prop}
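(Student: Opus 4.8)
The plan is to reduce the membership question to a purely lattice-theoretic description of $\Cl V(\Cl C)$ at the object $A^*$, then apply the quasi-order criterion of Proposition~\ref{p:admissible-sets-vs-quasi-orders}, and finally reformulate that criterion in terms of nets. First I would identify $\Cl V(\Cl C)(A^*)$ explicitly. Since $\mathfrak{C}_A$ has $A^*$ as its only object, a $\mathfrak{C}_A$-variety of languages is simply an $M$-closed $(0,1)$-sublattice of $\Cl B_{\mathrm{max}}^{A^*}=\Cl P(A^*)$ (recall that all languages over $A$ are admissible) that is closed under $\psi^{-1}$ for every endomorphism $\psi$ of $A^*$. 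As recalled in this section, the linear transformations of the unary algebra $A^*$ are the right multiplications $t\mapsto tw$, so $M$-closure means closure under right quotients $K\mapsto Kw^{-1}$; and the endomorphisms of $A^*$ are the left multiplications $t\mapsto xt$, so closure under $\psi^{-1}$ means closure under left quotients $K\mapsto x^{-1}K$. Writing $\Cl C^{xy}=\{x^{-1}Jy^{-1}:J\in\Cl C,\ x,y\in A^*\}$, and using that left and right quotients commute while preimages preserve finite unions and intersections, one checks that the $(0,1)$-sublattice generated by $\Cl C^{xy}$ is already closed under right and left quotients; being the smallest such lattice containing $\Cl C$, it must equal $\Cl V(\Cl C)(A^*)$. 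In the notation of Section~\ref{sec:syntactic} this says $\Cl V(\Cl C)(A^*)=[\Cl C^{xy}]$, and since $\Cl C^{xy}$ is $M$-closed we have $(\Cl C^{xy})_M=\Cl C^{xy}$.

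Next I would apply Proposition~\ref{p:admissible-sets-vs-quasi-orders}(\ref{item:admissible-sets-vs-quasi-orders-3}) with $\Cl C^{xy}$ in place of $\Cl C$: the members of $[\Cl C^{xy}]$ are exactly the preimages under $\iota=\iota_{\Cl B_{\mathrm{max}}^{A^*}}$ of the clopen $\preccurlyeq_{\Cl C^{xy}}$-upsets of $(\Cl B_{\mathrm{max}}^{A^*})^\star$. As $L=\iota^{-1}(\Cl U_L)$ with $\Cl U_L$ clopen, this yields that $L\in\Cl V(\Cl C)(A^*)$ if and only if $\Cl U_L$ is a $\preccurlyeq_{\Cl C^{xy}}$-upset, equivalently: for all $u,v\in(\Cl B_{\mathrm{max}}^{A^*})^\star$ with $L\in u$ and $u\preccurlyeq_{\Cl C^{xy}}v$ one has $L\in v$. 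Negating, $L\notin\Cl V(\Cl C)(A^*)$ precisely when there are ultrafilters $u,v$ with $L\in u$, $L\notin v$, and $K\in u\Rightarrow K\in v$ for every $K\in\Cl C^{xy}$.

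The last step translates this into the net condition using the density of $\iota(A^*)$ in $(\Cl B_{\mathrm{max}}^{A^*})^\star$. The key elementary observation, which follows from $\Cl U_K$ being clopen, is that whenever $\iota(u_i)\to u$ and $\iota(v_i)\to v$ along a common directed index set, then \eqref{eq:inequality} holds for $K$ if and only if $K\in u\Rightarrow K\in v$ (if $K\notin u$ then eventually $u_i\notin K$ and \eqref{eq:inequality} is vacuous; if $K\in u$ and $K\in v$ then eventually $v_i\in K$; conversely $K\in u$ forces eventually $u_i\in K$, whence eventually $v_i\in K$, so $K\in v$). For the forward implication of the proposition I would take $u,v$ as produced above, pick a net $(u_i,v_i)$ in $A^*\times A^*$ with $\iota(u_i)\to u$, $\iota(v_i)\to v$ (indexing by neighbourhoods of $(u,v)$ in the product space), and read off from the observation that \eqref{eq:inequality} holds for every $K\in\Cl C^{xy}$, while $L\in u$ and $L\notin v$ force $u_i\in L$ and $v_i\notin L$ eventually, so \eqref{eq:inequality} fails for $K=L$. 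For the converse, given a net witnessing the stated condition, the failure of \eqref{eq:inequality} at $L$ means that $\{i:u_i\in L,\ v_i\notin L\}$ is cofinal and hence a subnet; restricting to it and passing to a further subnet along which $(\iota(u_i),\iota(v_i))$ converges to some $(u,v)$ in the compact space $(\Cl B_{\mathrm{max}}^{A^*})^\star\times(\Cl B_{\mathrm{max}}^{A^*})^\star$, the closedness of $\Cl U_L$ and of its complement gives $L\in u$ and $L\notin v$, and the observation applied to the subnet (which inherits \eqref{eq:inequality}) gives $K\in u\Rightarrow K\in v$ for $K\in\Cl C^{xy}$; this is exactly the ultrafilter data negating membership.

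The genuinely substantive step is the first one: correctly pinning down $\Cl V(\Cl C)(A^*)$ as $[\Cl C^{xy}]$, in particular recognizing that the $M$-closure contributes the right quotients, the $\mathfrak{C}_A$-morphisms contribute the left quotients, and that the family of all two-sided quotients is already stable under the lattice operations. The remainder is bookkeeping with clopen sets and subnets; the one point requiring care is matching ``\eqref{eq:inequality} fails for $L$'' with ``$v$ lies in the open set $\Cl U_L^\complement$'', which uses that $\Cl U_L$ is clopen to move between ``$L\notin v$'' and ``eventually $v_i\notin L$''.
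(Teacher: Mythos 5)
Your proposal is correct, but it takes a genuinely different route from the paper's. The paper deduces the criterion from the general machinery of Sections~\ref{sec:Eilenberg} and~\ref{sec:Reiterman}: by Theorems~\ref{t:Eilenberg-version-C} and~\ref{t:Birkhoff}, the language $L$ lies outside $\Cl V(\Cl C)$ exactly when some pseudo-inequality $s\le t$ over $A^*$ that is valid in all stamps $\eta_J$ with $J\in\Cl C$ fails in $\eta_L$; it then translates satisfaction of $s\le t$ into the eventual condition \eqref{eq:inequality} along a net $(s_i,t_i)$ converging to $(s,t)$ in $\beta(A^*)\times\beta(A^*)$, passes to the shifted net $u_i=xs_iy$, $v_i=xt_iy$ for the witnesses $x,y$ of failure in $\eta_L$, and closes by observing that \eqref{eq:inequality} is preserved under finite unions and intersections. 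You instead stay entirely on the language side: you identify the generated variety at the unique object concretely, $\Cl V(\Cl C)(A^*)=[\Cl C^{xy}]$ where $\Cl C^{xy}$ is the set of two-sided quotients (this step is sound: $\Cl B_{\mathrm{max}}^{A^*}=\Cl P(A^*)$ since condition \eqref{eq:w-Om-fin} is automatic for a unary signature, M-closure contributes the right quotients, the $\mathfrak{C}_A$-morphisms the left quotients, and quotients distribute over the lattice operations), then invoke Proposition~\ref{p:admissible-sets-vs-quasi-orders}(\ref{item:admissible-sets-vs-quasi-orders-3}) to convert membership into the statement that $\Cl U_L$ is a $\preccurlyeq_{\Cl C^{xy}}$-upset, and finally exchange a separating pair of ultrafilters for a net of pairs of words using density and compactness, with correct clopen-set bookkeeping in both directions (including the cofinal-subnet reduction in the converse). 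Since a pseudo-inequality over $A^*$ is precisely a pair of points of $(\Cl B_{\mathrm{max}}^{A^*})^\star=\beta(A^*)$, the two arguments ultimately manipulate the same data; yours is more self-contained and avoids both the variety-of-stamps theorems and the $x,y$-shifting of nets, at the price of the explicit computation of $\Cl V(\Cl C)(A^*)$, which is easy here because $\mathfrak{C}_A$ has a single object and very simple linear transformations and endomorphisms, whereas the paper's object-by-object pseudo-inequality argument would carry over to richer categories where such a concrete description is not available.
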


\begin{proof}
  Since $\Cl V(\Cl C)$ is a $\mathfrak{C}_A$-variety of languages, in
  view of Theorem~\ref{t:Birkhoff} its corresponding
  $\mathfrak{C}_A$-variety of stamps (according to
  Theorem~\ref{t:Eilenberg-version-C}) is defined by the set $\Sigma$
  of all pseudo-inequalities that are valid in all
  $\mathfrak{C}_A$-stamps $\eta_K:A^*\to\Stsynt(K)$ with $K\in\Cl C$.
  Hence $L$ does not belong to~$\Cl V(\Cl C)$ if and only if there is
  a pseudo-inequality $s\le t$ from $\Sigma$ that fails in~$\eta_L$.

  It remains to express in terms of word membership the validity of
  pseudo-inequalities. For that purpose, let
  $\bigl((s_i,t_i)\bigr)_{i\in I}$ be a net in $A^*\times A^*$
  converging to $(s,t)$ in the product space
  $\beta(A^*)\times\beta(A^*)$. Then, for a language $K\subseteq A^*$,
  we have the following equivalences and implication:
  \begin{align}
    \lefteqn{\eta_K\models s\le t}
    \nonumber\\
    &\iff
      \forall x,y\in A^*\ (x^{-1}Ky^{-1}\in s \implies x^{-1}Ky^{-1}\in t)
      \nonumber
    \\
    &\iff \nonumber\\
     \lefteqn{\quad\ \forall x,y\in A^*\ \big(
         \exists i_0\,\forall i\ge i_0\ (s_i\in x^{-1}Ky^{-1})}
     \label{eq:inequality-2}\\
    &\hphantom{\quad\ y\in A^*\ \ \implies}
      \implies
         \exists j_0\,\forall i\ge j_0\ (t_i\in x^{-1}Ky^{-1})
         \big)
      \nonumber\\
    &\iff \nonumber\\
    \lefteqn{\quad\ \forall x,y\in A^*\
    \exists k_0\,\forall i\ge k_0\ 
    (s_i\in x^{-1}Ky^{-1} \implies t_i\in x^{-1}Ky^{-1})}
    \label{eq:inequality-3}
  \end{align}
  where the last equivalence is the only one requiring some
  justification.

  The implication
  $(\ref{eq:inequality-3})\Rightarrow(\ref{eq:inequality-2})$
  is obtained by taking $j_0$ to be a common upper bound of $k_0$ and
  $i_0$ if the latter exists.

  For the implication
  $(\ref{eq:inequality-2})\Rightarrow(\ref{eq:inequality-3})$,
  suppose that (\ref{eq:inequality-2}) holds and let $x,y\in A^*$.
  We distinguish two cases.

  \smallskip

  \textbf{Case 1:} Suppose that there is $i_0$ such that, for all
  $i\ge i_0$, we have $s_i\in x^{-1}Ky^{-1}$. Then we may choose $j_0$
  as in~(\ref{eq:inequality-2}) so that any upper bound $k_0$ of
  $i_0$ and $j_0$ is such that, for all $i\ge k_0$, $s_i\in
  x^{-1}Ky^{-1}$ implies $t_i\in x^{-1}Ky^{-1}$.

  \smallskip
  
  \textbf{Case 2:} Case 1 does not hold. Then, there is a subnet
  $(s_{i_\lambda})_\lambda$ in the complement $A^*\setminus
  x^{-1}Ky^{-1}$. As such a subnet still converges to $s$, we conclude
  that $A^*\setminus x^{-1}Ky^{-1}\in s$. But, then, as $s=\lim s_i$,
  there must exist $i_0$ such that, for all $i\ge i_0$, we have
  $s_i\notin x^{-1}Ky^{-1}$ Hence, we may choose $k_0=i_0$ to get, for
  all $i\ge k_0$, $s_i\in x^{-1}Ky^{-1}$ implies $t_i\in
  x^{-1}Ky^{-1}$ trivially.

  \smallskip

  Thus, since, by the assumption that $s\le t$ fails in $\eta_L$,
  there exist $x,y\in A^*$ such that (\ref{eq:inequality}) fails when
  we take $u_i=xs_iy$ and $v_i=xt_iy$ ($i\in I$). On the other hand,
  the equivalence of~(\ref{eq:inequality-2}) with $\eta_K\models s\le
  t$, shows that the pseudo-inequality $xsy\le xty$ remains valid,
  where ``left multiplication by $x$'' in~$\beta(A^*)$ is the unique
  extension of left multiplication by~$x$ in~$A^*$ to a continuous
  transformation of~$\beta(A^*)$. Hence, (\ref{eq:inequality}) holds
  whenever $K\in\Cl V(\Cl C)$. To complete the proof of the theorem,
  it remains to observe that Condition~(\ref{eq:inequality}) is
  preserved under taking finite unions and intersections.
\end{proof}

We conclude this section by improving
Proposition~\ref{p:inequality-definability} by replacing nets by
sequences in case the alphabet is finite and we start with a countable
class of languages. Both assumptions are innocuous as classes of
languages on an alphabet of interest in applications assume that the
alphabet is finite and that the languages or their complements are at
least recursively enumerable. The fact that we fix the alphabet may
seem too restrictive but, for the purpose of separation, it is clearly
sufficient.

\begin{Thm}
  \label{t:inequality-definability-sequences}
  Let $A$ be a finite alphabet and let $\Cl C$ be a countable set of
  languages over~$A$. Then a language $L\subseteq A^*$ is not in the
  $\mathfrak{C}_A$-variety of languages $\Cl V(\Cl C)$ generated
  by~$\Cl C$ if and only if there are sequences
  $(s_n)_{n\in\mathbb{N}}$ in~$L$ and $(t_n)_{n\in\mathbb{N}}$ in
  $A^*\setminus L$ such that the following condition holds for all
  $K=x^{-1}Jy^{-1}$ with $J\in\Cl C$ and $x,y\in A^*$:
  \begin{equation}
    \label{eq:inequality-sequences}
    \exists n_0\,\forall n\ge n_0\ (s_n\in K\implies t_n\in K).
  \end{equation}
\end{Thm}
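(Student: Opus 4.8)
The plan is to deduce this theorem from Proposition~\ref{p:inequality-definability} by replacing the net $(u_i,v_i)_{i\in I}$ with an honest sequence, using the two countability hypotheses. First I would invoke Proposition~\ref{p:inequality-definability}: since $L\notin\Cl V(\Cl C)$, there is a net $(u_i,v_i)_{i\in I}$ in $A^*\times A^*$ for which Condition~(\ref{eq:inequality}) holds for every $K=x^{-1}Jy^{-1}$ with $J\in\Cl C$, $x,y\in A^*$, but fails for $K=L$. The failure for $K=L$ means that for every index $i_0$ there is some $i\ge i_0$ with $u_i\in L$ and $v_i\notin L$; in other words the set $\{i:u_i\in L,\ v_i\notin L\}$ is cofinal in $I$. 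Passing to this cofinal sub-net, I may assume at the outset that $u_i\in L$ and $v_i\notin L$ for \emph{all} $i$, while the validity of~(\ref{eq:inequality}) for all the $K=x^{-1}Jy^{-1}$ is preserved (a sub-net of a net satisfying an eventual property still satisfies it). So the sequences will automatically land in $L$ and in $A^*\setminus L$ as required; the content is to get a \emph{sequence}.

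Next I would use the countability. The family $\Cl K=\{x^{-1}Jy^{-1}:J\in\Cl C,\ x,y\in A^*\}$ is countable, since $\Cl C$ is countable and $A^*$ is countable ($A$ being finite); enumerate it as $K_1,K_2,\dots$. The key observation is that Condition~(\ref{eq:inequality}) for a fixed $K$ is a statement about the net $(u_i,v_i)$ being \emph{eventually} in the clopen-in-$\beta(A^*)\times\beta(A^*)$ set $E_K=\{(a,b):a\notin\overline{K}\text{ or }b\in\overline{K}\}$ (here I use that membership $w\in K$ for $w\in A^*$ extends to a clopen condition in $\beta(A^*)$, as exploited in the proof of Proposition~\ref{p:inequality-definability}). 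Thus the net $(u_i,v_i)$ accumulates only at points of $\bigcap_{n}E_{K_n}$ which, together with the condition $u_i\in L$, $v_i\notin L$, lies in the set $Z=\bigl(\overline L\times(\overline L)^\complement\bigr)\cap\bigcap_n E_{K_n}$. By compactness of $\beta(A^*)\times\beta(A^*)$ the net has an accumulation point $(s,t)$, necessarily in $Z$. Now build the sequence by a diagonal argument: for each $n$, the basic clopen neighbourhoods of $(s,t)$ determined by finitely many of the defining conditions are met by cofinally many indices $i$; choose $i_n$ (increasing in the appropriate sense, or just pick any witnessing index, since what matters is the resulting sequence in $A^*\times A^*$) so that $(u_{i_n},v_{i_n})\in E_{K_1}\cap\dots\cap E_{K_n}$. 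Set $s_n=u_{i_n}$, $t_n=v_{i_n}$. Then for each fixed $m$ and all $n\ge m$ we have $(s_n,t_n)\in E_{K_m}$, which is exactly~(\ref{eq:inequality-sequences}) for $K=K_m$; and $s_n\in L$, $t_n\in A^*\setminus L$ by construction. This proves the ``only if'' direction.

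For the ``if'' direction I would simply note that a sequence is a net (indexed by $\mathbb{N}$), so sequences $(s_n)$ in $L$ and $(t_n)$ in $A^*\setminus L$ satisfying~(\ref{eq:inequality-sequences}) for all $K=x^{-1}Jy^{-1}$ with $J\in\Cl C$ give a net $(u_i,v_i)=(s_i,t_i)$ satisfying~(\ref{eq:inequality}) for all such $K$; and~(\ref{eq:inequality}) fails for $K=L$ because $u_i\in L$ but $v_i\notin L$ for every $i$ (so for every $i_0$ the implication $u_{i_0}\in K\Rightarrow v_{i_0}\in K$ already fails with $i=i_0$). By Proposition~\ref{p:inequality-definability}, $L\notin\Cl V(\Cl C)$.

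The main obstacle I expect is organizing the diagonal extraction cleanly: a net is not indexed by a directed set with countable cofinality in general, so one cannot literally ``extract a subsequence''. The right way is to observe that for each $n$ the set of indices $i$ with $(u_i,v_i)\in E_{K_1}\cap\dots\cap E_{K_n}$ is cofinal (because $(s,t)$ is an accumulation point lying in all the $E_{K_m}$, and each $E_{K_m}$ is clopen hence a neighbourhood of $(s,t)$), so one can \emph{choose} one such index $i_n$ for each $n$ and simply read off the associated pair of words --- there is no need for the $i_n$ to be increasing in $I$ or for the $(u_{i_n},v_{i_n})$ to converge to $(s,t)$; all that is needed is that each $K_m$-condition holds for all sufficiently large $n$, which is built in. One should also double-check the harmless reduction at the start (replacing the net by a cofinal sub-net on which $u_i\in L$, $v_i\notin L$ everywhere) is legitimate, which it is since cofinal sub-nets preserve eventual properties and the failure at $K=L$ is exactly a cofinality statement.
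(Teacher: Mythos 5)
Your proof is correct and follows essentially the same route as the paper: both deduce the theorem from Proposition~\ref{p:inequality-definability} by enumerating the countably many quotients $x^{-1}Jy^{-1}$ and extracting a sequence from the net by a diagonal choice of indices (the paper picks an increasing chain of indices beyond the witnesses $j_k$ together with the cofinally available condition $u_i\in L$, $v_i\notin L$; you pass to the cofinal subnet and pick indices in the finite intersections $E_{K_1}\cap\dots\cap E_{K_n}$, where the accumulation-point step is actually dispensable since eventual membership in each finite intersection already follows from directedness). The easy converse, which you spell out via Proposition~\ref{p:inequality-definability}, is left implicit in the paper.
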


\begin{proof}
  Suppose that $L\notin\Cl V(\Cl C)$ and let
  $\bigl((u_i,v_i)\bigr)_{i\in I}$ be a net given by
  Proposition~\ref{p:inequality-definability}. Let $K_0,K_1,\ldots$ be
  an enumeration of the languages of the form $x^{-1}Jy^{-1}$ with
  $J\in\Cl C$ and, for each $k\in\mathbb{N}$, let $j_k$ be a choice of
  witness $i_0$ for Condition~(\ref{eq:inequality}) for the language
  $K=K_n$.

  We define a function $\varphi:\mathbb{N}\to I$ recursively as
  follows. By the choice of net of pairs of words, we know that
  \begin{equation*}
    \label{eq:inequality-sequences-2}
    \forall i\,\exists j\ge i\ (u_j\in L \wedge v_j\notin L).
  \end{equation*}
  We may thus choose $\varphi(0)\in I$ such that $\varphi(0)\ge j_0$,
  $u_{\varphi(0)}\in L$, and $v_{\varphi(0)}\notin L$. Assuming that
  $\varphi(k)$ ($k=0,\ldots,n-1$) have already been chosen in
  increasing order such that $\varphi(k)\ge j_k$, $u_{\varphi(k)}\in
  L$ and $v_{\varphi(k)}\notin L$, let $\varphi(n)$ be any element
  of~$I$ which is larger than $\varphi(n-1)$ and $j_n$ such that
  $u_{\varphi(n)}\in L$ and $v_{\varphi(n)}\notin L$.

  We claim that the choice of sequences $s_n=u_{\varphi(n)}$ and
  $t_n=v_{\varphi(n)}$ satisfies (\ref{eq:inequality-sequences}) for
  every language $K=x^{-1}Jy^{-1}$ with $J\in\Cl C$, and $x,y\in A^*$.
  Indeed, there is $n\in\mathbb{N}$ such that $K=K_n$ and the claim
  follows from~(\ref{eq:inequality}) since we have chosen
  $\varphi(n)\ge j_n$ and the sequence $(\varphi(n))_n$ is increasing.
\end{proof}

\section{Bounding context-free languages}
\label{sec:Pumping-CF}

In this section, we consider the class $\mathbf{CF}(A)$ of
context-free languages over a finite alphabet $A$ and the
$\mathfrak{C}_A$-variety of languages $\mathbf{ICF}(A)$ it generates,
where the category $\mathfrak{C}_A$ was introduced in the preceding
section. It is easy to see that $\mathbf{ICF}(A)$ consists of all
finite intersections of languages from $\mathbf{CF}(A)$, and it is a
class of languages about which not much seems to be known; it is
interesting as for all such languages the complexity of the membership
problem is the same as for context-free languages and so it is
sub-cubic \cite{Valiant:1975,Williams&Xu&Xu&Zhou:2024}. We illustrate
how the general separation criterion given by
Theorem~\ref{t:inequality-definability-sequences} can be used to rule
out that languages belong to $\mathbf{ICF}(A)$.

The key ingredient is to obtain suitable sequences of words in~$A^*$
such that Condition~(\ref{eq:inequality-sequences}) holds for every
language $K\in\mathbf{CF}(A)$. It may be thought as a kind of
replacement or pumping property. We did not find our first example in
the literature, even though many pumping properties of regular and
context-free languages are known. The classical pumping lemma for
context-free languages is the following statement which is mainly used
to show that a concrete given language is not context-free.

\begin{Lemma}[{\cite[Theorem 7.18]{Hopcroft&Motwani&Ullman:2001}}]
  \label{l:pumping}
  For every context-free language $L$, there exists a positive integer
  $n$ satisfying the following property: if $z\in L$ is a word of
  length at least $n$, then $z$ can be factorized as $z=uvwxy$, where
  \begin{enumerate}
  \item the length of the word $vwx$ is at most $n$,
  \item the word $vx$ is not empty, and
  \item for each $i\in \mathbb N$, the word $uv^iwx^iy$ belongs
    to~$L$.
  \end{enumerate}
\end{Lemma}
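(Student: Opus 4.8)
The plan is to argue via parse trees of a context-free grammar in Chomsky normal form. First I would fix a context-free grammar $G=(V,A,P,S)$ generating $L\setminus\{1\}$; since the statement only constrains words of length at least $n\ge1$, discarding the empty word is harmless. After the usual normalization I may assume that every production of $G$ has the form $B\to CD$ or $B\to a$ with $B,C,D\in V$ and $a\in A$, so that in every derivation tree the ``skeleton'' of variable nodes is a binary tree whose leaves (the bottom-most variable nodes) each have a single terminal child. Put $k=|V|$ and $n=2^{k}$. The elementary observation is that a binary skeleton all of whose root-to-leaf paths pass through at most $k$ variable nodes yields a word of length at most $2^{k-1}<n$; hence any $z\in L$ with $|z|\ge n$ admits a derivation tree possessing a root-to-leaf path carrying at least $k+1$ variable occurrences.

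Next I would localize the repetition near the bottom of a \emph{longest} path. Fix a longest root-to-leaf path $\pi$ in the tree; among its bottom $k+1$ variable occurrences, the pigeonhole principle furnishes two bearing the same label $B$, and I take $\nu_1$ to be the higher and $\nu_2$ the lower such occurrence. Writing $w$ for the yield of the subtree rooted at $\nu_2$ and $vwx$ for the yield of the subtree rooted at $\nu_1$, with $u,y$ the remaining prefix and suffix of $z$, I obtain the factorization $z=uvwxy$. The key point is that, because $\pi$ is a longest path overall, its suffix below $\nu_1$ is a longest path in the subtree rooted at $\nu_1$; and since $\nu_1$ lies among the bottom $k+1$ variable nodes of $\pi$, that suffix carries at most $k+1$ variable nodes, so $|vwx|\le 2^{k}=n$, which is condition~(1).

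For condition~(2), I would use that the production applied at $\nu_1$ has the form $B\to CD$, so $\nu_1$ has two children and the subtree rooted at $\nu_2$ lies entirely under exactly one of them; the other child then contributes a nonempty segment to $v$ or to $x$, whence $vx\ne 1$. This is precisely where Chomsky normal form earns its keep: it excludes unit cycles $B\stackrel{*}{\Rightarrow}B$ and $1$-productions, which would otherwise permit $v=x=1$. For condition~(3), I would read off the derivations $S\stackrel{*}{\Rightarrow}uBy$, $B\stackrel{*}{\Rightarrow}vBx$ (using the portion of the tree strictly between $\nu_1$ and $\nu_2$, together with everything hanging off it) and $B\stackrel{*}{\Rightarrow}w$ (the subtree rooted at $\nu_2$); splicing the middle derivation $i$ times gives $S\stackrel{*}{\Rightarrow}uBy\stackrel{*}{\Rightarrow}uv^{i}Bx^{i}y\stackrel{*}{\Rightarrow}uv^{i}wx^{i}y$ for every $i\in\mathbb{N}$, so $uv^{i}wx^{i}y\in L$.

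The only genuinely delicate point is the simultaneous control of the numerical constraints in~(1) and~(2): the repeated variable must sit high enough on a long path to repeat at all, yet low enough that the subtree it heads still has bounded height. Selecting a longest path first, and then a coinciding pair among its bottom $k+1$ variable occurrences, resolves both at once; but one has to verify carefully the claim that a suffix of a longest path is a longest path in the corresponding subtree (otherwise rerouting through a deeper branch would contradict maximality of $\pi$), as this is what converts ``$\nu_1$ near the bottom of $\pi$'' into ``the subtree at $\nu_1$ is shallow''. I expect this bookkeeping, together with keeping the exact value of the constant $n$ honest, to be the main thing to get right; the rest is routine tree surgery and can be carried out essentially mechanically.
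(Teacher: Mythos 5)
Your proof is correct: the paper itself gives no argument for this lemma, citing it as Theorem 7.18 of Hopcroft--Motwani--Ullman, and your parse-tree argument (Chomsky normal form, $n=2^{|V|}$, longest path, pigeonhole on the bottom $|V|+1$ variable occurrences, and splicing the $B\stackrel{*}{\Rightarrow}vBx$ segment) is precisely the standard proof given in that reference. The delicate points you flag — that a suffix of a longest path is longest in its subtree, and that the second child of $\nu_1$ forces $vx\ne 1$ — are handled correctly, so nothing further is needed.
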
 

Informally, our new pumping lemma is trying to express the property of
context-free languages that whenever we find a pattern of the form
$v^iwx^i$ in a word of the language for a large enough number $i$,
then the powers $v^i$ and $x^i$ must come from (synchronized) pumping,
and thus we may continue to pump the corresponding (powers of the)
factors $v$ and $x$. A formal formulation of this property is at first
made in the special case where $v$, $w$, and $x$ are letters and $u$
and $y$ are the empty word. Then this special case is extended to the
general one. Although we contribute to the theory of context-free
languages, we do not give here an introduction to that theory; we only
apply some well-known results, such as the closure properties of the
class of all context-free languages. For missing details and notions
we refer to~\cite{Autebert&Berstel&Boasson:1997} and \cite[Chapters 5
and 7]{Hopcroft&Motwani&Ullman:2001}.

\begin{Lemma}
  \label{l:pumping-a-and-c-to-n-factorial}
  Let $L$ be a context-free language over the alphabet $A$ containing
  three different letters $a,b,c$. Then there exists an integer $n_0$
  such that, for every $n \ge n_0$, the following property holds:
  \begin{displaymath}
    a^{n!}bc^{n!}\in L
    \implies
    \exists k < n_0\, \forall p\in\mathbb N \
    (a^{n!+pk}bc^{n!+pk}\in L).
  \end{displaymath}
\end{Lemma}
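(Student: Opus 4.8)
The plan is to avoid working directly with the pumping lemma---which on its own is insufficient here, since a pumping whose iterated factors lie entirely inside the block of $a$'s destroys the equality of the two exponents---and instead to reduce the statement to the classical fact that semilinear subsets of $\mathbb{N}$ are ultimately periodic, via Parikh's theorem.

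First I would pass to the context-free language $L_1=L\cap a^*bc^*$ (the intersection of a context-free language with a regular one). Since two distinct words of $a^*bc^*$ have distinct Parikh vectors, the set
\begin{equation*}
  P=\{(i,j)\in\mathbb{N}^2 : a^ibc^j\in L\}
\end{equation*}
coincides, under the evident identification of a Parikh vector with the pair of exponents of $a$ and $c$, with the Parikh image of $L_1$; hence $P$ is a semilinear subset of $\mathbb{N}^2$ by Parikh's theorem. Next I would restrict to the diagonal: the set $D=\{m\in\mathbb{N}:a^mbc^m\in L\}$ is the projection onto a coordinate of $P\cap\{(m,m):m\in\mathbb{N}\}$, and since semilinear sets are closed under intersection and projection (they are exactly the Presburger-definable sets, by the Ginsburg--Spanier theorem) and the diagonal is linear, $D$ is a semilinear subset of $\mathbb{N}$. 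Such a set is ultimately periodic: there are an integer $N_0\ge 0$ and a period $\pi\ge 1$ such that, for every $m\ge N_0$, one has $m\in D$ if and only if $m+\pi\in D$.

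It then remains to set $n_0=\max\{N_0,\pi+1,1\}$. Fix $n\ge n_0$ and assume $a^{n!}bc^{n!}\in L$, that is, $n!\in D$. Since $n!\ge n\ge n_0\ge N_0$, ultimate periodicity gives $n!+\pi\in D$, and an immediate induction on $p$ yields $n!+p\pi\in D$ for every $p\in\mathbb{N}$; equivalently, $a^{n!+p\pi}bc^{n!+p\pi}\in L$ for all $p$. As $k=\pi$ satisfies $1\le k<n_0$, this is exactly the asserted conclusion.

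Each step is routine once Parikh's theorem is available; the only things to watch are the precise forms of Parikh's theorem and of the closure of the family of semilinear subsets of $\mathbb{N}^d$ under intersection and projection (for which I would cite the standard literature on context-free languages already used in the paper, e.g.~\cite{Autebert&Berstel&Boasson:1997}), and the bookkeeping by which the single constant $n_0$ plays the two roles of a threshold on $n$ and an upper bound on $k$, which is what forces $n_0>\pi$. The degenerate case in which $D$ is finite---so that $n!\in D$ fails for all large $n$ and the implication is vacuous---is subsumed in the ultimate-periodicity statement by taking $\pi=1$ and $N_0>\max D$. Finally, note that the factorial is inessential for this lemma: only the bound $n!\ge n$ is used, so any function tending to infinity would serve equally well; the factorial matters only for the way this special case is later combined with its general-position counterpart.
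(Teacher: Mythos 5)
Your proof is correct, and it diverges from the paper's at the decisive step. Both arguments begin identically: intersect with the regular language $a^*bc^*$, note that the Parikh map is injective on this set, and invoke Parikh's theorem to get a semilinear set of exponent pairs. From there the paper works hands-on with a fixed semilinear representation: it sets $n_0=(m+1)^2$ for $m$ the largest coordinate occurring in the representation, and runs a four-case analysis on the period vectors $(q_i,0,r_i)$ of the linear component containing $(n!,1,n!)$, extracting $k$ either from a balanced vector ($q=r$) or from a pair of oppositely unbalanced vectors ($k=qr'-q'r$), and ruling out the remaining cases by the size bound $n!\ge n\ge(m+1)^2>m+m^2$. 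You instead intersect with the diagonal and project, using the Ginsburg--Spanier closure of semilinear (Presburger-definable) sets under intersection and projection, to land on a semilinear subset $D\subseteq\mathbb{N}$, which is ultimately periodic; the conclusion with $k=\pi$ is then immediate, and your choice $n_0=\max\{N_0,\pi+1,1\}$ correctly makes $n_0$ serve as both threshold and bound on $k$, with $k=\pi\ge1$ as needed for the later use of $k\mid n!$. The trade-off: your route is shorter and more conceptual, cleanly separating threshold from period and making transparent that only $n!\ge n$ is used (which is equally true of the paper's proof of this lemma, the factorial being essential only in the subsequent proposition), but it leans on the nontrivial closure theorem for semilinear sets under intersection, whereas the paper's computation needs nothing beyond Parikh's theorem itself and yields an explicit $n_0$ directly from the representation. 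Both are sound; just make sure to cite the closure properties and the ultimate periodicity of semilinear subsets of $\mathbb{N}$ precisely if you write it up.
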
  

\begin{proof}
  We plan to apply Parikh's
  theorem~\cite[Theorem~2.6]{Autebert&Berstel&Boasson:1997} which
  characterizes the images of context-free languages under the
  homomorphism $\varphi : A^* \rightarrow \mathbb N^A$ that maps words
  to their Parikh's vector, that is, the sequence $(|u|_e)_{e\in A}$
  where $|u|_e\ge 0$ denotes the number of occurrences of the letter
  $e$ in the word $u$.
  Parikh's theorem states, for a context-free language $K$, that the
  image $\varphi(K)$ of the language $K$ in the considered morphism is
  a semi-linear subset in $\mathbb{N}^A$. Recall that a subset of
  $\mathbb N^A$ is \emph{linear}, if it can be expressed as $\{ u +
  \Sigma_{i=1}^{\ell} t_i v_i : t_i\in \mathbb N\}$ where
  $\ell\in\mathbb{N}$ and $u$ as well as $v_1,\dots , v_\ell$ are
  fixed members of $\mathbb N^A$. A subset of $\mathbb N^A$ is
  \emph{semi-linear} if it may be expressed as a finite union of
  linear subsets. For our purpose, we apply Parikh's theorem to the
  language $L\cap a^*bc^*$, which is context-free because so is every
  intersection of a context-free language with a regular one. Notice
  that the restriction of $\varphi$ to the domain $L\cap a^*bc^*$ is
  an injective mapping, and we may assume that $A$ consists exactly of
  the three letters $a,b,c$. For the usual order of the letters, we
  may identify $\mathbb{N}^A$ with $\mathbb{N}^3$. Thus, if we denote
  $K=\varphi(L\cap a^*bc^*) \subseteq \mathbb N^3$, the
  statement of our lemma may be reformulated in this way: there exists
  an integer $n_0$ such that, for every $n \ge n_0$, the following
  property holds:
  \begin{displaymath}
    (n!,1,n!)\in K
    \implies
    \exists k < n_0\, \forall p\in \mathbb N\ 
    \bigl((n!+pk,1,n!+pk)\in K\bigr).
  \end{displaymath}
  The rest of the proof is devoted to showing that statement.
  
  Let $m$ be the maximum of the coordinates of the vectors in a fixed
  expression of the semi-linear subset $K$ as a finite
  union of linear subsets of $\mathbb{N}^3$. Put $n_0=(m+1)^2$.
  Let $n\ge n_0$ be arbitrary and assume that $(n!,1,n!)$ belongs to
  $S=\{ u + \Sigma_{i=1}^{\ell} t_i v_i : t_i\in \mathbb N\}\subseteq
  K$, where $S$ is an appropriate linear subset of $\mathbb N^3$ from
  the considered expression of $K$. Let $\overline{t_i}$
  ($i=1,\ldots,\ell$) be an appropriate choice of coefficients
  witnessing $(n!,1,n!)\in S$, that is, coefficients such that
  \begin{equation}
    \label{eq:n-factorial-in-S}
    u + \Sigma_{i=1}^{\ell} \overline{t}_i v_i=(n!,1,n!).
  \end{equation}
  We know that $\ell \ge 1$ because $u$ has all coordinates smaller
  than $n$, so there is at least one non-zero $\overline{t}_i$. Since
  all members of $K$ have the second coordinate equal to $1$, the
  second coordinate of $u$ is $1$ and the second coordinate of all
  $v_i$ is $0$. In the rest of the proof, we distinguish four cases
  according to properties of the vectors $v_i$. The assumptions of
  these cases are formulated in such a way that, evidently, at least
  one of the cases has to hold. Nevertheless, we do not state that the
  cases are mutually exclusive; in fact, we anticipate that only the
  first two cases may hold, even though, they may both hold. Let $p\in
  \mathbb N$ be arbitrary.

  \smallskip
  
  {\bf Case 1:} Assume that there is an index $i$ such that 
  $v_i=(r,0,r)$ for some $0<r\le m$. 
  Then we put $k=r$ for which we have $k\le m<n_0$.  
  Furthermore, we put $t_j=\overline{t}_j$ for all 
  $j\in\{1,\dots, i-1, i+1,\dots,\ell\}$ and $t_i=\overline{t}_i+p$. 
  In this way, we obtain the vector
  $s=(n!,1,n!)+p(k,0,k)$ from the set $S$, which shows that
  $s=(n!+pk,1,n!+pk)\in K$.
  
  \smallskip
  
  {\bf Case 2:} Assume that there is an index $i$ such that
  $v_i=(q,0,r)$, where $q>r$, and there is an index $i'$ such that
  $v_{i'}=(q',0,r')$, where $q'<r'$. Now we put $k=qr'-q'r$ for which
  we have $0<k\le qr'\le m^2<n_0$. Furthermore, we put
  $t_i=\overline{t}_i+p (r'-q')$,
  $t_{i'}=\overline{t}_{i'}+p (q-r)$, and $t_j=\overline{t}_j$
  for all the other indices $j$. In this way, we obtain the following
  vector $s$ in $S$:
  \begin{displaymath}
    s=(n!,1,n!)+p (r'-q') (q,0,r) + p (q-r) (q',0,r').
  \end{displaymath}
  In the first coordinate of $s$, we add to $n!$ the $p$-multiple of
  \begin{displaymath}
    (r'-q') q +(q-r) q' = qr'-rq'=k.
  \end{displaymath}
  Similarly, in the third coordinate we add to $n!$ the $p$-multiple
  of
  \begin{displaymath}
    (r'-q') r +(q-r) r' = qr'-rq'=k .
  \end{displaymath}
  Thus $s=(n!,1,n!)+p(k,0,k)=(n!+pk,1,n!+pk)$ which belongs to
  $K$ as we wanted to show.
  
  \smallskip
  
  {\bf Case 3:}  Assume that for all vectors $v_i=(q_i,0,r_i)$, we have $q_i>r_i$.
  We compute the difference between the first and the third coordinate in the 
  vector $(n!,1,n!)$ using expression~(\ref{eq:n-factorial-in-S}). 
  For that purpose, we denote $u=(q,1,r)$. We may deduce that $q<r$, and the 
  difference $r-q$ is at most $m$.
  Since the difference between the first and third coordinate 
  in~(\ref{eq:n-factorial-in-S}) is $0$, we get the equality 
  $r-q=\Sigma_{i=1}^{\ell} \overline{t}_i (q_i-r_i)$. 
  For every index $i$, we have 
  $q_i-r_i\ge 1$, thus we get $m\ge r-q \ge \Sigma_{i=1}^{\ell} \overline{t}_i$.
  Using this inequality, we may also deduce, 
  for the first coordinate in~(\ref{eq:n-factorial-in-S}), 
  the following sequence of inequalities:
  \begin{displaymath}
    n!
    =   q + \Sigma_{i=1}^{\ell} \overline{t}_i q_i
    \le m + \Sigma_{i=1}^{\ell} \overline{t}_i m
    \le m+m^2
    <(m+1)^2
    =n_0
    \le n .
  \end{displaymath}
  This is a contradiction because the inequalities $n! <(m+1)^2\le n$
  are impossible. Thus, case 3 does not occur.
  
  \smallskip
  
  {\bf Case 4:} If for all vectors $v_i=(q_i,0,r_i)$, we have $q_i<r_i$, 
  then we may proceed in the same way as in case 3. 

  \smallskip

  Altogether, since cases 3 and 4 do not hold, at least one of cases 1
  and 2 occurs. In both of them, we found an integer $k<n_0$ with the
  required property.
\end{proof}

We are ready to formulate and prove the general case of our cumulative 
pumping property. 

\begin{Prop}
  \label{p:pumping-a-to-n-factorial-b-c}
  Let $A$ be an alphabet, $L$ be a context-free language over~$A$, and
  $u,v,w,x,y\in A^*$ be words. Then there exists an integer $n_0$ such
  that, for every $n \ge n_0$, the following property holds:
  \begin{displaymath}
    uv^{n!}wx^{n!}y\in L
    \implies
    \exists k < n_0\, \forall p\in\mathbb N\ 
    (uv^{n!+pk}wx^{n!+pk}y\in L) .
  \end{displaymath}
  In particular, we have
  \begin{equation}
    \label{eq:pumping-a-to-n-factorial-b-c}
    \exists n_0\,\forall n\ge n_0\
    (uv^{n!}wx^{n!}y\in L \implies uv^{2(n!)}wx^{2(n!)}y\in L).
  \end{equation}
\end{Prop}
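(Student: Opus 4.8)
The plan is to reduce the general statement to the special case already proved in Lemma~\ref{l:pumping-a-and-c-to-n-factorial}, in which the pumped factors are single letters and there are no outer factors. The reduction has two ingredients: peeling off the prefix $u$ and the suffix $y$ by taking a two-sided quotient of~$L$, and contracting the words $v$, $w$, $x$ to single letters by pulling back along a homomorphism.

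Concretely, first I would set $M=u^{-1}Ly^{-1}=\{z\in A^*:uzy\in L\}$; since the class of context-free languages is closed under left and right quotient by a word (see \cite[Chapter~7]{Hopcroft&Motwani&Ullman:2001}), $M$ is context-free over~$A$. Next, let $\{a,b,c\}$ be a three-letter alphabet and let $\psi:\{a,b,c\}^*\to A^*$ be the homomorphism determined by $\psi(a)=v$, $\psi(b)=w$, $\psi(c)=x$, and put
\begin{displaymath}
  N=\psi^{-1}(M)\cap a^*bc^*,
\end{displaymath}
which is context-free over $\{a,b,c\}$ because context-free languages are closed under inverse homomorphism and under intersection with a regular language \cite{Autebert&Berstel&Boasson:1997}; note also $N\subseteq a^*bc^*$. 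For all $m,q\in\mathbb N$ one has the chain of equivalences
\begin{displaymath}
  a^mbc^q\in N
  \iff v^mwx^q\in M
  \iff uv^mwx^qy\in L .
\end{displaymath}
Applying Lemma~\ref{l:pumping-a-and-c-to-n-factorial} to~$N$ (whose alphabet contains the three distinct letters $a,b,c$) yields an integer $n_0$ such that, for every $n\ge n_0$, if $a^{n!}bc^{n!}\in N$ then there is $k$ with $1\le k<n_0$ and $a^{n!+pk}bc^{n!+pk}\in N$ for all $p\in\mathbb N$; inspection of the proof of that lemma (cases~1 and~2) shows the value of $k$ it produces is strictly positive. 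Translating both the hypothesis and the conclusion through the displayed equivalences gives exactly the asserted property of~$L$. The only points that need a little care are the degenerate cases where one or more of $v,w,x$ is empty, but in each of them $\psi$ is still a well-defined homomorphism and the equivalences above remain valid, so the argument goes through verbatim.

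Finally, for the ``in particular'' clause~\eqref{eq:pumping-a-to-n-factorial-b-c}, fix $n\ge n_0$ with $uv^{n!}wx^{n!}y\in L$ and let $k$ be as above, so that $1\le k<n_0\le n$. Since $1\le k\le n$, the integer $k$ divides $n!$, and taking $p=n!/k\in\mathbb N$ turns the exponent into $n!+pk=2(n!)$; hence $uv^{2(n!)}wx^{2(n!)}y\in L$, as required. Most of the real work having been carried out in Lemma~\ref{l:pumping-a-and-c-to-n-factorial}, I expect the only delicate points to be verifying the quoted closure properties of context-free languages (quotient by a word, inverse homomorphism) and confirming that the constant $k$ is genuinely positive, which is what legitimizes the divisibility step in the last part.
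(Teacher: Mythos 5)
Your argument is correct and follows essentially the same route as the paper's proof: pass to the quotient $u^{-1}Ly^{-1}$, pull back along the homomorphism $a\mapsto v$, $b\mapsto w$, $c\mapsto x$, apply Lemma~\ref{l:pumping-a-and-c-to-n-factorial}, and use $k\mid n!$ with $p=n!/k$ for the ``in particular'' clause. Your extra intersection with $a^*bc^*$ is harmless (the lemma performs it internally anyway), and your explicit check that the lemma's $k$ is strictly positive is a point the paper only uses implicitly.
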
 

\begin{proof}
  We consider the language $u^{-1}Ly^{-1}$. It is context-free, as so
  is every left (and right) quotient of a context-free language by a
  regular language. Then we consider the three-letter alphabet
  $\{a,b,c\}$ and the homomorphism $\varphi : \{a,b,c\}^* \rightarrow
  A^*$ given by the images of letters: $\varphi(a)=v$, $\varphi(b)=w$,
  and $\varphi(c)=x$. We denote $K=\varphi^{-1} (u^{-1}Ly^{-1})$, the
  preimage in that morphism, which is context-free because the class
  of all context-free languages is closed under taking preimages under
  homomorphisms.

  Now, we apply Lemma~\ref{l:pumping-a-and-c-to-n-factorial} to the
  language $K$. Thus we have $n_0$ with the described property. This
  $n_0$ serves also for the statement of the proposition. Let $n$ be
  an arbitrary integer such that $n\ge n_0$ and $uv^{n!}wx^{n!}y\in
  L$. Then $a^{n!}bc^{n!} \in K$. By the property in
  Lemma~\ref{l:pumping-a-and-c-to-n-factorial}, there is $k<n_0\le n$
  such that, for every $p\in \mathbb N$, we have $a^{n!+pk}bc^{n!+pk}
  \in K$. It follows that $v^{n!+pk}wx^{n!+pk}\in u^{-1}Ly^{-1}$ and,
  consequently, $uv^{n!+pk}wx^{n!+pk}y\in L$. This gives the first
  statement in the proposition.
  
  In particular, since $k<n$, the number $k$ divides $n!$, and we may
  use $uv^{n!+pk}wx^{n!+pk}y\in L$ with $p=\frac{n!}{k}$. In this way,
  we obtain the required property $uv^{2(n!)}wx^{2(n!)}y\in L$.
\end{proof}

As an application of Proposition~\ref{p:pumping-a-to-n-factorial-b-c},
we can show that the language $L=\{a^{n!}bc^{n!}: n\in\mathbb{N}\}$ is
not an intersection of finitely many context-free languages. Indeed,
otherwise, if we take $A=\{a,b,c\}$, $x=a$, $y=b$, and $z=c$, from the
proposition it would follow that
Condition~(\ref{eq:pumping-a-to-n-factorial-b-c}) holds. However,
$a^{n!}bc^{n!}\in L$ holds for every $n$ while
$a^{2(n!)}bc^{2(n!)}\notin L$ for every $n\ge2$. Hence $L$ cannot
belong to~$\mathbf{ICF}(A)$. It amounts to an exercise to show that
$L$ is context-sensitive by showing that it is recognized by a linear
bounded automaton.

As the preceding example may seem somewhat cooked up, we proceed to
present another example which may be considered much more natural,
namely that of the language $\{a^mb^{mn}: m,n\ge1\}$, which can be
shown to be recognized by a two-way pushdown automaton
\cite[Exercise~7.13]{Shallit:2009}, whence recognized in linear time
\cite{Cook:1972} and context-sensitive
\cite[Theorem~42]{Gray&Harrison&Ibarra:1967}. We apply again the
general separation criterion given by
Theorem~\ref{t:inequality-definability-sequences} by considering the
words $s_n=a^{n!}b^{(n!)^2}$ and $t_n=a^{n!}b^{(n!)^2+(n-1)!}$. Then, it
suffices to establish the following result.

\begin{Lemma}
  \label{l:pump-a-fac-b-fac-sq}
  Let $A=\{a,b\}$. Then the following condition holds for every
  language $L\in\mathbf{CF}(A)$:
  \begin{equation}
    \label{eq:pump-a-fac-b-fac-sq}
    \exists n_0\,\forall n\ge n_0\
    (a^{n!}b^{(n!)^2} \in L \implies a^{n!}b^{(n!)^2+(n-1)!}\in L).
  \end{equation}
\end{Lemma}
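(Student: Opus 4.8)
The plan is to reduce the statement to a property of the Parikh image of $L$, mirroring the proof of Lemma~\ref{l:pumping-a-and-c-to-n-factorial}. First I would intersect $L$ with the regular language $a^{*}b^{*}$; the language $L'=L\cap a^{*}b^{*}$ is context-free, and since the Parikh map $\varphi\colon\{a,b\}^{*}\to\mathbb N^{2}$ restricts to a bijection from $a^{*}b^{*}$ onto $\mathbb N^{2}$, the set $P=\{(i,j)\in\mathbb N^{2}: a^{i}b^{j}\in L\}$ equals $\varphi(L')$ and is therefore semilinear by Parikh's theorem~\cite[Theorem~2.6]{Autebert&Berstel&Boasson:1997}. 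Writing $P=\bigcup_{s=1}^{N}S_{s}$ as a finite union of linear sets $S_{s}=\{u_{s}+\sum_{i}t_{i}v_{s,i}:t_{i}\in\mathbb N\}$, let $m$ be the largest coordinate occurring among the finitely many vectors $u_{s}$ and $v_{s,i}$, and take $n_{0}$ large enough that $n_{0}!>2m$ and $n_{0}\ge m+1$. In this language, Condition~\eqref{eq:pump-a-fac-b-fac-sq} becomes: for every $n\ge n_{0}$, if $(n!,(n!)^{2})\in P$ then $(n!,(n!)^{2}+(n-1)!)\in P$.

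So I would fix $n\ge n_{0}$ with $(n!,(n!)^{2})\in P$ and pick a linear piece $S=\{u+\sum_{i}t_{i}v_{i}:t_{i}\in\mathbb N\}$ of the decomposition containing it; write $u=(p,q)$ and $v_{i}=(a_{i},b_{i})$, so $p,q\le m$ and $b_{i}\le m$. The crucial observation, which is the step I expect to carry the whole argument, is that $S$ must contain a period vector of the form $(0,r)$ with $1\le r\le m$. Indeed, suppose not; then every nonzero $v_{i}$ has $a_{i}\ge1$. Any point $u+\sum_{i}t_{i}v_{i}$ of $S$ whose first coordinate equals $n!$ then satisfies $\sum_{i}t_{i}\le\sum_{i}t_{i}a_{i}=n!-p\le n!$, where the sums range over indices with $v_{i}\ne0$; hence its second coordinate is at most $q+m\sum_{i}t_{i}\le m+m\cdot n!<(n!)^{2}$, the last inequality holding because $n!>2m$. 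This contradicts $(n!,(n!)^{2})\in S$.

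Finally, given such a vector $(0,r)$ with $1\le r\le m\le n-1$, the integer $r$ divides $(n-1)!$; setting $p_{0}=(n-1)!/r$ and adding $p_{0}$ further copies of $(0,r)$ to a representation of $(n!,(n!)^{2})$ in $S$ yields $(n!,(n!)^{2})+p_{0}(0,r)=(n!,(n!)^{2}+(n-1)!)\in S\subseteq P$, that is, $a^{n!}b^{(n!)^{2}+(n-1)!}\in L$, as required. The main obstacle is really just identifying the right case split in the middle paragraph: that the only way a linear piece can reach a second coordinate as large as the square of its first coordinate while its first coordinate is pinned to $n!$ is to possess a ``pure-$b$'' period $(0,r)$, which is then exactly the period we pump; the rest is routine bookkeeping with Parikh vectors and the choice of $n_{0}$.
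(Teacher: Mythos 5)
Your proposal is correct and follows essentially the same route as the paper: intersect $L$ with $a^*b^*$, apply Parikh's theorem, bound the coordinates of the semilinear description by $m$, show by the same counting contradiction that the linear piece containing $(n!,(n!)^2)$ must have a period of the form $(0,r)$ with $1\le r\le m\le n-1$, and pump that period $(n-1)!/r$ times. The only cosmetic difference is your choice of $n_0$ (requiring $n_0!>2m$ and $n_0\ge m+1$ instead of the paper's $n_0=m+1$), which does not change the argument.
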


\begin{proof}
  The idea of the proof is the same as for
  Lemma~\ref{l:pumping-a-and-c-to-n-factorial}. So, we start by
  considering the homomorphism $\varphi:A^*\to \mathbb{N}^2$ defined
  by $\varphi(w)=(|w|_a,|w|_b)$. By Parikh's theorem, the set
  $\varphi(L\cap a^*b^*)$ is semi-linear. Let $m$ be the maximum
  coordinate of all the vectors that appear in a concrete semi-linear
  expression for $\varphi(L\cap a^*b^*)$ and let $n_0=m+1$. We claim
  that Condition~(\ref{eq:pump-a-fac-b-fac-sq}) holds for this choice
  of $n_0$.

  Let $n\ge n_0$ and assume that $a^{n!}b^{(n!)^2} \in L$. Then there
  is a linear component
  \begin{displaymath}
    S=\{(p,q)+\sum_{i=1}^\ell t_i(p_i,q_i):t_i\in\mathbb{N}\}
  \end{displaymath}
  among those in the chosen semi-linear expression for $\varphi(L\cap
  a^*b^*)$, where we may as well assume that $p_i=q_i=0$ never holds,
  such that $(n!,(n!)^2)\in S$, say the coefficients
  $t_1,\ldots,t_\ell$ witness that fact. If $p_i\ne0$ for every $i$
  then, from 
  \begin{displaymath}
    \sum_{i=1}^\ell t_ip_i
    \le p+\sum_{i=1}^\ell t_ip_i
    =n!
  \end{displaymath}
  we conclude that
  \begin{align*}
    (n!)^2
    &=q+\sum_{i=1}^\ell t_iq_i
    \le m+\sum_{i=1}^\ell t_im \\
    &\le m+n!\,m
    <n!\,(m+1)
    =n!\,n_0
    \le(n!)^2.
  \end{align*}
  Hence, we must have $p_i=0$ for some~$i$, so that $q_i\ne0$ and,
  letting $t_i'=t_i+ \frac{(n-1)!}{q_i}$ and $t'_j=t_j$ for $j\ne i$
  we get $(n!,(n!)^2+(n-1)!)=(p,q)+\sum_{k=1}^\ell t'_k(p_k,q_k)$.
  This shows that $a^{n!}b^{(n!)^2+(n-1)!}\in L$, as claimed.
\end{proof}

Note that, more generally, the above argument shows that no language
contained in $\{a^mb^{mn}:m,n\ge1\}$ and containing infinitely many
words of the form $a^{n!}b^{e_n}$ with $e_n\ge n!\,n$ belongs
to~$\mathbf{ICF}(A)$ for any finite alphabet containing the letters
$a$ and $b$. This is the case of the language $\{a^nb^{n^2}:n\ge1\}$
considered in \cite[Exercise~7.14]{Shallit:2009}.

A similar argument shows that there is no set contained in the language
$\{a^mb^nc^{mn}:m,n\ge1\}$ (considered in~\cite{enwiki:1226028814})
and containing infinitely many words of the form
$a^{n!}b^{n!}c^{(n!)^2}$ that belongs to $\mathbf{ICF}(A)$.

\bibliographystyle{amsxport}
\bibliography{sgpabb,ref-sgps}

\end{document}